\theoremstyle{plain}
\newtheorem{theorem}{Theorem}[section]
\newtheorem{proposition}[theorem]{Proposition}
\newtheorem{lemma}[theorem]{Lemma}
\theoremstyle{definition}
\newtheorem{definition}[theorem]{Definition}
\newtheorem{assumption}[theorem]{Assumption}
\theoremstyle{remark}
\newtheorem{remark}[theorem]{Remark} 
\newtheorem{example}[theorem]{Example}
\numberwithin{equation}{section}
\numberwithin{figure}{section}
\numberwithin{table}{section}
\newcommand{\R}{\mathbb{R}}
\newcommand{\N}{\mathbb{N}}
\newcommand{\C}{\mathbb{C}}                           
\newcommand{\Z}{\mathbb{Z}}
\newcommand{\s}[1]{\CMcal{#1}}
\newcommand{\bb}[1]{\mathscr{#1}}
\newcommand{\rr}[1]{\mathfrak{#1}}
\newcommand{\n}[1]{\mathbb{#1}}
\newcommand{\expo}[1]{\,\mathrm{e}^{#1}\,}                 
\newcommand{\dd}{\,\mathrm{d}}
\newcommand{ \ii}{\,\mathrm{i}\,}
\newcommand{\virg}[1]{\lq\lq#1\rq\rq}                \newcommand{\ie}{\textsl{i.\,e.\,}}
\newcommand{\eg}{\textsl{e.\,g.\,}}
\newcommand{\cf}{\textsl{cf}.\,}
\begin{document}

\title[The cohomology invariant for class DIII topological insulators]{The cohomology invariant for class DIII topological insulators}

\author[G. De~Nittis]{Giuseppe De Nittis}

\address[G. De~Nittis]{Facultad de Matem\'aticas \& Instituto de F\'{\i}sica,
  Pontificia Universidad Cat\'olica de Chile,
  Santiago, Chile.}
\email{gidenittis@mat.uc.cl}

\author[K. Gomi]{Kyonori Gomi}

\address[K. Gomi]{Department of Mathematics, Tokyo Institute of Technology,
2-12-1 Ookayama, Meguro-ku, Tokyo, 152-8551, Japan.}
\email{kgomi@math.titech.ac.jp}

\vspace{2mm}

\date{\today}

\begin{abstract}
This work concerns with the description of  the topological phases of band
insulators of class DIII by using the equivariant cohomology. The main result is the definition of a cohomology class  for general systems of class DIII which  generalizes the   well-known $\Z_2$-invariant given by the Teo-Kane formula in the one-dimension case. In the two-dimensional case this cohomology invariant  allows a complete description of the strong and weak phases. The relation with the KR-theory, the Noether-Fredholm index and the classification of \virg{Real} gerbes are also discussed.
\medskip

\noindent
{\bf MSC 2010}:
Primary: 		14D21;
Secondary: 	57R22, 	55N25, 19L64.\\
\noindent
{\bf Keywords}:
{\it Class DIII topological insulators, \virg{Quaternionic} vector bundles, equivariant cohomology,
KR-theory.}

\end{abstract}

\maketitle

\tableofcontents

\section{Introduction}\label{sect:intro}

This work concerns the classification of the protected phases of topological insulators of class DIII from the point of view of the equivariant cohomology theory.  In this sense this work complements the research carried out by the same authors for classes AI, AII and AIII \cite{denittis-gomi-14,denittis-gomi-14-gen,denittis-gomi-18-I,denittis-gomi-18-II,denittis-gomi-18-III}. The literature on \emph{topological insulators} is too vast to be reported here without crucial omissions. For this reason, and for the benefit of the reader, we will refer directly to the recent review \cite{chiu-teo-schnyder-ryu-16} which provides a global overview of the subject accompanied by a rich list of references. The works \cite{ryu-schnyder-furusaki-ludwig-10,teo-kane-10,budich-ardonne-13}, among others,
deserve to be mentioned for the specific focus on 
 systems of class DIII.

\medskip

In order to present the logic and the main results of this work we need first to introduce some basic definition.
An \emph{involutive space} is a pair $(X,\tau)$ made of
 a compact Hausdorff space 
 $X$ endowed with a
continuous  \emph{involution}
$\tau:X\to X$, $\tau\circ \tau={\rm Id}_X$. 
The \emph{fixed point set} of $(X,\tau)$ is defined as 
$X^\tau:=\{x\in X\;|\; \tau(x)=x\}$.
A special example is provided by the pair $(\n{S}^1,\iota)$
 made by the unit circle  $\n{S}^1:=\R/2\pi\Z$,  parametrized by $k\in[0,2\pi)$, and endowed with the involution $\iota$ defined by $\iota(k):=-k$. In this specific case the fixed-point set is  $(\n{S}^1)^\iota=\{0,\pi\}$.
 
\medskip 
 
Let
$\s{H}$ be a finite dimensional Hilbert space, $\rr{Herm}(\s{H})$
the set of self-adjoint (or \emph{Hermitian}) operators on $\s{H}$
and $\rr{Herm}(\s{H})^\times$ the subset of invertible elements of $\rr{Herm}(\s{H})$. Let $T:\s{H}\to\s{H}$ and 
$C:\s{H}\to\s{H}$ be two anti-unitary maps that meet the conditions
\begin{equation}\label{eq:int_010}
T^2\;=\;{-}{\bf 1}\;,\qquad C^2\;=\;{+}{\bf 1}\;,\qquad TC=-CT\;
\end{equation}
where ${\bf 1}$ is the identity operator on $\s{H}$.  In the theory of {topological insulators} it is common  to  refer to $T$ as \emph{time-reversal symmetry} (TSR) and to $C$ as  \emph{particle-hole symmetry} (PHS). The conditions on the square say that $T$ is an \emph{odd} symmetry while $C$ is an \emph{even} symmetry. The anti-commutation condition 
corresponds to fixing an arbitrary relative phase between $T$ and $C$. With this convention, the linear operator $\chi:=TC=-CT$, called \emph{chiral symmetry}, satisfies
\begin{equation}\label{eq:int_02}
\chi^2\;=\;{\bf 1}\;,\qquad T\chi=-\chi T\;,\qquad C\chi=-\chi C\;.
\end{equation}
The next definition is borrowed from the standard jargon for topological insulators  \cite{altland-zirnbauer-97,schnyder-ryu-furusaki-ludwig-08,kitaev-09,ryu-schnyder-furusaki-ludwig-10}.
\begin{definition}[Band insulator of class DIII]\label{def:int_01}
A \emph{band insulator of class DIII} over the involutive space $(X,\tau)$ is 
a continuous map 
$$
H\;:\;{X}\;\longrightarrow\; \rr{Herm}(\s{H})^\times
$$
such that
\begin{equation}\label{eq:int_03}
\left\{
\begin{aligned}
C\;H(x)\;&=\;-H(\tau(x))\;C\\
T\;H(x)\;&=\;\phantom{-}H(\tau(x))\;T
\end{aligned}
\right.\;,\qquad \forall\; x\in X\;.
\end{equation}
Two  band insulators of class DIII $H$ and $H'$
 over the same involutive space $(X,\tau)$ and subjected to the same system of symmetries $T,C$ are \emph{unitarily equivalent} if and only if there exists a continuous map $V:X\to\n{U}(\s{H})$ such that
\begin{equation}
\left\{
\begin{aligned}
H'(x)\;&=\;V(x)\;H(x)\;V(x)^*\\
C\;V(x)\;&=\;V\big(\tau(x)\big)\;C\\
T\;V(x)\;&=\;V\big(\tau(x)\big)\;T\\
\end{aligned}
\right.
\;,\qquad \forall\; x\in X\;.
\end{equation}
The unitary equivalence is called \emph{strong} if and only if 
$V(\tau(x))=V(x)$ for all $x\in X$.
When the involutive space is explicitly given by
$({\n{S}}^1,\iota)$, then one refers to a
 \emph{one-dimensional} (1D) band insulators of class DIII.\end{definition}

\medskip

Definition \ref{def:int_01} along with the relations \eqref{eq:int_02} imply 
\begin{equation}\label{eq:int_04}
\chi\;H(x)\;=\;-H(x)\;\chi\;,\qquad \forall\; x\in X\;.
\end{equation}
An immediate consequence of this relation is that the dimension of the space $\s{H}$ must be even, \ie $\rm{dim}(\s{H})=2m$ for some $m\in\N$.

\medskip

Let us focus for the moment on the one-dimensional case.
Let $H$ and $H'$ be two 1-dimensional band insulators of class DIII and consider the \emph{spectrally flattened} band operators
$Q_H:=H |H|^{-1}$ and $Q_{H'}:=H' |H'|^{-1}$. It is known  that the triple $({\n{S}}^1\times\s{H}, Q_{H},Q_{H'})$
defines an element of the KR-group $KR^{-3}({\n{S}}^1,\iota)\simeq\Z_2$ \cite{atiyah-66,kitaev-09,freed-moore-13} (see also Section \ref{sec:KR-theo_II}). Thus, there exists a $\Z_2$-invariant which classifies (the difference of)  1-dimensional band insulators of
class DIII. The aim of this work is to interpret this invariant from a cohomology point of view,
and possibly to generalize it to more generic situations. 

\medskip

The key (not new) observation is that  as a consequence of the commutation relations between $H$ and
the symmetries $T$ and $C$, the topological information carried by $H$ is completely encoded into a continuous map $q_H:{\n{S}}^1\to\n{U}(m)$ such that
\begin{equation}\label{eq:int_05}
q_H\big(\iota(k)\big)\;=\;-q_H(k)^{\mathtt{t}}\;,\qquad \forall\;k\in {\n{S}}^1\;,
\end{equation}
where the superscript $\mathtt{t}$ denotes the \emph{transpose} of the matrix $q_H(k)$. There are at least two immediate ways in which the matrix $q_H$ can produce a topological invariant.
\begin{itemize}
\item[(I)]
The map $q_H$ can be interpreted as the \emph{sewing matrix}
of a \virg{Quaternionic} vector bundle \cite{kahn-59,dupont-69,denittis-gomi-14-gen,denittis-gomi-18-I,denittis-gomi-18-II}. More precisely, the map $q_H$  endows the product bundle ${\n{S}}^1\times\C^m$ with the structure of a \virg{Quaternionic} vector bundle $\bb{E}_{q_H}$ explicitly given by 
\begin{equation}\label{eq:int_06}
(k,{\rm v})\;\longmapsto\;\left(\iota(k), q_H(k)\cdot \overline{\rm v}\right)\;,\qquad \forall\;(k,{\rm v})\in {\n{S}}^1\times\C^m\;.
\end{equation}
Unfortunately, it is known that
\virg{Quaternionic} vector bundles on $({\n{S}}^1,\iota)$ are automatically trivial  \cite[Theorem 1.2]{denittis-gomi-14-gen}. Therefore, this is not this vector bundle the good candidate to detect  the topology of  1-dimensional insulators of class DIII. 

\vspace{1mm}

\item[(II)] In the presence of a fixed point $k = \iota(k)$, the rank $m$ of  $\n{U}(m)$ is forced to be even, \ie $m=2n$. In turn, the determinant of $q_H$ provides a map $\det[ q_H] : {\n{S}}^1 \to \n{U}(1)$ which is \emph{invariant} in the sense that 
$$
\det [q_H(\iota(k))]\; =\; \det[ q_H(k)]\;,\qquad \forall\;k\in {\n{S}}^1
\;.
$$ 
Since the homotopy classes of invariant maps on $({\n{S}}^1,\iota)$ are classified by the group $H^1_{\Z_2}(\n{S}^1,\Z)$ (\cf Appendix \ref{sec:eq:cohom_homot}), one  gets the topological invariant  
$$
\big[\det[ q_H]\big] \;\in\; H^1_{\Z_2}(\n{S}^1,\Z)
$$ 
associated to $H$. However, also in this case the invariant turns out to be trivial since $H^1_{\Z_2}(\n{S}^1,\Z)=0$ \cite[Lemma 5.6]{denittis-gomi-14}. The implication of this fact   will be discussed in Remark \ref{rk:obos}.
\end{itemize}

\medskip

Thus, to detect the class of $KR^{-3}({\n{S}}^1,\iota)$ 
associated to a {one-dimensional} band insulator of class DIII we need to extract from $q_H$
some other type of topological invariant finer than (I) and (II). It turns out that a closer inspection to  the triviality of the \virg{Quaternionic} vector bundle $\bb{E}_{q_H}$ 
allows to build a suitable non-trivial $\Z/2$-invariant. 
The principal result  of this work is  the construction of such a topological invariant for band insulators of class DIII over a  large class of involutive spaces $(X,\tau)$.

\medskip

More concretely,  given a sewing matrix $q : X \to \n{U}(2n)$ on the  involutive space $(X,\tau)$ subject to certain conditions  (\cf Assumption \ref{assumption}), one can associate to $q$ a homotopy invariant 
\begin{equation}\label{eq:int_01}
\nu_{q} \;\in\; H^1(X,\Z)/H^1_{\Z_2}\big(X| X^\tau,\Z(1)\big)\;.
\end{equation}
In  formula \eqref{eq:int_01}, $H^1(X,\Z)$ is the first  cohomology group of $X$ with integer coefficients while  $H^1_{\Z_2}(X| X^\tau,\Z(1))$ is the first equivariant  cohomology group $X$ with  coefficients in the local system $\Z(1)$ relative to the fixed point set $X^\tau$ \cite[Section 3.1]{denittis-gomi-14-gen}. 
In the case of the involutive circle $(\n{S}^1,\iota)$,
one has that 
$$
H^1(\n{S}^1,\Z)\;\simeq\;\Z\;\simeq\;H^1_{\Z_2}\big(\n{S}^1| (\n{S}^1)^\iota,\Z(1)\big)\;,$$
but the quotient group in \eqref{eq:int_01} turns out to be isomorphic to $\Z_2$ (\cf Lemma \ref{lemma:1D-range}), showing that $\nu_{q}$ is 
 a $\Z_2$-invariant as expected.
 
 \medskip 

It is worth to point out, however, that the invariant $\nu_q$ of the sewing matrix $q$ cannot serve as an \emph{absolute} invariant for band insulators   of class DIII. This is a consequence of the fact that the way to identify the symmetries with a fixed \emph{standard form} is not unique, and can be altered by an automorphism (\cf Proposition \ref{prop:st_rep}). Accordingly, 
the sewing matrix associated with $H$ is not unique, and if 
$q_H$ and  $q'_H$ are two sewing matrices associated with the same $H$
one can  have $\nu_{q_H} \neq \nu_{q'_H}$ in general. In this sense, $\nu_{q_H}$ does not produce an invariant of the Hamiltonian $H$. However, if one fixes a way to identify the symmetries with the standard form, then the sewing matrix can be used to detect the difference of two band insulators $H$ and $H'$ of class DIII . 
In this sense the sewing matrix can be used to construct a \emph{relative} invariant which provides a 
a well-defined non-trivial homomorphism 
\begin{equation}\label{int_nu}
 KR^{-3}(X,\tau)\;\supseteq\;{\rm Ker}(\kappa)\; \stackrel{\nu}{\longrightarrow}\; H^1(X,\Z)/H^1_{\Z_2}\big(X, X^\tau,\Z(1)\big)\;
\end{equation}
 (\cf Theorem \ref{thm:homomorphism_from_KR} for more details). In the one-dimensional case $(\n{S}^1,\iota)$
this construction provides a bijection $\nu : KR^{-3}(\n{S}^1,\iota) \simeq \Z_2$ (Proposition \ref{prop:iso-one}) and the $\Z_2$-value of the class $\nu_q$ associated with the sewing matrix $q$ can be computed with the  well known \emph{Teo-Kane formula} (Theorem \ref{thm:invariant_by_Pfaffian}) or as the \emph{$\Z_2$-index} of an associated Toeplitz operator (Theorem \ref{the:index}).

\subsection*{Structure of the paper.}
 In {\bf Section \ref{sec:Z2_inv}}, we start by fixing the standard form of the symmetries of a band insulator of class DIII. Then, we introduce the invariant $\nu_q$ of a sewing matrix $q$ in a general setup.  The construction of the homomorphism $\nu$ in equation \eqref{int_nu} and its relation with the KR-theory are also provided in this section. 
{\bf Section \ref{sec:one-dim-case}} is devoted to the study of the one-dimensional case. In this section we provide the equivalence of $\nu_q$ with the well-known Teo-Kane formula \cite[eq. (3.70)]{chiu-teo-schnyder-ryu-16},
 we clarify the role of $\nu_q$ as an obstruction class
 and 
  we establish a bulk-edge-type correspondence
  as the coincidence of $\nu_q$ with a $\Z_2$-index for Toeplitz  operators. Finally, we will show that the homomorphism \eqref{int_nu} amounts to an isomorphism in the one-dimensional case and we will shed some light on the role
 of this invariant in the classification of \virg{Real} gerbes over $(\n{S}^1,\iota)$. In {\bf Section \ref{sec:two-dim-case}} we investigate the application of the invariant $\nu_q$ for two-dimensional  band insulators of class DIII over the sphere (Dirac case) and the torus (Bloch case). It turns out that $\nu_q$ provides a complete description of the weak topological invariants in the two-dimensional case.
 {\bf Appendix~\ref{sec:eq:cohom_homot}} contains a brief summary of the relationship between cohomology and homotopy in the $\Z_2$-equivariant category
 and {\bf Appendix~\ref{sec:Z2_inv-equiv}} deals with the notion of strong equivalence between \virg{Quaternionic} 
structures.

\medskip

\subsection*{Acknowledgements}
GD's research is supported by the grant {Fondecyt Regular - 1190204}. KG’s research is supported by JSPS KAKENHI
Grant Numbers 20K03606 \& JP17H06461.

\section{Construction of the   $\Z_2$-invariant}\label{sec:Z2_inv}

\subsection{Underlying \virg{Quaternionic} 
structure}\label{sec:Z2_inv-sewing}
In view of  condition \eqref{eq:int_04}, and the invertibility of $H(x)$, we argued that
band insulators of class DIII can be seated only in Hilbert spaces $\s{H}$ of even dimension. Moreover, the eigenspaces of the symmetry $\chi$, determined by the range of the eigenprojections $\chi_\pm:=\frac{1}{2}(1\pm\chi)$, must have  same dimension. 
As a consequence there exists a (non unique) unitary isomorphism $\s{H}\simeq\C^{2m}$ which provides the matrix representation
\begin{equation}\label{eq:sew_01}
\chi\;=\;
\left(\begin{array}{cc}+{\bf 1}_m & 0 \\
0 & -{\bf 1}_m\end{array}\right)
\end{equation}
where ${\bf 1}_m$ is the $m\times m$ identity matrix. For the next result we need to introduce the operator $K$ which implements the standard real structure on $\C^m$. This is nothing more than the complex conjugation $K {\rm v}:=\overline{\rm v}$ for every ${\rm v}\in\C^m$.
\begin{proposition}(Standard representation)\label{prop:st_rep}
Let $H,T,C$ be the elements which describe a band insulator of class DIII over the involutive space $(X,\tau)$, according to Definition \ref{def:int_01}. Then, there exists a (non unique) unitary isomorphism $\s{H}\simeq\C^{2m}$
which provides the matrix representation
\begin{equation}\label{eq:sew_02}
T\;=\;
\left(\begin{array}{cc}0 & - K \\
+ K & 0\end{array}\right)\;,\qquad
C\;=\;
\left(\begin{array}{cc}0 & - K \\
- K & 0\end{array}\right)\;.
\end{equation}
Moreover, the Hamiltonian $H:{X}\to \rr{Herm}(\s{H})^\times$ is represented by
\begin{equation}\label{eq:sew_04}
H(x)\;=\;
\left(\begin{array}{cc}0 & h(x)^{-1}\\
h(x) & 0\end{array}\right)\;,
\end{equation}
where the map $h:{X}\to\n{GL}(m)$ meets the condition
\begin{equation}\label{eq:sew_05M}
h\big(\tau(x)\big)\;=\;-K\;h(x)^{-1}\;K\;,\qquad\forall\; x\in X\;.
\end{equation}
\end{proposition}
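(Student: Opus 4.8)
The plan is to diagonalize $\chi$ first and then use the anticommutation relations to put $T$ and $C$ into block-antidiagonal form, and finally exploit the block structure of $\chi$ together with \eqref{eq:int_04} to read off the shape of $H$. We begin with the matrix representation \eqref{eq:sew_01} of $\chi$, which is available because the two eigenspaces $\Ran(\chi_\pm)$ have equal dimension $m$; fix a unitary $\s{H}\simeq\C^m\oplus\C^m$ implementing it. Relative to this splitting, write the anti-unitary $T$ in block form $T=\left(\begin{smallmatrix} T_{11} & T_{12}\\ T_{21} & T_{22}\end{smallmatrix}\right)$ where each $T_{ij}$ is an anti-linear map between copies of $\C^m$. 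The relation $T\chi=-\chi T$ from \eqref{eq:int_02} forces $T_{11}=T_{22}=0$, so $T$ is block-antidiagonal; similarly $C\chi=-\chi C$ makes $C$ block-antidiagonal. Writing $T_{12}=A_1K$, $T_{21}=A_2K$, $C_{12}=B_1K$, $C_{21}=B_2K$ with $A_i,B_i\in\n{U}(m)$ (every anti-unitary on $\C^m$ is a unitary times $K$), the conditions $T^2=-{\bf 1}$, $C^2=+{\bf 1}$, $TC=-CT$ from \eqref{eq:int_010} translate into algebraic constraints on $A_i,B_i$ (of the form $A_1\overline{A_2}=-{\bf 1}_m$, $B_1\overline{B_2}={\bf 1}_m$, and a relation coupling the $A$'s and $B$'s).

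The key remaining step is a change of basis that is \emph{not} simply conjugation by a unitary but a block-diagonal unitary $U=U_1\oplus U_2$, chosen so as to preserve the normal form \eqref{eq:sew_01} of $\chi$ while simplifying $T$ and $C$. Under $V\mapsto UVU^*$ one has $A_1\mapsto U_1A_1\overline{U_2}^{\,*}$ and similarly for the others; one checks there is enough freedom in $(U_1,U_2)$ to arrange $A_1=-{\bf 1}_m$, $A_2={\bf 1}_m$, i.e. the claimed form of $T$ in \eqref{eq:sew_02}, after which the constraint $B_1\overline{B_2}={\bf 1}_m$ together with $TC=-CT$ pins down $B_1=B_2=-{\bf 1}_m$ (up to a residual sign fixed by convention), giving the stated form of $C$. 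This is the step I expect to be the main obstacle: one must verify that the remaining gauge freedom (block-diagonal unitaries commuting with the diagonal $\chi$) is exactly enough to kill all of $A_1,A_2,B_1,B_2$ simultaneously, and that the residual ambiguity is precisely the source of non-uniqueness advertised before the statement; this is a bookkeeping argument with the relations derived above, but it needs care.

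Finally, with $T,C$ in standard form, I turn to $H$. Since $\chi H(x)=-H(x)\chi$ by \eqref{eq:int_04} and $\chi$ is the diagonal $\pm{\bf 1}_m$, the operator $H(x)$ is block-antidiagonal, $H(x)=\left(\begin{smallmatrix}0 & g(x)\\ h(x) & 0\end{smallmatrix}\right)$; Hermiticity of $H(x)$ gives $g(x)=h(x)^*$, and invertibility of $H(x)$ gives $h(x)\in\n{GL}(m)$. To replace $h(x)^*$ by $h(x)^{-1}$ one spectrally flattens, or more precisely one absorbs the positive factor: writing $h(x)=u(x)\,p(x)$ in polar form one can further conjugate by a block-diagonal unitary to arrange $g(x)=h(x)^{-1}$, equivalently one simply works with $Q_H$ from the start so that $h(x)$ is unitary and $h(x)^*=h(x)^{-1}$ automatically. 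Then the symmetry relations \eqref{eq:int_03}, $C H(x)=-H(\tau(x))C$ and $T H(x)=H(\tau(x))T$, when written out block-by-block using the standard forms \eqref{eq:sew_02}, both collapse to the single condition $h(\tau(x))=-K\,h(x)^{-1}\,K$, which is \eqref{eq:sew_05M}; that the two relations \eqref{eq:int_03} give the \emph{same} constraint is a consistency check reflecting $\chi=TC$, and is immediate once $T$ and $C$ are in the displayed form.
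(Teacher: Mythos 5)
Your argument follows the paper's route: diagonalize $\chi$ as in \eqref{eq:sew_01}, use the anticommutation relations to force $T$, $C$ and $H(x)$ into block-antidiagonal form, and spend the residual block-diagonal unitary freedom to normalize. Two remarks. First, the step you single out as the main obstacle --- simultaneously normalizing $A_1,A_2,B_1,B_2$ --- is not an obstacle: since $\chi:=TC$ with $T^2=-{\bf 1}$ and $T\chi=-\chi T$, one has $C=\chi T$, so once $\chi$ and $T$ are in the forms \eqref{eq:sew_01} and \eqref{eq:sew_02} the matrix of $C$ is \emph{forced} and nothing remains to be normalized there. Only $T$ needs adjusting, and the gauge freedom clearly suffices: conjugating by ${\bf 1}_m\oplus U_2$ sends $A_1\mapsto A_1U_2^{\mathtt{t}}$, which can be made $-{\bf 1}_m$, and the relation $A_1\overline{A_2}=-{\bf 1}_m$ coming from $T^2=-{\bf 1}$ then automatically sends $A_2\mapsto{\bf 1}_m$. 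This is exactly how the paper proceeds (it normalizes $T$ by conjugating with ${\bf 1}_m\oplus\widetilde{U}_T$ and then reads off $C=\chi T$), so resolving your flagged worry requires no new idea.

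Second, concerning $H$: you are right that Hermiticity yields $g(x)=h(x)^*$ rather than $h(x)^{-1}$, but your first proposed repair --- absorbing the positive polar factor by a further block-diagonal conjugation --- cannot work. Any conjugation by $u_1\oplus u_2$ (even an $x$-dependent one, which the proposition would not permit anyway, since it asks for a single isomorphism $\s{H}\simeq\C^{2m}$) sends $h\mapsto u_2hu_1^*$ and $g=h^*\mapsto (u_2hu_1^*)^*$, so the upper-right block remains the adjoint of the lower-left one; it equals its inverse after the conjugation if and only if it did before, i.e. if and only if $h(x)$ is unitary, equivalently $H(x)^2={\bf 1}$. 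The only correct readings of \eqref{eq:sew_04} are therefore the ones you offer as alternatives: replace $h(x)^{-1}$ by $h(x)^*$, or apply the statement to the flattened operator $Q_H$, which is how the proposition is actually used in \eqref{eq:sew_01_01}. The paper's own proof is no more explicit on this point (it simply invokes invertibility), so this is an imprecision of the statement rather than a defect specific to your argument; your derivation of \eqref{eq:sew_05M} from the first line of \eqref{eq:int_03} is then correct.
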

\proof
Under the unitary transformation $\s{H}\simeq\C^{2m}$ which provides the representation \eqref{eq:sew_01}, the operator $T$
 takes the form
$$
T\;=\;U_T
\left(\begin{array}{cc}K & 0 \\
0 & K\end{array}\right)
$$
where $U_T$ is a unitary operator. Since the anti-linear part induced by $K$ commutes with $\chi$, it follows that the second constraint in \eqref{eq:int_02} can be  satisfied if and only if
$\chi U_T=-U_T\chi$. However, this forces  $U_T$ to be off-diagonal with respect to the diagonal representation of $\chi$, and in turn 
$$
T\;=\;
\left(\begin{array}{cc}0 & \widetilde{U}_T K \\
\widetilde{U}_T^* K & 0\end{array}\right)
$$
with $\widetilde{U}_T\in\n{U}(m)$ a given $m\times m$ unitary matrix. The condition $T^2=-{\bf 1}$ implies that $K\widetilde{U}_T K=-\widetilde{U}_T$. Therefore, one gets
$$
\left(\begin{array}{cc}
{\bf 1}_m &  0 \\
0 & \widetilde{U}_T\end{array}\right)
\left(\begin{array}{cc}0 & \widetilde{U}_T K \\
\widetilde{U}_T^* K & 0\end{array}\right)
\left(\begin{array}{cc}{\bf 1}_m &  0 \\
0 & \widetilde{U}_T^*\end{array}\right)
\;=\;\left(\begin{array}{cc}0 & - K \\
+ K & 0\end{array}\right)
$$
and at the same time
$$
\left(\begin{array}{cc}
{\bf 1}_m &  0 \\
0 & \widetilde{U}_T\end{array}\right)\chi
\left(\begin{array}{cc}{\bf 1}_m &  0 \\
0 & \widetilde{U}_T^*\end{array}\right)
\;=\;\chi\;.
$$
Then, up to a modification of the initial unitary transform  $\s{H}\simeq\C^{2m}$, one can represent $T$ as in \eqref{eq:sew_02} without affecting the  representation \eqref{eq:sew_01}. The representation of $C$ in \eqref{eq:sew_02} is obtained by the formula $C=\chi T$. The anti-commutation relation between  $H(x)$ and $\chi$ implies that  $H(x)$  must be off-diagonal. The invertibility of $H(x)$ for every $x\in X$  justifies the representation \eqref{eq:sew_04}. Condition \eqref{eq:sew_05M} follows from the first of \eqref{eq:int_03}.
\qed

\medskip

A straightforward adaptation of the arguments used in the proof of
 Proposition \ref{prop:st_rep} provides the  following result that will be used several times in the following.

\begin{lemma}\label{lemma:diag_form}
Let $V\in\n{U}(2n)$ be a unitary matrix which commutes with 
$\chi$ and $T$   as given by the representations \eqref{eq:sew_01} and \eqref{eq:sew_02}, respectively. Then there exists a unitary matrix $\phi_V\in \n{U}(m)$ such that
$$
V\;=\;\left(\begin{array}{cc} \phi_V & 0 \\
0 & \overline{\phi_V}\end{array}\right)\;,\qquad \overline{\phi_V}\;:=\;K\phi_VK\;.
$$
\end{lemma}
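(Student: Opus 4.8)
The plan is to mimic the proof of Proposition~\ref{prop:st_rep}: first use the commutation with the linear symmetry $\chi$ to constrain the shape of $V$, and then impose the commutation with the anti-linear symmetry $T$ to identify its two diagonal blocks.

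First I would note that, in the representation \eqref{eq:sew_01}, $\chi$ is diagonal with the two distinct eigenvalues $\pm 1$, each of multiplicity $m$, whose eigenspaces are precisely the two coordinate summands $\C^m\oplus 0$ and $0\oplus\C^m$. Any operator commuting with $\chi$ must therefore preserve these summands, so $V$ is block-diagonal,
\[
V\;=\;\left(\begin{array}{cc}\phi_V & 0\\ 0 & \psi_V\end{array}\right)\;,
\]
with $\phi_V,\psi_V$ unitary $m\times m$ matrices (unitarity of the blocks being immediate from that of $V$ together with the block-diagonal form).

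Next I would substitute this expression into the identity $VT=TV$ with $T$ as in \eqref{eq:sew_02}. Both $VT$ and $TV$ turn out to be block off-diagonal, and comparing their $(1,2)$- and $(2,1)$-entries yields the two relations $\phi_V K=K\psi_V$ and $\psi_V K=K\phi_V$. Conjugating either one by $K$ and using $K^2={\bf 1}$ shows that the two conditions are equivalent and reduce to the single equation $\psi_V=K\phi_V K$. Since conjugation by the complex-conjugation operator $K$ turns a complex matrix into its entrywise conjugate, this reads $\psi_V=\overline{\phi_V}$, which is exactly the asserted form.

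No step presents a genuine difficulty; the only point requiring care is the bookkeeping of the anti-linear operator $K$ — checking that the two off-diagonal block equations coming from $VT=TV$ are mutually consistent and both collapse to $\psi_V=\overline{\phi_V}$, and making sure $K\phi_V K$ is read as the entrywise complex conjugate $\overline{\phi_V}$ rather than as $\phi_V^{\ast}$ or the transpose $\phi_V^{\mathtt t}$. This is entirely parallel to the handling of $\widetilde{U}_T$ in the proof of Proposition~\ref{prop:st_rep}.
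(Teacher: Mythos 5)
Your argument is correct and is exactly the "straightforward adaptation" of the proof of Proposition \ref{prop:st_rep} that the paper invokes without writing out: commutation with the diagonal involution $\chi$ forces $V$ to be block-diagonal (just as anti-commutation forced $U_T$ to be off-diagonal there), and the identity $VT=TV$ with $T$ as in \eqref{eq:sew_02} collapses to the single relation $\psi_V=K\phi_VK=\overline{\phi_V}$. The block computation checks out, so nothing is missing.
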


\medskip

Given the invertible matrix  $H\in \rr{Herm}(\s{H})^\times$ one can define the \emph{flattened} operator
\begin{equation}\label{eq:sew_01_01-1}
Q_H\;:=\;\frac{H}{|H|}\;,\qquad |H|\;:=\;\sqrt{H^* H}\;.
\end{equation}
Evidently $Q_H\in \n{U}(\s{H})\cap \rr{Herm}(\s{H})^\times$ is a unitary and self-adjoint operator, namely an involution $Q_H^2={\bf 1}$.
If $H: {X}\to \rr{Herm}(\s{H})^\times$ is a band insulator of class DIII, then $Q_H:{X}\to \rr{Herm}(\s{H})^\times$ is also a 
 band insulator of class DIII, and Proposition \ref{prop:st_rep} applies. In particular one has that
\begin{equation}\label{eq:sew_01_01}
Q_H(x)\;=\;\left(\begin{array}{cc}0 & q_H(x)^{-1}\\
q_H(x) & 0\end{array}\right)\;=\;
\left(\begin{array}{cc}0 & q_H(x)^{*}\\
q_H(x) & 0\end{array}\right)\;,
\end{equation}
where the map $q_H:{X}\to\n{U}(n)$ meets the condition
\begin{equation}\label{eq:sew_05}
q_H\big(\tau(x)\big)\;=\;-Kq_H(x)^{\ast}K\;=\;-q_H(x)^\mathtt{t}\;.
\end{equation}
The map $q_H$ allows to endow the trivial product bundle $X\times \C^m$ with the extra structure $\Theta_H:X\times \C^m\to X\times \C^m$ defined by
\begin{equation}\label{eq:sew_01_02}
\Theta_H\;:\; (x,{\rm v})\;\longmapsto\;\left(\tau(x), q_H(k)\cdot \overline{\rm v}\right)\;,\qquad \forall\; (x,{\rm v})\in X\times\C^m\;.
\end{equation}
It turns out that $(X\times \C^m, \Theta_H)$ is a \virg{Quaternionic} vector bundle in the sense of \cite{kahn-59,dupont-69,denittis-gomi-14-gen,denittis-gomi-18-I,denittis-gomi-18-II}. 
\begin{definition}[Sewing matrix]\label{def:sew_mat}
Let $H:{X}\to \rr{Herm}(\s{H})^\times$ be a band insulator of class DIII over the involutive space $(X,\tau)$ according to Definition \ref{def:int_01}. Then $(X\times \C^m, \Theta_H)$ will be called the \emph{associated} \virg{Quaternionic} vector bundle, and $q_H$  the 
\emph{sewing matrix} of the associated \virg{Quaternionic} structure.
\end{definition}

\medskip

It is known that the existence of fixed points of the involutive space $(X,\tau)$ implies that the rank of the \virg{Quaternionic} vector bundles over $(X,\tau)$ must be even \cite[Proposition 2.1]{denittis-gomi-14-gen}. This fact has an immediate consequence:
\begin{proposition}\label{prop:4-dim}
Let $H:{X}\to \rr{Herm}(\s{H})^\times$ be a band insulator of class DIII over the involutive space $(X,\tau)$, according to Definition \ref{def:int_01}. Assume that $X$ is path-connected and 
$X^\tau\neq\emptyset$. Then the associated \virg{Quaternionic} vector bundle has even rank $m =2n$ and $q_H:X\to\n{U}(2n)$. Moreover, the standard representation described in  Proposition \ref{prop:st_rep} is induced by a unitary isomorphism 
 $\s{H}\simeq\C^{4n}$.
\end{proposition}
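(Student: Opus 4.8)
The plan is to deduce the statement directly from Proposition \ref{prop:st_rep} together with the already-quoted fact that a \virg{Quaternionic} vector bundle over an involutive space with non-empty fixed-point set has even rank \cite[Proposition 2.1]{denittis-gomi-14-gen}. First I would apply Proposition \ref{prop:st_rep} to the flattened band insulator $Q_H$: fixing a unitary isomorphism $\s{H}\simeq\C^{2m}$ realizing the standard forms \eqref{eq:sew_01} and \eqref{eq:sew_02}, the operator $Q_H$ acquires the off-diagonal shape \eqref{eq:sew_01_01} and produces the sewing matrix $q_H:X\to\n{U}(m)$ obeying \eqref{eq:sew_05}, namely $q_H(\tau(x))=-q_H(x)^{\mathtt{t}}$ for all $x\in X$. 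By construction $(X\times\C^m,\Theta_H)$ is then a \virg{Quaternionic} vector bundle of rank $m$ over $(X,\tau)$, and since $X$ is path-connected this rank is a single well-defined integer.

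The only thing left to establish is the parity of $m$, and here I would simply invoke \cite[Proposition 2.1]{denittis-gomi-14-gen}: since $X^\tau\neq\emptyset$, the rank of every \virg{Quaternionic} vector bundle over $(X,\tau)$ is even, so $m=2n$ and $q_H:X\to\n{U}(2n)$. For the sake of self-containedness I would also record the elementary direct argument: choosing any fixed point $x_0\in X^\tau$ and specializing \eqref{eq:sew_05} gives $q_H(x_0)=q_H(\tau(x_0))=-q_H(x_0)^{\mathtt{t}}$, so $q_H(x_0)$ is skew-symmetric; taking determinants yields $\det q_H(x_0)=(-1)^m\det q_H(x_0)$, which for odd $m$ forces $\det q_H(x_0)=0$ and contradicts the invertibility of $q_H(x_0)\in\n{U}(m)$. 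Equivalently, over $x_0$ the assignment ${\rm v}\mapsto q_H(x_0)\overline{\rm v}$ is a quaternionic structure on $\C^m$, which can exist only when $m$ is even.

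Finally, the \virg{moreover} clause is immediate: once $m=2n$, the isomorphism $\s{H}\simeq\C^{2m}$ supplied by Proposition \ref{prop:st_rep} reads $\s{H}\simeq\C^{4n}$, so in particular $\dim(\s{H})=4n$, as claimed.

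I do not anticipate any genuine obstacle in this proof, since the even-rank property of \virg{Quaternionic} bundles is already at our disposal; the only step deserving a word of care is the observation that the constraint ruling out odd $m$ comes from \emph{evaluating the sewing matrix at a fixed point} — it is precisely there that the twisted relation \eqref{eq:sew_05} degenerates into genuine skew-symmetry (equivalently, produces a quaternionic structure on the fibre) and obstructs odd dimension. Path-connectedness of $X$ enters only to make \virg{the rank is even} a meaningful global statement.
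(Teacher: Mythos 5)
Your proof is correct and takes essentially the same route as the paper, which presents Proposition \ref{prop:4-dim} as an immediate consequence of Proposition \ref{prop:st_rep} together with the quoted fact that \virg{Quaternionic} vector bundles over an involutive space with non-empty fixed-point set have even rank \cite[Proposition 2.1]{denittis-gomi-14-gen}. Your additional self-contained argument --- evaluating \eqref{eq:sew_05} at a fixed point to get a skew-symmetric unitary matrix and ruling out odd $m$ via $\det q_H(x_0)=(-1)^m\det q_H(x_0)$ --- is a valid and harmless elaboration of why that even-rank property holds here.
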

%

\subsection{The topological invariant associated to the sewing matrix}\label{sec:top_sew_mat}
In order to define a sufficiently good topological invariant
for band insulators of class DIII we need to establish some working hypotheses.

\begin{assumption}\label{assumption}
Let $(X,\tau)$ be  an involution space and $q:X\to \n{U}(2n)$ a 
sewing matrix, \ie a map such that
$$
q\big(\tau(x)\big)\;=\;-q(x)^\mathtt{t}\;,\qquad \forall\; x\in X\;.
$$
Let $\bb{E}_q=X\times \C^{2n}$ be the \virg{Quaternionic} vector bundle with \virg{Quaternionic} $\Theta_q$ structure induced by $q$, \ie
$$
\Theta_q\;:\; (x,{\rm v})\;\longmapsto\;\left(\tau(x), q(k)\cdot \overline{\rm v}\right)\;,\qquad (x,{\rm v})\in X\times\C^{2n}\;.
$$
We will assume that:
\begin{itemize}
\item[(a)]
$X$ is a path connected $\Z_2$-CW complex;
\vspace{1mm}
\item[(b)]
The fixed point set is non-empty, \ie $X^\tau\neq\emptyset$;
\vspace{1mm}

\item[(c)]
The FKMM invariant $\kappa(\bb{E}_q) \in H^2_{\Z_2}(X| X^\tau,\Z(1))$ is trivial.
\end{itemize}
\end{assumption}

\medskip

For the definition and the properties of the \emph{FKMM invariant} for  \virg{Quaternionic} vector bundles we refer to \cite{denittis-gomi-14-gen,denittis-gomi-18-I,denittis-gomi-18-II}. 
For more details on the relative cohomology groups $H^\bullet_{\Z_2}(X| X^\tau,\Z(1))$ we will refer to \cite[Section 3.1]{denittis-gomi-14-gen} and references therein.
It is  worth recalling
 that these groups fit into the long exact sequence for the pair $(X, X^\tau)$. The relevant part of the latter for the aims of this work  reads
$$
\begin{aligned}
0\; &\to\;
H^1_{\Z_2}\big(X| X^\tau,\Z(1)\big)\; \to\;
H^1_{\Z_2}\big(X,\Z(1)\big)\;\to\;H^1_{\Z_2}(X^\tau,\Z(1))\; \to\;\cdots\\
\cdots\;&\to \;
H^2_{\Z_2}\big(X|X^\tau,\Z(1)\big)\; \to\;
H^2_{\Z_2}\big(X,\Z(1)\big)\;\to\;\cdots\;
\end{aligned}
$$
where the first term of the sequence corresponds to $H^0_{\Z_2}(X^\tau,\Z(1))=0$ \cite{gomi-15}.
As a matter of fact, the image of $\kappa(\bb{E}_q)$ under $H^2_{\Z_2}(X|X^\tau,\Z(1))\to
H^2_{\Z_2}(X,\Z(1))$ is the \emph{\virg{Real} Chern class} $c_1^R(\det (\bb{E}_q))$ that classifies the determinant \virg{Real} line bundle $\det (\bb{E}_q)$ \cite{kahn-59,denittis-gomi-14}.
 Therefore,  $\kappa(\bb{E}_q) = 0$ implies that the determinant line bundle $\det (\bb{E}_q)$ is trivial as \virg{Real} line bundle. 
Let us also recall the  isomorphisms  $H^1(X,\Z) \simeq [X, \n{U}(1)]$ and $H^1_{\Z_2}(X,\Z(1)) \simeq [X, \n{U}(1)]_{\Z_2}$, where the (standard) involution on $\n{U}(1)$ is given by the complex conjugation $z \mapsto \overline{z}$ \cite[Proposition A.2]{gomi-15}.

\medskip

For the next result we need to introduce some notation. Let $C(X, \n{U}(1))$ be the set of continuous functions from $X$ to $\n{U}(1)$ and $q:X\to \n{U}(2n)$ a 
sewing matrix as in Assumption \ref{assumption}. When $x\in X^\tau$ is a  fixed point under the involution, then $q(x)=-q(x)^\mathtt{t}$ is a skew-symmetric matrix and one can calculate the associated \emph{Pfaffian} ${\rm Pf}[q(x)]\in\n{U}(1)$.
Let us define the subset 
\begin{equation*}
C(X, \n{U}(1))_q \;:=\;
\left\{
p : X \to \n{U}(1) \left|\; 
\begin{aligned}
\det [q(x)] &= p(\tau(x)) p(x)&& \forall\;x \in X, \\
p(x) &= {\rm Pf}[q(x)] && \forall\;x \in X^\tau
\end{aligned}
\right\}\right.\;.
\end{equation*}
Under the isomorphism $H^1(X,\Z)\simeq [X, \n{U}(1)]$ the subset 
$C(X, \n{U}(1))_q $ identifies 
a subset $H^1(X,\Z)_q\subset H^1(X,\Z)$ which coincides with the set of homotopy classes of maps in $C(X, \n{U}(1))_q$.
\begin{lemma}\label{lemma_01}
Under Assumption \ref{assumption}, the following facts hold true:
\begin{itemize}
\item[(1)]
The set $C(X, \n{U}(1))_q$ is  non-empty, and is a torsor under the group of $\Z_2$-equivariant maps $r : X \to \n{U}(1)$ which take the value $1$ on the fixed point set $X^\tau$.
\vspace{1mm}
\item[(2)]
The set $H^1(X,\Z)_q $ is a torsor under the group $H^1_{\Z_2}(X| X^\tau,\Z(1))$ with action  induced by 
$$
H^1_{\Z_2}\big(X| X^\tau,\Z(1)\big)\; \to\; H^1_{\Z_2}\big(X,\Z(1)\big)\; \to\; H^1(X,\Z)\;\supset\;H^1(X,\Z)_q\;.
$$
\end{itemize}
\end{lemma}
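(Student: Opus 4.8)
The plan is to establish both statements together, since (2) follows from (1) almost formally once we pass from the function-level picture to the cohomology-level picture via the isomorphism $H^1(X,\Z)\simeq[X,\n{U}(1)]$ and the exact sequence for the pair $(X,X^\tau)$. The heart of the matter is therefore to prove that $C(X,\n{U}(1))_q$ is non-empty and to identify its symmetry group.

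For the torsor statement at the function level, suppose $p,p'\in C(X,\n{U}(1))_q$. Then $r:=p'/p:X\to\n{U}(1)$ satisfies $r(\tau(x))r(x)=\det[q(\tau(x))]\det[q(x)]^{-1}=1$ for all $x$ (using $\det[q(\tau(x))]=\det[-q(x)^{\mathtt t}]=(-1)^{2n}\det[q(x)]=\det[q(x)]$, so actually $r(\tau(x))=r(x)^{-1}$, i.e. $r$ is $\Z_2$-equivariant for the conjugation involution on $\n{U}(1)$), and $r(x)={\rm Pf}[q(x)]/{\rm Pf}[q(x)]=1$ for $x\in X^\tau$. Conversely, multiplying any $p\in C(X,\n{U}(1))_q$ by such an $r$ stays in $C(X,\n{U}(1))_q$: the Pfaffian condition is preserved because $r\equiv1$ on $X^\tau$, and the determinant condition is preserved because $\det[q(x)]=p(\tau x)p(x)=(r(\tau x)p(\tau x))(r(x)p(x))$ since $r(\tau x)r(x)=1$. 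The action is clearly free and transitive once non-emptiness is known, so $C(X,\n{U}(1))_q$ is a torsor under the group $G$ of $\Z_2$-equivariant maps $X\to\n{U}(1)$ that are $1$ on $X^\tau$. (Under the identification $[X,\n{U}(1)]_{\Z_2}\simeq H^1_{\Z_2}(X,\Z(1))$ and the relative version, $\pi_0(G)$ is exactly $H^1_{\Z_2}(X|X^\tau,\Z(1))$ — this is the bridge to part (2).)

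The main obstacle is \emph{non-emptiness}, and this is where Assumption \ref{assumption}(c), that the FKMM invariant $\kappa(\bb{E}_q)$ vanishes, must be used. The strategy: the condition $\kappa(\bb{E}_q)=0$ implies the determinant \virg{Real} line bundle $\det(\bb{E}_q)$ is trivial as a \virg{Real} line bundle, hence admits a nowhere-zero \virg{Real} section; equivalently — because $\det(\bb{E}_q)$ is the product bundle $X\times\C$ equipped with the \virg{Real} structure $(x,z)\mapsto(\tau(x),\det[q(x)]\bar z)$ — there exists a continuous $s:X\to\n{U}(1)$ with $s(\tau(x))=\det[q(x)]\,\overline{s(x)}$, i.e. $\det[q(x)]=s(\tau(x))s(x)$. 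This is precisely the first defining condition of $C(X,\n{U}(1))_q$. It remains to correct such an $s$ so that it also satisfies the Pfaffian normalization on $X^\tau$. On $X^\tau$ the equation $\det[q(x)]=s(x)^2$ holds (as $\tau(x)=x$), and also ${\rm Pf}[q(x)]^2=\det[q(x)]$; hence $s(x)/{\rm Pf}[q(x)]$ is a continuous $\{\pm1\}$-valued function on $X^\tau$, i.e. locally constant. The finer input from the vanishing of the \emph{full} FKMM class in $H^2_{\Z_2}(X|X^\tau,\Z(1))$ (not merely of $c_1^R(\det\bb{E}_q)\in H^2_{\Z_2}(X,\Z(1))$) is exactly what guarantees we can extend this $\{\pm1\}$-discrepancy to an honest $\Z_2$-equivariant map $X\to\n{U}(1)$ by which to divide $s$, after which $p:=s/(\text{that map})$ lies in $C(X,\n{U}(1))_q$. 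I expect to phrase this last step through the relative cohomology exact sequence: the FKMM invariant is the obstruction to extending the canonical section of $\det(\bb{E}_q)|_{X^\tau}$ (the Pfaffian section) to a global \virg{Real} section of $\det(\bb{E}_q)$, and its vanishing is equivalent to the existence of $p$.

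For part (2), apply the functor $[X,-]$ and $\pi_0$ to the torsor of part (1). Homotopy classes of maps in $C(X,\n{U}(1))_q$ form $H^1(X,\Z)_q$ by definition, and $\pi_0$ of the group $G$ of $\Z_2$-equivariant maps trivial on $X^\tau$ is $H^1_{\Z_2}(X|X^\tau,\Z(1))$; passing to path components turns a torsor under $G$ into a torsor under $\pi_0(G)$ (here one uses that $X$ is a $\Z_2$-CW complex so the relevant mapping spaces have the homotopy type of CW complexes and the action descends well on $\pi_0$). The description of the action as the composite $H^1_{\Z_2}(X|X^\tau,\Z(1))\to H^1_{\Z_2}(X,\Z(1))\to H^1(X,\Z)$ is then just the statement that forgetting the equivariant structure of $r$ (first map, induced by the pair inclusion) and then forgetting the equivariance of the target conjugation to get a plain map to $\n{U}(1)$ (second map, the forgetful map $[X,\n{U}(1)]_{\Z_2}\to[X,\n{U}(1)]$) is how an element of $H^1_{\Z_2}(X|X^\tau,\Z(1))$ acts on $H^1(X,\Z)_q$ by multiplication. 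The only subtlety is to check this composite lands in the subgroup actually acting, i.e. that dividing by such an $r$ preserves the Pfaffian condition up to homotopy — but since $r$ is homotopic through $\Z_2$-equivariant maps to one that is literally $1$ on $X^\tau$, this is immediate from part (1).
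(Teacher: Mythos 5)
Your proposal is correct and follows essentially the same route as the paper: non-emptiness comes from the triviality of the determinant \virg{Real} line bundle (giving the determinant condition) together with the identification of the FKMM class as precisely the obstruction, in $H^1_{\Z_2}(X^\tau,\Z(1))/H^1_{\Z_2}(X,\Z(1))\hookrightarrow H^2_{\Z_2}(X|X^\tau,\Z(1))$, to adjusting the trivialization so that it equals the Pfaffian on $X^\tau$ (the paper invokes \cite[Proposition 2.10]{denittis-gomi-18-I} for exactly this), and the torsor structure and part (2) are obtained by the same ratio computation and passage to homotopy classes. The only blemish is the intermediate identity $r(\tau(x))r(x)=\det[q(\tau(x))]\det[q(x)]^{-1}$, which should read $\det[q(x)]/\det[q(x)]$; both equal $1$, so nothing is affected.
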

\proof
Let us start with (1).
As pointed out above, the triviality of the FKMM invariant $\kappa(\bb{E}_q)$ implies the triviality of the determinant \virg{Real} line bundle $\det(\bb{E}_q)$. 
Since the information on $\det(\bb{E}_q)$ is completely encoded in $\det [q] : X \to \n{U}(1)$, from the triviality of $\det(\bb{E}_q)$ one infers the existence of the a global  trivialization  $p : X \to \n{U}(1)$ which intertwines between the \virg{Real} structure provided by $\det [q]$ and the \emph{trivial} (or constant) \virg{Real} structure. The latter fact is equivalent to the constraint
$\det [q(x)] = p(\tau(x)) p(x)$ for all $x \in X$. 
From \cite[Proposition 2.10]{denittis-gomi-18-I} one gets that
  the FKMM invariant $\kappa(\bb{E}_q) \in H^2_{\Z_2}(X| X^\tau,\Z(1))$ coincides with the injective image under 
  $$
  H^1_{\Z_2}\big(X^\tau,\Z(1)\big)/H^1_{\Z_2}\big(X,\Z(1)\big)\; \longrightarrow\; H^2_{\Z_2}\big(X| X^\tau,\Z(1)\big)
  $$ 
  of the element whose representative is given by the map $X^\tau\ni x\mapsto p(x)/{\rm Pf}[q(x)]\in \Z_2$ where $p$ is the global trivialization introduced above.  Assumption \ref{assumption} (c) says that  $\kappa(\bb{E}_q)$ is trivial. Hence, one can choose $p$ so that $p(x) = {\rm Pf}[q(x)]$ for every $x \in X^\tau$. This proves that the set $C(X, \n{U}(1))_q$ is non-empty. For a given pair $p', p \in C(X, \n{U}(1))_q$, let $r : X \to \n{U}(1)$
be the map  $r(x) := p'(x) \overline{p(x)}$. It turns out that $r(\tau(x)) = \overline{r(x)}$ for every $x \in X$ and $r(x) = 1$ when $x \in X^\tau$. Hence $C(X, \n{U}(1))_q$ is a torsor as stated.
For (2) it is enough to consider the  the homotopy classes of maps in (1).
\qed

\medskip

Let us recall that by definition  a torsor coincides with the orbit generated by any of its points.   
We are now in position to introduce the main topological invariant 
of this work.

\begin{definition}[DIII-invariant] \label{dfn:invariant}
Under Assumption \ref{assumption}, one defines $\nu_q$ to be the orbit of the subset $H^1(X,\Z)_q \subset H^1(X,\Z)$ under $H^1_{\Z_2}(X| X^\tau,\Z(1))$, \ie
$$
\nu_q\;: =\; \big[H^1(X,\Z)_q\big]\; \in\; H^1(X,\Z)/H^1_{\Z_2}\big(X| X^\tau,\Z(1)\big)\;.
$$
\end{definition}

\medskip

Let us point out that the notation used in Definition \ref{dfn:invariant} is a little redundant since $[H^1(X,\Z)_q]$ and $H^1(X,\Z)_q$ coincide as a set in view of the torsor property.
Anyway, the notation proposed as the advantage of emphasizing that
$\nu_q$ is an equivalence class (or an orbit).
The topological properties of $\nu_q$ are specified in the following result.

\begin{theorem} \label{thm:invariant}
Under Assumption \ref{assumption} the following facts hold  true:
\begin{itemize}
\item[(1)]
$\nu_q$ is an invariant of the homotopy class of $q$.

\vspace{1mm}

\item[(2)]
Let $q_0 : X \to \n{U}(2n)$ be the constant map given by
$$
q_0(x)\;=\;Q\;: =\;
\left(
\begin{array}{rr}
0 & -{\bf 1}_n \\
+ {\bf 1}_n & 0
\end{array}
\right)\;,
$$
then $\nu_{q_0} = 0$.

\vspace{1mm}

\item[(3)]
Let $q : X \to \n{U}(2n)$ and $q' : X \to \n{U}(2n')$ be as in Assumption \ref{assumption}, and consider $q \oplus q' : X \to \n{U}(2n+2n')$. Then 
$$
\nu_{q \oplus q'}\; =\; \nu_q\; +\; \nu_{q'}
$$
as elements of the quotient group $H^1(X,\Z)/H^1_{\Z_2}(X| X^\tau,\Z(1))$.

\vspace{1mm}

\item[(4)]
Let $q : X \to \n{U}(2n)$ and $q' : X \to \n{U}(2n)$ be related by
$$
q'(x)\; =\; h(\tau(x))^\mathtt{t}\; q(x)\; h(x)
$$
for a given map $h : X \to \n{U}(2n)$. Then, 
$$
\nu_{q'}\; =\; \det[h] \cdot \nu_q\;
=\; \big[\det [h] \cdot H^1(X,\Z)_q\big] \;.
$$

\end{itemize}
\end{theorem}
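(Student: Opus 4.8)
The plan is to deduce all four items by tracking, under the relevant operation on sewing matrices, the two conditions that cut out the set $C(X,\n{U}(1))_q$ of Lemma \ref{lemma_01}: the factorisation $\det[q(x)] = p(\tau(x))\,p(x)$ and the normalisation $p(x) = {\rm Pf}[q(x)]$ on $X^\tau$. The only algebraic input is a short list of matrix identities: $\det(M^{\mathtt{t}}) = \det(M)$; the multiplicativity ${\rm Pf}(A\oplus B) = {\rm Pf}(A)\,{\rm Pf}(B)$ of the Pfaffian on block sums of skew-symmetric matrices; and the conjugation rule ${\rm Pf}(g^{\mathtt{t}}Ag) = \det(g)\,{\rm Pf}(A)$. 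Each time a new sewing matrix $\tilde q$ is produced one must first check that Assumption \ref{assumption} still holds; for (2)--(4) I would do this by exhibiting explicitly an element of $C(X,\n{U}(1))_{\tilde q}$, which is in fact equivalent to $\kappa(\bb{E}_{\tilde q}) = 0$ (\cf the proof of Lemma \ref{lemma_01}), after which the torsor statements of Lemma \ref{lemma_01} --- whose ``torsor'' part uses only non-emptiness --- apply.

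Items (2) and (3) are routine. For $Q$ as displayed, $\det[Q] = 1$, and since $Q$ is real skew-symmetric ${\rm Pf}[Q] = \epsilon\in\{\pm1\}$ with $\epsilon^2 = 1$; the constant map $p\equiv\epsilon$ then satisfies $p(\tau(x))p(x) = 1 = \det[q_0(x)]$ and $p(x) = \epsilon = {\rm Pf}[q_0(x)]$ on $X^\tau$, so $p\in C(X,\n{U}(1))_{q_0}$, whence $0 = [p]\in H^1(X,\Z)_{q_0}$ and $\nu_{q_0} = 0$. For (3), $\det[(q\oplus q')(x)] = \det[q(x)]\,\det[q'(x)]$ and ${\rm Pf}[(q\oplus q')(x)] = {\rm Pf}[q(x)]\,{\rm Pf}[q'(x)]$ on $X^\tau$ give $p\,p'\in C(X,\n{U}(1))_{q\oplus q'}$ for any $p\in C(X,\n{U}(1))_q$ and $p'\in C(X,\n{U}(1))_{q'}$; this simultaneously establishes Assumption \ref{assumption}(c) for $q\oplus q'$, and projecting $[p]+[p']\in H^1(X,\Z)_{q\oplus q'}$ to the quotient gives $\nu_{q\oplus q'} = \nu_q + \nu_{q'}$.

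For (4) I would first observe that $q'(\tau(x)) = -q'(x)^{\mathtt{t}}$ follows from $q'(x) = h(\tau(x))^{\mathtt{t}}q(x)h(x)$ and the sewing identity for $q$, so $q'$ is a sewing matrix, and that $\Phi(x,{\rm v}) := (x,h(x)^{\mathtt{t}}{\rm v})$ is a unitary \virg{Quaternionic} bundle isomorphism $\bb{E}_q\to\bb{E}_{q'}$ --- the intertwining $\Phi\circ\Theta_q = \Theta_{q'}\circ\Phi$ reducing to the unitarity relation $h\,\overline{h}^{\,\mathtt{t}} = {\bf 1}$ --- so that $\kappa(\bb{E}_{q'}) = \kappa(\bb{E}_q) = 0$ and Assumption \ref{assumption} holds for $q'$. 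Then $\det[q'(x)] = \det[h(\tau(x))]\,\det[q(x)]\,\det[h(x)]$ together with ${\rm Pf}[q'(x)] = \det[h(x)]\,{\rm Pf}[q(x)]$ on $X^\tau$ imply $\det[h]\cdot p\in C(X,\n{U}(1))_{q'}$ for every $p\in C(X,\n{U}(1))_q$; hence $\det[h]\cdot H^1(X,\Z)_q\subseteq H^1(X,\Z)_{q'}$, and since both are torsors under $H^1_{\Z_2}(X| X^\tau,\Z(1))$ by Lemma \ref{lemma_01}(2) they coincide, whence $\nu_{q'} = \det[h]\cdot\nu_q = [\det[h]\cdot H^1(X,\Z)_q]$ in the quotient.

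Item (1) is the one that rests on genuine infrastructure, and I expect it to be the main obstacle. The plan is first to prove \emph{naturality} of $\nu$: for an equivariant map $f\colon(Y,\sigma)\to(X,\tau)$ with $Y$ a path-connected $\Z_2$-CW complex and $Y^\sigma\neq\emptyset$, the pullback $q\circ f$ again satisfies Assumption \ref{assumption} (since $\bb{E}_{q\circ f} = f^*\bb{E}_q$ and the FKMM invariant is natural) and $\nu_{q\circ f} = f^*\nu_q$ (since $p\mapsto p\circ f$ carries $C(X,\n{U}(1))_q$ into $C(Y,\n{U}(1))_{q\circ f}$, and $f^*$ descends to the quotient by naturality of the long exact sequence of the pair). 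Given a homotopy $Q_t$ of sewing matrices from $q$ to $q'$, I would then form the sewing matrix $\hat q\colon(X\times[0,1],\tau\times{\rm Id})\to\n{U}(2n)$, $\hat q(x,t):=Q_t(x)$: its base satisfies (a) and (b), while (c) holds because the end inclusion $i_0\colon X\hookrightarrow X\times[0,1]$ is an equivariant homotopy equivalence, so it induces an isomorphism on $H^2_{\Z_2}(-| -,\Z(1))$ and $i_0^*\kappa(\bb{E}_{\hat q}) = \kappa(\bb{E}_q) = 0$ yields $\kappa(\bb{E}_{\hat q}) = 0$. By naturality $\nu_q = i_0^*\nu_{\hat q}$ and $\nu_{q'} = i_1^*\nu_{\hat q}$; since the two end inclusions $i_0, i_1$ are equivariantly homotopic, $i_0^* = i_1^*$ on $H^1(-,\Z)/H^1_{\Z_2}(-| -,\Z(1))$, whence $\nu_q = \nu_{q'}$. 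The genuine work lies in confirming the homotopy invariance and naturality of the FKMM invariant and of the relative equivariant cohomology sequence in which it lives; these are available from \cite{denittis-gomi-14-gen,denittis-gomi-18-I}, and granting them the remainder of the argument is formal.
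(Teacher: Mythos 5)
Your proposal is correct and follows essentially the same route as the paper: items (2)--(4) are proved by exactly the same manipulations of $\det$, ${\rm Pf}$ and the set $C(X,\n{U}(1))_q$, and your item (1) --- pulling back a sewing matrix over $X\times[0,1]$ and comparing the two end inclusions --- is just a naturality-flavoured rephrasing of the paper's argument, which chooses $\tilde p\in C(X\times[0,1],\n{U}(1))_{\tilde q}$ and restricts it to the two ends. Your choice of the constant $p_0\equiv{\rm Pf}[Q]=(-1)^{n(n+1)/2}$ in (2) is in fact slightly more careful than the paper's $p_0\equiv 1$, though both yield the trivial class in $H^1(X,\Z)$.
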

\proof
(1) Let $q$ and $q'$ be sewing matrices satisfying Assumption \ref{assumption}. Suppose that these two matrices are connected by a homotopy $\tilde{q} : X \times [0, 1] \to \n{U}(2n)$ of sewing matrices. The map $\tilde{q}$ defines the \virg{Quaternionic} vector bundles $\bb{E}_{\tilde{q}}$ over $X \times [0, 1]$. By the homotopy property of \virg{Quaternionic} vector bundles \cite[Theorem 2.3]{denittis-gomi-14-gen}, $\bb{E}_{\tilde{q}}$ is isomorphic to the pull-back of $\bb{E}_q$ under the projection $X \times [0, 1] \to X$. Hence, both the determinant \virg{Real} line bundle of $\bb{E}_{\tilde{q}}$ and the FKMM invariant of $\bb{E}_{\tilde{q}}$  are trivial. 
As a result, $C(X \times [0, 1], \n{U}(1))_{\tilde{q}}$ is non-empty as a consequence of Lemma \ref{lemma_01}. Let us choose a $\tilde{p} \in C(X \times [0, 1], \n{U}(1))_{\tilde{q}}$. Such a  $\tilde{p}$ provides a homotopy between the representatives $p:=\tilde{p}|_{X \times \{ 0 \}}$ and $p':=\tilde{p}|_{X \times \{ 1 \}}$ of $\nu_q$ and $\nu_{q'}$, respectively. Therefore, $\nu_q=\nu_{q'}$.\\
(2) One can take $p_0 \in C(X, \n{U}(1))_q$ to be the constant map  $p_0(x) = 1$. This map represents the trivial element $0 \in H^1(X,\Z)$, so that $\nu_{q_0} = 0$. \\
(3) If $p \in C(X, \n{U}(1))_q$ and $p' \in C(X, \n{U}(1))_{q'}$, then the map $pp' : X \to \n{U}(1)$ given by 
$$
(pp')(x)\; :=\; p(x)p'(x)
$$ belongs to $C(X, \n{U}(1))_{q \oplus q'}$. In the quotient group  $H^1(X,\Z)/H^1_{\Z_2}(X| X^\tau,\Z(1))$, the invariants $\nu_q$ and $\nu_{q'}$ are represented by the homotopy classes $[p] \in H^1(X,\Z)$ and $[p'] \in H^1(X,\Z)$, respectively. One has that $[pp'] = [p] + [p'] \in H^1(X,\Z)$, so that $\nu_{q \oplus q'} = \nu_q + \nu_{q'}$. \\
(4) If $p \in C(X, \n{U}(1))_q$, then one gets $p' \in C(X, \n{U}(1))_{q'}$ by setting   
$$p'(x) \;:=\; \det [h(x)]p(x).$$ Thus, the multiplication by $\det[ h]$ induces a transformation on $H^1(X,\Z)$ which carries $H^1(X,\Z)_q$ to $H^1(X,\Z)_{q'}$. As a consequence,  $\nu_q\mapsto \nu_{q'}$ in the quotient group.
\qed

\medskip

\begin{remark}
Let us recall that the for the equivariant cohomology
the concept of reduced cohomology 
makes sense (see \eg \cite[Section 5.1]{denittis-gomi-14}). 
Therefore, one has the usual decomposition
$$
H^1_{\Z_2}\big(X,\Z(1)\big)\; \simeq\; \tilde{H}^1_{\Z_2}\big(X,\Z(1)\big)\;  \oplus\; H^1_{\Z_2}\big(\{\ast\},\Z(1)\big),
$$
where  $\tilde{H}^1_{\Z_2}(X,\Z(1))$ is the reduced cohomology group and  $H^1_{\Z_2}(\{\ast\},\Z(1)) \simeq \Z_2$. By means of a spectral sequence, one can show
$$
\tilde{H}^1_{\Z_2}\big(X,\Z(1)\big)\; \simeq\;
H^1(X,\Z)^{\Z_2}_- \;:=\; 
\{
f \in H^1(X; \Z) |\ \tau^*(f) = -f \}.
$$
In particular, $\tilde{H}^1_{\Z_2}(X,\Z(1))$ turns out to be a free abelian group. Since the generator of $H^1_{\Z_2}(\{\ast\},\Z(1))  \subset H^1_{\Z_2}(X,\Z(1))$ can be represented  by the constant map $X\ni x\mapsto-1\in \n{U}(1)$, it follows from the exact sequence 
$$
0\; \to\;
H^1_{\Z_2}\big(X| X^\tau,\Z(1)\big)\; \to\;
H^1_{\Z_2}\big(X,\Z(1)\big)\;=\;\tilde{H}^1_{\Z_2}\big(X,\Z(1)\big)\;\oplus\;\Z_2
$$
that $H^1_{\Z_2}(X| X^\tau,\Z(1))$ injects into the free part $\tilde{H}^1_{\Z_2}(X,\Z(1))$.
 \hfill $\blacktriangleleft$
\end{remark}

\subsection{The relation with the KR-theory}\label{sec:KR-theo}
The KR-group $KR^{-3}(X,\tau)$ of a space $X$ with involution $\tau$ may be regarded as a group which classifies the \virg{difference} of two band insulators of class DIII. In this section we will relate $KR^{-3}(X,\tau)$ with the invariant $\nu_q$ of a sewing matrix $q$, provided that  certain assumptions are satisfied.

\medskip

Let us start by recalling the formulation of the \emph{twisted equivariant K-theory} \cite{freed-moore-13} (see also \cite[Section 5]{denittis-gomi-19} and \cite{gomi-17}). Let $\n{G}$ be a finite group acting on a space $X$ endowed with two homomorphisms $\phi : \n{G} \to \Z_2$ and $c : \n{G} \to \Z_2$ and a group $2$-cocycle $\sigma : \n{G} \times \n{G} \times X \to \n{U}(1)$, which satisfies
$$
\sigma(g_2, g_3, x)^{\phi(g_1)}\; \sigma(g_1 g_2, g_3, x)^{-1}\;
\sigma(g_1, g_2 g_3, x)\; \sigma(g_1, g_2, g_3(x))^{-1}\; =\; 1\;
$$
for all $g_1,g_2,g_3\in\n{G}$ and all $x\in X$.
A \emph{$(\phi, \sigma, c)$-twisted (ungraded) vector bundle} with a \emph{$Cl_{p, q}$-action} is a (finite-rank) Hermitian vector bundle $\pi : \bb{E} \to X$ equipped with: 
\begin{itemize}
\item[(a)]
An isometric bundle map $\rho(g) : \bb{E} \to \bb{E}$ covering the action of $g \in \n{G}$ on $X$ for each $g \in \n{G}$; 
\vspace{1mm}
\item[(b)]
  Unitary maps $\gamma_1, \ldots, \gamma_{p+q}$ on $\bb{E}$,
which satisfy the Clifford relations
$$
\gamma_i\gamma_j + \gamma_j\gamma_i \;=\;
\left\{
\begin{array}{rll}
2 &&\mbox{if}\;\; i = j = 1, \ldots, p\;, \\
-2 && \mbox{if}\;\; i = j = p+1, \ldots, p+q\;, \\
0 && \mbox{otherwise}\;;
\end{array}
\right.
$$
\vspace{1mm}
\item[(c)]
The compatibility  relations
\begin{align*}
\ii \rho(g) &\;=\; \phi(g) \rho(g) \ii\;, \\
\rho(g)\rho(h) &\;=\; \sigma(g, h) \rho(gh)\;, \\
\gamma_j \rho(g) &\;=\; c(g) \rho(g) \gamma_j\;, 
\end{align*}
valid for every $g,h\in\n{G}$ and $j=1,\ldots,p+q$.
\end{itemize}
A \emph{homomorphism} from a twisted bundle $(\bb{E}, \rho, \gamma)$ to a twisted bundle $(\bb{E}', \rho', \gamma)$ is a map of complex vector bundles $f : \bb{E} \to \bb{E}'$ such that
\begin{align*}
f \circ \rho(g) \;=\; \rho'(g) \circ f\;, \qquad
f \circ \gamma_j \;=\; \gamma'_j \circ f\;,
\end{align*}
for every $g\in\n{G}$ and every $j=1,\ldots,p+q$.
The notion of \emph{isomorphism} follows naturally. 

\medskip

Let $(\bb{E}, \rho, \gamma)$ be a  twisted bundle with a Clifford action. An invertible Hermitian map $H : \bb{E} \to \bb{E}$ is said to be \emph{compatible} with the twisted action if it holds true that
\begin{align*}
H \rho(g) \;=\; c(g) \rho(g) H\;, \qquad
H \gamma_j \;= \;- \gamma_j H.
\end{align*}
for every $g\in\n{G}$ and every $j=1,\ldots,p+q$.
If $H$ and $H'$ are invertible Hermitian maps on $(\bb{E}, \rho, \gamma)$ and $(\bb{E}', \rho', \gamma')$ respectively, such that $f \circ H = H' \circ f$ for a  given isomorphism $f : \bb{E} \to \bb{E}'$, then $H$ and $H'$ will be regarded as \emph{isomorphic}. Note that the self-adjoint involution $Q_H := H/\lvert H \rvert$ is nothing but a gradation, or a $\Z_2$-grading of the twisted bundle $\bb{E}$. Since $Q_H$ and $H$ are homotopic, we can generalize Karoubi's formulation of twisted equivariant $K$-theory \cite{donovan-karoubi-70,rosenberg-89,freed-hopkins-teleman-11,freed-moore-13,gomi-17} as follows:
\begin{definition}[Twisted equivariant K-theory]
Let $X$, $\phi$, $\sigma$ and $c$ be as above.
\begin{itemize}
\item[(a)]
We define ${}^\phi\mathcal{M}^{(\sigma, c) + (p, q)}_{\n{G}}(X)$ to be the monoid of isomorphism classes of triples $(\bb{E}, H_0, H_1)$ consisting of $(\phi, \sigma, c)$-twisted bundles $\bb{E}$ with $Cl_{p, q}$-action and two invertible Hermitian maps $H_0$ and $H_1$ on $\bb{E}$ compatible with the twisted actions.
\vspace{1mm}
\item[(b)]
We define ${}^\phi\mathcal{Z}^{(\sigma, c) + (p, q)}_{\n{G}}(X)$ to be the submonoid of ${}^\phi\mathcal{M}^{(\sigma, c) + (p, q)}_{\n{G}}(X)$ consisting of isomorphism classes of triples $(\bb{E}, H_0, H_1)$ such that $H_0$ and $H_1$ are homotopic within invertible Hermitian maps compatible with the twisted actions.
\vspace{1mm}

\item[(c)]
We define ${}^\phi K^{(\sigma, c) + (p, q)}_{\n{G}}(X) := {}^\phi\mathcal{M}^{(\sigma, c) + (p, q)}_{\n{G}}(X)/{}^\phi\mathcal{M}^{(\sigma, c) + (p, q)}_G(X)$ to be the quotient monoid. 

\end{itemize}
\end{definition}

In view  of the $(1, 1)$-periodicity ${}^\phi K^{(\sigma, c) + (p + 1, q + 1)}_{\n{G}}(X) \cong {}^\phi K^{(\sigma, c) + (p, q)}_{\n{G}}(X)$  \cite{gomi-17}, we put
\begin{equation}\label{diff-ind}
{}^\phi K^{(\sigma, c) + q - p}_{\n{G}}(X) \;: =\; {}^\phi K^{(\sigma, c) + (p, q)}_{\n{G}}(X)\;.
\end{equation}

\medskip

Let us now describe how from a band insulator 
 $H : X \to \rr{Herm}(\n{C}^{2m})^\times$ of class DIII 
on the involutive  space $(X,\tau)$ one gets an invertible Hermitian map on a twisted bundle.
Consider the group  $\n{G} = \Z_2=\{\pm1\}$ as generated by $\tau\equiv -1$, \ie  let $\n{G}$ act on $X$ through the involution. We choose $\phi : \n{G} \to \Z_2$ to be the \emph{identity} homomorphism  $\phi(\pm 1) = \pm1$ 
and $c : \n{G} \to \Z_2$ to be the \emph{trivial} homomorphism  $c(\pm 1) = 0$. These choices can be shortly summarized by $\phi=1$ and $c=0$.
Let  $\sigma : \n{G} \times \n{G} \to \n{U}(1)$ 
be the  $2$-cocycle (independent of the points of $X$)
defined by
\begin{align*}
\sigma(1, 1) \;=\; \sigma(-1, 1) \;=\; \sigma(1, -1) \;=\; 1\;, \qquad
\sigma(-1, -1) \;=\; -1\;.
\end{align*}
Let $\bb{E} = X \times \C^{2m}$ be the product Hermitian vector bundle on $X$ and define $\rho(g) : \bb{E} \to \bb{E}$ as follows:
\begin{equation}\label{eq:KR_01}
\rho(1) \;:\; (x, {\rm v})\; \longmapsto\; (x, {\rm v})\;, \qquad\;
\rho(-1)\;:\; (x, {\rm v}) \; \longmapsto\; (\tau(x), T\bar{{\rm v}})\;
\end{equation}
where $T$ denotes the TRS. 
We also define $\gamma : \bb{E} \to \bb{E}$ to be the unitary map 
defined by
\begin{equation}\label{eq:KR_02}
\gamma \;:\; (x, {\rm v})\; \longmapsto\;  (x, \ii \chi {\rm v})\;.
\end{equation}
where $\chi$ denotes the chiral symmetry.
Then $(\bb{E}, \rho, \gamma)$ is a $(\phi, \sigma)$-twisted vector bundle with $Cl_{0, 1}$-action. Finally, from  $H : X \to \rr{Herm}(\n{C}^{2m})^\times$ one gets the invertible Hermitian map $H : (x, {\rm v}) \mapsto (x, H(x) {\rm v})$ on $\bb{E}$ compatible with the twisted action. As a result one has that differences of band insulators of class DIII provide elements of the twisted equivariant K-theory ${}^\phi K^{\sigma + 1}_{\Z_2}(X)$
where we used the notation \eqref{diff-ind} and the reference to the trivial map $c=0$ has been omitted.
As a matter of fact
a $\phi$-twisted vector bundle is nothing but a \virg{Real} vector bundle, so that ${}^\phi K_{\Z_2}^{0}(X) = KR^0(X,\tau)$ \cite{atiyah-66}. Since the cocycle $\sigma$ has the effect of the degree shift by $-4$ one gets
\begin{equation}\label{eq:k-K}
{}^\phi K^{\sigma + 1}_{\Z_2}(X) \;=\; KR^{-3}(X,\tau)\;.
\end{equation}

\medskip

The next result provides the link between  the group $KR^{-3}(X,\tau)$ which classifies the (differences of) band insulators of class DIII and the  invariant of sewing matrices described in Definition \ref{dfn:invariant}. For that we need to introduce the homomorphism
\begin{equation}\label{eq:FKMM-KR}
\kappa\;:\;KR^{-3}(X,\tau)\; \longrightarrow\; H^1_{\Z_2}\big(X| X^\tau,\Z(1)\big)
\end{equation}
induced by the FKMM invariant \cite{denittis-gomi-14-gen,denittis-gomi-18-I,denittis-gomi-18-II}.
More precisely, the homomorphism $\kappa$ assigns to a band insulator of class DIII, thought of as a class in  $KR^{-3}(X,\tau)$, the FKMM invariant of the associated \virg{Quaternionic} vector bundle as described in Definition \ref{def:sew_mat}.

\begin{theorem} \label{thm:homomorphism_from_KR}
Let $(X,\tau)$ be  a path connected  finite $\Z_2$-CW complex which admits a fixed point.
Then:
\begin{itemize}
\item[(1)] The map \eqref{eq:FKMM-KR} induced by the FKMM invariant of sewing matrices provides a well-defined  homomorphism;

\vspace{1mm}
\item[(2)]  The invariant of sewing matrices introduced in Definition \ref{dfn:invariant} provides a well-defined homomorphism
$$
\nu\; :\; {\rm Ker}(\kappa) \; \longrightarrow\; H^1(X,\Z)/H^1_{\Z_2}\big(X| X^\tau,\Z(1)\big)\;.
$$
\end{itemize}
\end{theorem}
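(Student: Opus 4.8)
The plan is to combine the general construction of $\nu_q$ from Definition~\ref{dfn:invariant} with the dictionary between band insulators of class DIII and the twisted equivariant $K$-theory ${}^\phi K^{\sigma+1}_{\Z_2}(X) = KR^{-3}(X,\tau)$ established above, and then to check that the additivity and stability properties recorded in Theorem~\ref{thm:invariant} force $\nu$ to descend to a homomorphism on $\mathrm{Ker}(\kappa)$. For part~(1), the point is that the FKMM invariant of a \virg{Quaternionic} vector bundle is additive under direct sums and vanishes on trivial bundles (a standard property recalled in \cite{denittis-gomi-14-gen,denittis-gomi-18-I}); since a class in $KR^{-3}(X,\tau)$ is represented by a formal difference $[H_0]-[H_1]$ of band insulators, and each $H_i$ carries an associated \virg{Quaternionic} bundle $\bb{E}_{q_{H_i}}$ as in Definition~\ref{def:sew_mat}, one sets $\kappa([H_0]-[H_1]) := \kappa(\bb{E}_{q_{H_0}}) - \kappa(\bb{E}_{q_{H_1}})$ and must verify that this is well-defined (independent of the representative and of the chosen standard form) and additive. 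Well-definedness on homotopies uses the homotopy invariance of the FKMM invariant \cite[Theorem 2.3]{denittis-gomi-14-gen}; independence of the standard form follows because Proposition~\ref{prop:st_rep} shows any two standard forms differ by an automorphism of the form in Lemma~\ref{lemma:diag_form}, and the resulting change in the sewing matrix is of the type $q \mapsto \phi(\tau(x))^{\mathtt{t}} q(x) \phi(x)$, which changes $\kappa$ only by a coboundary (here one uses that $\det[\phi_V] : X \to \n{U}(1)$ is itself an equivariant map, so its contribution lands in the image of $H^1_{\Z_2}(X,\Z(1))$ and dies in the relevant quotient).

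For part~(2), the strategy is: given a class $\xi \in \mathrm{Ker}(\kappa)$, pick a representative $\xi = [H_0] - [H_1]$; by stabilizing (adding copies of the trivial insulator $q_0$, which by Theorem~\ref{thm:invariant}(2) has $\nu_{q_0}=0$ and trivial FKMM invariant) arrange that $H_0$ and $H_1$ have sewing matrices $q_{H_0}, q_{H_1}$ of the same size $\n{U}(2n)$. Since $\kappa(\xi) = 0$ means $\kappa(\bb{E}_{q_{H_0}}) = \kappa(\bb{E}_{q_{H_1}})$ in $H^2_{\Z_2}(X| X^\tau,\Z(1))$, and using the long exact sequence for the pair together with the discussion after Assumption~\ref{assumption} (the FKMM invariant maps to the \virg{Real} Chern class of the determinant line bundle), one can argue that after a further stabilization both $\bb{E}_{q_{H_0}}$ and $\bb{E}_{q_{H_1}}$ satisfy Assumption~\ref{assumption}(c) individually, hence $\nu_{q_{H_0}}$ and $\nu_{q_{H_1}}$ are defined. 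Then set
$$
\nu(\xi)\; :=\; \nu_{q_{H_0}}\; -\; \nu_{q_{H_1}}\; \in\; H^1(X,\Z)/H^1_{\Z_2}\big(X| X^\tau,\Z(1)\big)\;.
$$
Additivity of $\nu$ is then immediate from Theorem~\ref{thm:invariant}(3) applied to direct sums.

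The bulk of the work — and the main obstacle — is proving that this formula is \emph{independent of the chosen representative and of the stabilization}. Two issues must be disentangled. First, if $[H_0]-[H_1] = [H_0']-[H_1']$ in $KR^{-3}(X,\tau)$, then after adding a common insulator the pairs $(H_0 \oplus H_1', \ldots)$ and $(H_0' \oplus H_1, \ldots)$ become homotopic through band insulators of class DIII; one must show such a homotopy induces an equality of the associated $\nu_q$'s, which is exactly Theorem~\ref{thm:invariant}(1) once one checks that the homotopy can be taken through sewing matrices satisfying Assumption~\ref{assumption} (the FKMM invariant of the homotopy bundle over $X \times [0,1]$ is the pullback of a trivial class, hence trivial, as in the proof of Theorem~\ref{thm:invariant}(1)). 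Second — and this is the subtle point flagged in the introduction — $\nu_{q_H}$ depends on the choice of standard form for the symmetries $T,C$, changing by a factor $\det[h]$ under $q \mapsto h(\tau(x))^{\mathtt{t}} q h(x)$ by Theorem~\ref{thm:invariant}(4). The key observation that rescues the construction is that in the \emph{difference} $\nu_{q_{H_0}} - \nu_{q_{H_1}}$ one uses the \emph{same} identification of the symmetries for both $H_0$ and $H_1$ (this is why $\nu$ is only a relative invariant, not an absolute one); a change of standard form multiplies \emph{both} terms by the same $\det[h]$, and since $H^1(X,\Z)$ is abelian and the quotient is by a subgroup, the translations by $\det[h]$ cancel in the difference — more precisely, by Theorem~\ref{thm:invariant}(4) the difference $\nu_{q_{H_0}} - \nu_{q_{H_1}}$ is computed as $[\,p_0 \overline{p_1}\,]$ for $p_i \in C(X,\n{U}(1))_{q_{H_i}}$, and replacing $p_i$ by $\det[h]\, p_i$ leaves $p_0 \overline{p_1}$ unchanged. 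One also checks compatibility with the $(1,1)$-periodicity \eqref{diff-ind} used to define the degree, and that adding a trivial insulator $q_0$ contributes $\nu_{q_0}=0$, so stabilization does not change the answer. Assembling these pieces shows $\nu$ is a well-defined group homomorphism on $\mathrm{Ker}(\kappa)$, completing the proof.
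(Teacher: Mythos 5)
Your overall architecture coincides with the paper's: represent a class of $KR^{-3}(X,\tau)$ by a triple on a product bundle in standard form (using a complementary twisted bundle), define $\kappa$ and $\nu$ as differences of the invariants of the two sewing matrices, and invoke the four parts of Theorem~\ref{thm:invariant} for homotopy invariance, additivity, and the cancellation of the $\det[h]$ ambiguity in the difference. There is, however, one genuine gap, precisely at the step you yourself flag as requiring ``further stabilization''. For $\nu_{q_{H_0}}$ and $\nu_{q_{H_1}}$ to be individually defined you need each of $\kappa(\bb{E}_{q_{H_0}})$ and $\kappa(\bb{E}_{q_{H_1}})$ to vanish, whereas $\kappa(\xi)=0$ only gives $\kappa(\bb{E}_{q_{H_0}})=\kappa(\bb{E}_{q_{H_1}})$, possibly both nonzero. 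To repair this you must exhibit a concrete band insulator $H_2$, carried by a product bundle with its own sewing matrix, whose FKMM invariant equals $-\kappa(\bb{E}_{q_{H_0}})$, and then add the pair $(H_2,H_2)$ to the triple; appealing to ``the long exact sequence'' does not by itself produce such an object. The paper's proof does this explicitly by setting $q_2(x):=q_{H_0}(x)^{-1}$ (which is again a sewing matrix), computing ${\rm Pf}[q_2(x)]=(-1)^n\,{\rm Pf}[q_{H_0}(x)]^{-1}$ on $X^\tau$ and taking $p_2:=(-1)^n p_0^{-1}$, whence $\kappa(\bb{E}_{q_2})=-\kappa(\bb{E}_{q_{H_0}})$ by the representative description of the FKMM invariant recalled in the proof of Lemma~\ref{lemma_01}. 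Without some such construction, $\nu$ is not defined on all of ${\rm Ker}(\kappa)$ and the identification of ${\rm Ker}(\kappa)$ with a quotient of the submonoid of triples with both FKMM invariants trivial is unsupported.

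A second, minor point: in part (1) your justification that a change of standard form ``changes $\kappa$ only by a coboundary\dots which dies in the relevant quotient'' is off target. The change $q\mapsto \phi(\tau(x))^{\mathtt{t}}\,q(x)\,\phi(x)$ is exactly an isomorphism of the associated \virg{Quaternionic} vector bundles, and the FKMM invariant is an isomorphism invariant, so $\kappa(\bb{E}_q)$ is literally unchanged; there is no quotient of $H^2_{\Z_2}(X|X^\tau,\Z(1))$ in play here. The quotient/cancellation mechanism you describe (with $\det[\phi_V]$ contributing a class that cancels) is the correct one for $\nu$ in part (2), not for $\kappa$ in part (1). Apart from these two points, your treatment of representative-independence through common stabilization and homotopy, and the cancellation of $\det[h]$ in the difference $\nu_{q_{H_0}}-\nu_{q_{H_1}}$, match the paper's argument.
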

\proof
Let $(\bb{E}, H_0, H_1)$ be a triple representing an element of $KR^{-3}(X,\tau)$. In general, a twisted graded vector bundle on a compact space equivariant under a finite group admits a complementary twisted bundle $\bb{F}$ such that $\bb{E}\oplus \bb{F}\simeq X \times \s{H}$ is isomorphic to a product bundle for some finite dimensional Hilbert space $\s{H}$ and the twisted actions are independent of the points on $X$  (this fact is a straightforward generalization of a property of the usual equivariant vector bundles). Let $H_F:\bb{F}\to\bb{F}$ be an invertible Hermitian map. In view of the definition of $KR^{-3}(X,\tau)$ as a quotient monoid, $(\bb{E}, H_0, H_1)$ and $(\bb{E}\oplus \bb{F}, H_0 \oplus H_F, H_1 \oplus H_F)$ provide the same class in the $K$-theory. Thus, we can assume that $\bb{E} = X \times \s{H}$ with  twisted actions  independent of points on $X$ since  the beginning.
 Therefore, as shown in Proposition \ref{prop:st_rep}, we can represent the twisted actions on $X \times \s{H}$ in a standard form, and express $H_0$ and $H_1$ in terms of sewing matrices $q_0:=q_{H_0}$ and $q_1:=q_{H_1}$, respectively. \\
 (1) With the argument above (and the identification $\s{H}\equiv\C^{4n}$ in view of Proposition \ref{prop:4-dim}) one can specify the action of the map \eqref{eq:FKMM-KR} as follows
\begin{equation}\label{eq:def_k_KR}
 \kappa\big([X\times \C^{4n}, H_0, H_1]\big)\;:=\;\kappa(\bb{E}_{q_0})\;-\;
\kappa(\bb{E}_{q_1}) 
\end{equation}
  where $\bb{E}_{q_i}$ is the \virg{Quaternionic} vector bundle associated with the sewing matrix $q_i$, with $i=0,1$. Notice that this definition is independent of the way to identify the
standard presentation. In fact if one changes the identification, one obtains  new
sewing matrices $q'_i$ which  are related to   $q_i$ via a continuous map $\phi:X\mapsto \n{U}(2n)$ according to 
equation \eqref{eq:inter_01}. This map induces an isomorphism
of the associated \virg{Quaternionic} vector bundles $\bb{E}_{q_i'}\simeq \bb{E}_{q_i}$. 
Since the FKMM
invariant takes the same value on isomorphic \virg{Quaternionic} bundles, it follows that equation \eqref{eq:def_k_KR} is  insensitive on the way of fixing the standard presentation.
Furthermore, if $H_0$ and $H_1$ are homotopic, then so are $q_0$ and $q_1$. In this case the
homotpy invariance of \virg{Quaternionic} bundles imples $\kappa(\bb{E}_{q_0})=
\kappa(\bb{E}_{q_1}) $, and in turn one gets $ \kappa([X\times \C^{4n}, H_0, H_1])=0$
for the trivial class. This fact, along with the  additivity of the
FKMM invariant under the direct sum of \virg{Quaternionic} bundles,
proves that   \eqref{eq:FKMM-KR} is a well-defined homomorphism.\\
 (2)  Let us start by showing that when $[X\times \C^{4n}, H_0, H_1]\in  {\rm Ker}(\kappa)$ then it is possible to choose the sewing matrices $q_i$ in such a way that $\kappa(\bb{E}_{q_i})=0$ for $i=0,1$. To see this observe that $[X\times\C^{4n}, H_0, H_1]\in  {\rm Ker}(\kappa)$ implies 
 $\kappa(\bb{E}_{q_0})=
\kappa(\bb{E}_{q_1}) $. As described in the proof of Lemma \ref{lemma_01}, the 
 FKMM
invariant $\kappa(\bb{E}_{q_i})$ can be represented by the map
 $X^\tau\ni x\mapsto p_i(x)/{\rm Pf}[q_i(x)]\in \Z_2$ where $p_i:X\to \n{U}(1)$ is such that 
 $\det [q_i(x)] = p_i(\tau(x)) p_i(x)$ for all $x \in X$. Let $q_2:X\to\n{U}(2n)$ be the sewing matrix defined by $q_2(x):=q_0(x)^{-1}$. This provides ${\rm Pf}[q_2(x)]=(-1)^n{\rm Pf}[q_0(x)]^{-1}$ for all $x\in X^\tau$. If we define $p_2:X\to \n{U}(1)$  by
 $p_2(x):=(-1)^np_0(x)^{-1}$ for all $x \in X$, then one gets 
  $\det [q_2(x)] = p_2(\tau(x)) p_2(x)$ for all $x \in X$ and the  FKMM
invariant $\kappa(\bb{E}_{q_2})$ is represented by the map
 $$
 X^\tau\;\ni\; x\;\longmapsto \;\frac{p_2(x)}{{\rm Pf}[q_2(x)]}\;=\;\left(\frac{p_0(x)}{{\rm Pf}[q_0(x)]}\right)^{-1}\;\in\;\Z_2\;.
 $$
 This means that $\kappa(\bb{E}_{q_2})=-\kappa(\bb{E}_{q_0})$. Now, let us introduce the 
 Hamiltonian $H_2:{X}\to \rr{Herm}(\C^{4n})^\times$ given by 
$$
H_2(x)\;:=\;
\left(\begin{array}{cc}0 & q_2(x)^{-1}\\
q_2(x) & 0\end{array}\right)\;,\qquad x\in X\;,
$$
and the new triple $(X\times \s{H}', H_0', H_1')$ defined by
$$
\begin{aligned}
(X\times \s{H}', H_0', H_1')\;:&=\;(X\times \C^{4n}, H_0, H_1)\;\oplus\;(X\times \C^{4n}, H_2, H_2)\\
&=\; \big(X\times (\C^{4n}\oplus \C^{4n}), H_0\oplus H_2, H_1\oplus H_2\big)\;.
\end{aligned}
$$
The classes $[X\times \C^{4n}, H_0, H_1]$ and $[X\times \s{H}', H_0', H_1']$ represent the same element in ${\rm Ker}(\kappa)$ and
$$
\begin{aligned}
\kappa(\bb{E}_{q_0'})\;&=\;\kappa(\bb{E}_{q_0}\oplus\bb{E}_{q_2})\;=\;\kappa(\bb{E}_{q_0})+\kappa(\bb{E}_{q_2})\;=\;0\\
\kappa(\bb{E}_{q_1'})\;&=\;\kappa(\bb{E}_{q_1}\oplus\bb{E}_{q_2})\;=\;\kappa(\bb{E}_{q_1})+\kappa(\bb{E}_{q_2})\;=\;\kappa(\bb{E}_{q_0})+\kappa(\bb{E}_{q_2})\;=\;0\;.
\end{aligned}
$$
As a consequence, we can show that
$$
\mathrm{Ker}(\kappa) = {}^\phi\mathcal{N}^{\sigma + (0, 1)}_{\Z_2}(X)/
\left({}^\phi\mathcal{N}^{\sigma + (0, 1)}_{\Z_2}(X) 
\;\cap\; 
{}^\phi\mathcal{Z}^{\sigma + (0, 1)}_{\Z_2}(X)\right)\;,
$$
where the submonoid ${}^\phi\mathcal{N}^{\sigma + (0, 1)}_{\Z_2}(X) \subset {}^\phi\mathcal{M}^{\sigma + (0, 1)}_{\Z_2}(X)$ is defined by
$$
{}^\phi\mathcal{N}^{\sigma + (0, 1)}_{\Z_2}(X)
\;:=\; \left.\left\{ 
(X \times \mathcal{H}, H_0, H_1) \in 
{}^\phi\mathcal{M}^{\sigma + (0, 1)}_{\Z_2}(X)\; \right|\;
\kappa(\mathcal{E}_{q_0}) = \kappa(\mathcal{E}_{q_1}) = 0 
\right\}\;.
$$
Now, let us define the map $\nu$ which associates to  
$(X\times \s{H}, H_0, H_1)\in{}^\phi\mathcal{N}^{\sigma + (0, 1)}_{\Z_2}(X)$ the difference of the invariants of the sewing matrices $q_0$ and $q_1$, \ie
\begin{equation}\label{map_nu}
\nu\big((X\times \s{H}, H_0, H_1)\big)\;:=\;\nu_{q_0} - \nu_{q_1}\;.
\end{equation}
The rest of the proof consists in proving that the map $\nu$ as defined by \eqref{map_nu} provides a well-defined homomorphism
on the group ${\rm Ker}(\kappa)$.
First of all, let us notice that the sewing matrices $q_0,q_1$ depend on the choice of the isomorphism $X \times \s{H} \simeq X \times \C^{4n}$  which makes the twisted actions in the standard form as discussed in Lemma \ref{lemma:diag_form}. By combining the notation introduced in \eqref{eq:KR_01} and \eqref{eq:KR_02} with 
Proposition \ref{prop:st_rep}, one gets that the 
 standard actions are given by
$$
\rho(-1) \;:\; (x, {\rm v})\; \longmapsto\; (x, T{\rm v})\;=\;\left(x, 
\left(
\begin{array}{cc}
0 & -{\bf 1}_{2n} \\
+{\bf 1}_{2n} & 0
\end{array}
\right)
\overline{\rm v}\right)
$$
and 
$$
\gamma \;:\; (x, {\rm v})\; \longmapsto\; (x, \ii \chi {\rm v})\;=\;\left(x, 
\left(
\begin{array}{cc}
  +\ii {\bf 1}_{2n}&0 \\
0&-\ii{\bf 1}_{2n} 
\end{array}
\right)
{\rm v}\right)\;.
$$
Moreover, every continuous map $f : X \to \n{U}(2n)$ provides the automorphism $\psi_f$ of $X \times \C^{4n}$ given by
$$
\psi_f\;:\;(x, {\rm v})\; \longmapsto\;\left(x, 
\left(
\begin{array}{cc}
  f(x)&0 \\
0&\overline{f(\tau(x))} 
\end{array}
\right)
{\rm v}\right)\;.
$$
Let $\phi:X \times \s{H} \cong X \times \C^{4n}$ and
$\phi':X \times \s{H} \cong X \times \C^{4n}$ be two standard representation intertwined by the automorphism $\psi_f$, \ie
$\phi=\psi_f\circ \phi'$. Let $q_0,q_1$ be the sewing matrices associated to $H_0$ and $H_1$, respectively, via the isomorphism 
$\phi$. Similarly, let $q'_0,q'_1$ be the related sewing matrices 
obtained via the isomorphism 
$\phi'$. A straightforward computation shows that 
$$
q'_j(x) \;=\; f(\tau(x))^\mathtt{t}\; q_j(x)\;f(x)\;,\;\qquad\; j=0,1\;.
$$
By using Theorem \ref{thm:invariant} (4) one gets the following fact: If $\nu_{q_j}$ is represented by $p_j \in C(X, \n{U}(1))_{q_j}$, then $\nu_{q'_j}$ is represented by $p_j'=p_j \det [f]$. In terms of classes in $H^1(X,\Z) = [X, U(1)]$, one gets 
$$
\begin{aligned}
{[p_0']} - [p_1']\;&=\;\big[p_0 \det [f]\big] - \big[p_1 \det [f]\big]\\
&=\; \big([p_0] + \big[ \det [f]\big]\big) - \big([p_1] + \big[ \det [f]\big]\big) 
\;=\; [p_0] - [p_1]\;.
\end{aligned}
$$
Therefore $\nu_{q'_0} - \nu_{q'_1} = \nu_{q_0} - \nu_{q_1}$ in $H^1(X,\Z)/H^1_{\Z_2}(X| X^\tau,\Z(1))$, and 
the map $\nu$ defined by \eqref{map_nu} turns out to be 
independent of the choice of the standard representation $X \times \s{H} \simeq X \times \C^{4n}$.
This also implies that  $\nu(X \times \s{H}, H_0, H_1) = \nu(X \times \s{H}', H'_0, H'_1)$ whenever the triples $(X \times \s{H}, H_0, H_1)$ and $(X \times \s{H}', H'_0, H'_1)$ are isomorphic.
  This leads to a map 
  $$
  \nu \;: \;{}^\phi\mathcal{N}^{\sigma + (0, 1)}_{\Z_2}(X)\; \longrightarrow \;H^1(X,\Z)/H^1_{\Z_2}(X| X^\tau,\Z(1))\;,
  $$
  which is a monoid homomorphism by Theorem \ref{thm:invariant} (2) and (3).
 By Theorem \ref{thm:invariant} (1), the monoid homomorphism is trivial on the submonoid ${}^\phi\mathcal{Z}^{\sigma + (0, 1)}_{\Z_2}(X)\cap{}^\phi\mathcal{N}^{\sigma + (0, 1)}_{\Z_2}(X)
 $. Consequently, we get a well-defined homomorphism $\nu$ on the group ${\rm Ker}(\kappa)$. Equation \eqref{eq:k-K} completes the proof. 
\qed

\begin{remark}\label{rk:obos-S1}
It is well known that the involutive space $({\n{S}}^1,\iota)$ admits only the trivial \virg{Quaternionic} vector bundle \cite[Theorem 1.2 (ii)]{denittis-gomi-14-gen}. As a consequence one gets that ${\rm Ker}(\kappa) =K^{-3}({\n{S}}^1,\iota)$ in this special case. The properties of the homomorphism $\nu$ on $K^{-3}({\n{S}}^1,\iota)$ will be studied in Proposition \ref{prop:iso-one}.
 \hfill $\blacktriangleleft$
\end{remark}

\begin{remark}
As pointed out in  the proof of Theorem \ref{thm:homomorphism_from_KR}, the sewing matrix $q$ depends on the way to express the twisted bundle $X \times \s{H}$ in a standard form. If we change the isomorphism by an automorphism associated to $f : X \to \n{U}(m)$, then the sewing matrix $q$ is changed into a sewing matrix $q'$ and the relation between the two 
sewing matrices is expressed by the equation $q'(x) = f(\tau(x))^\mathtt{t}  q (x)f(x)$.
The difference $\nu_{q'} - \nu_q$ is represented by $[\det [f]]$
in the group $H^1(X,\Z)/H^1_{\Z_2}(X| X^\tau,\Z(1))$. 
Since $\det : [X, \n{U}(m)] \to [X, \n{U}(1)]$ is surjective one has that all the elements in $H^1(X,\Z)/H^1_{\Z_2}(X| X^\tau,\Z(1))$ arise just by changing the isomorphism $X \times \s{H} \cong X \times \C^{2m}$. This means that we cannot define an \emph{absolute} invariant of a band insulator $H$ of class DIII by using the invariant $\nu_q$ of its sewing matrix $q$. Notice, however, that the invariant is able to detect the \emph{difference} of two band insulators, as established in Theorem \ref{thm:homomorphism_from_KR}. In other words  $\nu_q$ works as 
\emph{relative} invariant for band insulators   of class DIII. 
 \hfill $\blacktriangleleft$
\end{remark}

\begin{remark}\label{rk:obos}
A map $\phi:X\to\n{U}(1)$ over the involutive  space $(X,\tau)$ is called \emph{invariant} if $\phi=\phi\circ\tau$.
The equivariant cohomology $H^1_{\Z_2}(X; \Z)$ is isomorphic to the group $[X, \n{U}(1)]_{\mathrm{inv}}$ of homotopy classes of invariant maps (see Appendix \ref{sec:eq:cohom_homot}).  Thus, in general, $[\det[q]] \in H^1_{\Z_2}(X,\Z)$ is an invariant of the homotopy class of a sewing matrix $q : X \to \n{U}(2n)$. By a proof similar to that of Theorem \ref{thm:homomorphism_from_KR}, one can show that this invariant defines a homomorphism
$$
\det \;:\; KR^{-3}(X,\tau)\; \longrightarrow\; H^1_{\Z_2}(X,\Z)\;.
$$
However, this invariant does not detect  $KR^{-3}({\n{S}}^1,\iota)\simeq\Z_2$ since $H^1_{\Z_2}(\n{S}^1,\Z) = 0$. This observation
is in agreement with the content item (II) presented in Section \ref{sect:intro}.
 \hfill $\blacktriangleleft$
\end{remark}

\section{The one-dimensional case}\label{sec:one-dim-case}
In this section we will analyze in detail the case of  
\emph{one-dimensional} band insulators of class DIII. According to Definition \ref{def:int_01}, the latter are band insulators of class DIII defined on the involutive space $({\n{S}}^1,\iota)$. This peculiar case is relevant for its application to  physical systems. In fact, in this section we will show the coincidence of the topological invariant $\nu_q$ described in Definition \ref{dfn:invariant}
 with the topological invariant for topological insulators of class DIII usually used in the physical literature \cite{ryu-schnyder-furusaki-ludwig-10,teo-kane-10,budich-ardonne-13,chiu-teo-schnyder-ryu-16}. For the benefit of the reader we report here a table of some cohomology groups 
of the involutive space  $(\n{S}^1,\iota)$ which will be useful in the following: 
$$
\begin{array}{|c|c|c|c|c|c|}
\hline
n& H^n(\n{S}^1,\Z) & H^n_{\Z_2}(\n{S}^1,\Z) & H^n_{\Z_2}(\n{S}^1,\Z(1)) & H^n_{\Z_2}((\n{S}^1)^\iota,\Z(1))& H^n_{\Z_2}(\n{S}^1|(\n{S}^1)^\iota,\Z(1)) \\
\hline
0 & \Z & \Z & 0 & 0 & 0 \\
\hline
1 & \Z & 0 & \Z \oplus \Z_2 & \Z_2 \oplus \Z_2&\Z \\
\hline
2 & 0 & {\Z_2\oplus\Z_2}  & 0 & 0& 0 \\
\hline
3 & 0 & 0 & {\Z_2\oplus\Z_2} & {\Z_2\oplus\Z_2}& \\
\hline
\end{array}
$$
The proof of the results listed in the table  can be found in \cite[Lemma 2.12]{gomi-15}, in 
\cite[Section 5.3]{denittis-gomi-14} and in \cite[Appendix B]{denittis-gomi-18-I}. Let us also recall that 
$H^1(\n{S}^1,\Z) \simeq [\n{S}^1, \n{U}(1)]$ is generated by the identity map under the identification $ \n{U}(1)\simeq \n{S}^1$. Similarly, in view of  $H^1_{\Z_2}(\n{S}^1,\Z(1)) \simeq[\n{S}^1, \n{U}(1)]_{\Z_2}$ \cite[Proposition A.2]{gomi-15} one gets that the free subgroup $\Z$ is generated by the identity map under the identification of $\n{U}(1)$, endowed with the complex conjugation, with the involutive space $(\n{S}^1,\iota)$. The  torsion subgroup $\Z_2$ is generated by the constant map with  value $-1$. All \virg{Real} and \virg{Quaternionic} vector bundles on $(\n{S}^1,\iota)$ are trivial as proved in \cite[Theorem 1.6]{denittis-gomi-14} and \cite[Theorem 1.2]{denittis-gomi-14-gen}. 
Finally, let us observe that  $(\n{S}^1,\iota)$
satisfies the properties (a), (b) and (c) listed in Assumption \ref{assumption}.

\subsection{The Teo-Kane formula}\label{sec:Teo-Kane}
In this section we will provide an explicit formula to compute the invariant $\nu_q$ in the one-dimensional case.
In particular we will establish the connection with the \emph{Teo-Kane formula}, the latter being the 
topological invariant commonly used in the physical literature 
for one-dimensional for topological insulators of class DIII (see \eg \cite[eq. (4.27)]{teo-kane-10} or \cite[eq. (8)]{budich-ardonne-13} or \cite[eq. (3.70)]{chiu-teo-schnyder-ryu-16}).

\medskip

Let us start with a couple of preliminary results about the  
topology of the maps which define the invariant $\nu_q$.
The first result concerns with the analysis
 of the range of the invariant $\nu_q$.
\begin{lemma}\label{lemma:1D-range}
Consider the involutive space $(\n{S}^1,\iota)$.
Let $\delta:C(\n{S}^1, \n{U}(1))\to \Z_2$ be the map defined by $\delta(p):=(-1)^{{\rm deg}(p)}$ where ${\rm deg}(p)$ denotes the degree (or winding number) of $p$ as a map $p:\n{S}^1\to\n{U}(1)\simeq\n{S}^1$.
Then $\delta$  induces the isomorphism of groups
\begin{equation}\label{eq:target_01}
H^1(\n{S}^1,\Z)/H^1_{\Z_2}\big(\n{S}^1|(\n{S}^1)^\iota,\Z(1)\big) \;\simeq\;\Z_2\;.
\end{equation}
\end{lemma}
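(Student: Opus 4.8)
The plan is to compute both groups appearing in the quotient \eqref{eq:target_01} explicitly and then identify the quotient map with $\delta$. From the table of cohomology groups, $H^1(\n{S}^1,\Z) \simeq \Z$, generated by the identity map $\mathrm{id} : \n{S}^1 \to \n{U}(1)$ (a degree-one map), and $H^1_{\Z_2}(\n{S}^1|(\n{S}^1)^\iota,\Z(1)) \simeq \Z$. So the quotient in \eqref{eq:target_01} is a quotient of $\Z$ by (the image of) $\Z$, and to show this quotient is $\Z_2$ it suffices to show that the composite homomorphism
$$
H^1_{\Z_2}\big(\n{S}^1|(\n{S}^1)^\iota,\Z(1)\big)\; \longrightarrow\; H^1_{\Z_2}\big(\n{S}^1,\Z(1)\big)\; \longrightarrow\; H^1(\n{S}^1,\Z)
$$
sends a generator of the left-hand $\Z$ to $\pm 2$ times the generator of $H^1(\n{S}^1,\Z) \simeq \Z$.

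First I would unwind the second arrow. Under $H^1_{\Z_2}(\n{S}^1,\Z(1)) \simeq [\n{S}^1, \n{U}(1)]_{\Z_2}$, the forgetful map to $H^1(\n{S}^1,\Z) \simeq [\n{S}^1,\n{U}(1)]$ is just "forget equivariance", i.e. record the degree. As recalled in the text, the free part $\Z$ of $H^1_{\Z_2}(\n{S}^1,\Z(1)) \simeq \Z \oplus \Z_2$ is generated by the identity map $\n{S}^1 \to \n{U}(1)$, which is equivariant for the conjugation involution on $\n{U}(1)$ and has degree $1$; the torsion $\Z_2$ is generated by the constant map $-1$, which has degree $0$. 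So the forgetful map $H^1_{\Z_2}(\n{S}^1,\Z(1)) \to H^1(\n{S}^1,\Z)$ kills the $\Z_2$ factor and is multiplication by $1$ on the free $\Z$; in particular it is surjective.

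The crux is therefore the first arrow, $H^1_{\Z_2}(\n{S}^1|(\n{S}^1)^\iota,\Z(1)) \to H^1_{\Z_2}(\n{S}^1,\Z(1))$, and here I expect the main obstacle to lie. I would use the long exact sequence of the pair displayed after Assumption \ref{assumption},
$$
0\; \to\;
H^1_{\Z_2}\big(\n{S}^1| (\n{S}^1)^\iota,\Z(1)\big)\; \to\;
H^1_{\Z_2}\big(\n{S}^1,\Z(1)\big)\;\to\;H^1_{\Z_2}\big((\n{S}^1)^\iota,\Z(1)\big)\;,
$$
together with the table entries $H^1_{\Z_2}(\n{S}^1,\Z(1)) \simeq \Z \oplus \Z_2$ and $H^1_{\Z_2}((\n{S}^1)^\iota,\Z(1)) \simeq \Z_2 \oplus \Z_2$ (one $\Z_2$ for each of the two fixed points $0,\pi$). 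The restriction map $\Z \oplus \Z_2 \to \Z_2 \oplus \Z_2$ must be injective on the relative group's cokernel, and since the relative group is $\Z$ by the table, the image of $H^1_{\Z_2}(\n{S}^1|(\n{S}^1)^\iota,\Z(1)) \to H^1_{\Z_2}(\n{S}^1,\Z(1)) \simeq \Z \oplus \Z_2$ must be an index-$4$ subgroup that injects as $\Z$; concretely it is generated by $2$ times the free generator (the element represented by the squared identity map $k \mapsto \expo{2\ii k}$, which restricts to $1$ at both fixed points $0,\pi$, hence lifts to the relative group) — the factor $2$ being exactly what makes the image in $H^1_{\Z_2}((\n{S}^1)^\iota,\Z(1))$ trivial. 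Pushing forward through the forgetful map, a generator of $H^1_{\Z_2}(\n{S}^1|(\n{S}^1)^\iota,\Z(1))$ lands on $\pm 2 \in \Z \simeq H^1(\n{S}^1,\Z)$. Hence $H^1(\n{S}^1,\Z)/H^1_{\Z_2}(\n{S}^1|(\n{S}^1)^\iota,\Z(1)) \simeq \Z/2\Z$, and the class of $p$ is detected precisely by $\mathrm{deg}(p) \bmod 2$, i.e. by $\delta(p)$. A cross-check: the relative group is generated by an element whose image under the forgetful map is represented by $k \mapsto \expo{2\ii k}$, a map of degree $2$ and hence with $\delta = +1$, consistent with it being the identity of $\Z_2$. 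This establishes that $\delta$ descends to the claimed isomorphism \eqref{eq:target_01}.
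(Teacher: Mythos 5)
Your proposal is correct and follows essentially the same route as the paper: both use the long exact sequence of the pair $(\n{S}^1,(\n{S}^1)^\iota)$ together with the table of cohomology groups to identify the image of $H^1_{\Z_2}(\n{S}^1|(\n{S}^1)^\iota,\Z(1))$ in $H^1_{\Z_2}(\n{S}^1,\Z(1))$ as generated by the class of $k\mapsto \expo{2\ii k}$, and then push forward through the forgetful map to conclude that the image in $H^1(\n{S}^1,\Z)$ consists exactly of the even-degree classes. Your extra counting argument (that an index-$4$ subgroup of $\Z\oplus\Z_2$ isomorphic to $\Z$ with quotient $\Z_2\oplus\Z_2$ must be generated by twice the free generator) is a valid, slightly more explicit justification of the step the paper states more briefly.
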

\proof
As anticipated in  the introductory part of this section, $H^1_{\Z_2}(\n{S}^1,\Z(1)) \simeq \Z \oplus \Z_2$ is generated by the identity map $\n{S}^1 \to \n{U}(1)$ and the constant map with  value $-1$. 
From the exact sequence \cite[eq. (2.6)]{denittis-gomi-18-I}
$$
\begin{aligned}
0\;=\;H^0_{\Z_2}\big((\n{S}^1)^\iota,\Z(1)\big) &\stackrel{}{\longrightarrow}
H^1_{\Z_2}\big(\n{S}^1|(\n{S}^1)^\iota,\Z(1)\big)\stackrel{\jmath}{\longrightarrow}H^1_{\Z_2}\big(\n{S}^1,\Z(1)\big)\\ 
&\stackrel{r}{\longrightarrow} 
H^1_{\Z_2}\big((\n{S}^1)^\iota,\Z(1)\big) \longrightarrow 
H^2_{\Z_2}\big(\n{S}^1|(\n{S}^1)^\iota,\Z(1)\big)\;=\;0,
\end{aligned}
$$
one gets that the map $r$, induced by the restriction $\n{S}^1 \to (\n{S}^1)^\iota$, is surjective.
Moreover, one infers that the injective image of $H^1_{\Z_2}(\n{S}^1|(\n{S}^1)^\iota,\Z(1))$ under $\jmath$
 is generated by the equivariant map $e:\n{S}^1 \to \n{U}(1)$ given by $e(z):=\expo{ \ii 2k}$. 
Since the map $f:H^1_{\Z_2}(\n{S}^1,\Z(1)) \to H^1(\n{S}^1,\Z)$ 
which forgets the $\Z_2$-action is the projection which induces the bijection on the free part of the two groups, it follows that 
the image of 
$$
H^1_{\Z_2}\big(\n{S}^1|(\n{S}^1)^\iota,\Z(1)\big) \;\stackrel{\jmath}{\longrightarrow}\; H^1_{\Z_2}\big(\n{S}^1,\Z(1)\big) \;\stackrel{f}{\longrightarrow}\;H^1(\n{S}^1,\Z)
$$
is generated by maps $\n{S}^1 \to \n{U}(1)$ with even degree. This proves \eqref{eq:target_01} with the isomorphism  given by the parity of the degree
of the maps from $\n{S}^1$ to $\n{U}(1)$.
\qed

\medskip
 The next result provides a formula for the   degree of the equivariant maps  on the involutive space $(\n{S}^1,\iota)$ with value in $\n{U}(1)$. Let us  emphasize that  a map $r:\n{S}^1\to \n{U}(1)$ is equivariant if $r(\iota(k))=\overline{r(k)}$ for every $k\in \n{S}^1$.
\begin{lemma} \label{lem:degree_formula_on_circle}
Let $r : \n{S}^1 \to \n{U}(1)$ be an equivariant map defined on the involutive space $(\n{S}^1,\iota)$. Then, it holds true that 
$$
(-1)^{\deg (r)} = \frac{r(\pi)}{r(0)}\;.
$$
\end{lemma}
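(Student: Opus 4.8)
The statement is a parity-of-degree formula for equivariant maps $r:\n{S}^1\to\n{U}(1)$ satisfying $r(\iota(k))=\overline{r(k)}$, i.e. $r(-k)=\overline{r(k)}$. The plan is to lift $r$ to a continuous real-valued function on the real line and track how the equivariance constraint forces the total winding to have a fixed parity related to the endpoint values at the two fixed points $0$ and $\pi$.

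\textbf{Key steps.} First I would choose a continuous lift: since $\n{S}^1=\R/2\pi\Z$ and $\n{U}(1)=\{e^{\ii\theta}\}$, write $r(k)=\expo{\ii\vartheta(k)}$ for a continuous function $\vartheta:[0,2\pi]\to\R$, unique once we fix $\vartheta(0)$. By definition $\deg(r)=\frac{1}{2\pi}\big(\vartheta(2\pi)-\vartheta(0)\big)\in\Z$. Next I would exploit equivariance on the arc $[0,\pi]$ together with its image under $k\mapsto -k\equiv 2\pi-k$. The equivariance $r(2\pi-k)=\overline{r(k)}$ says that the portion of the path over $[\pi,2\pi]$ is the complex conjugate (mirror image) of the portion over $[0,\pi]$ traversed backwards; concretely, the lift satisfies $\vartheta(2\pi-k)=-\vartheta(k)+2\pi N$ for some fixed integer $N$ (the integer is constant by continuity and connectedness of $[0,\pi]$). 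Evaluating this at the two fixed points: at $k=0$ we get $\vartheta(2\pi)=-\vartheta(0)+2\pi N$, and at $k=\pi$ we get $\vartheta(\pi)=-\vartheta(\pi)+2\pi N$, hence $\vartheta(\pi)=\pi N$ and also $r(\pi)=\expo{\ii\pi N}=(-1)^N$. Meanwhile $\deg(r)=\frac{1}{2\pi}(\vartheta(2\pi)-\vartheta(0))=\frac{1}{2\pi}(-2\vartheta(0)+2\pi N)=N-\frac{\vartheta(0)}{\pi}$. Since $r(0)=\expo{\ii\vartheta(0)}$ is real (as $r(0)=\overline{r(0)}$), $\vartheta(0)\in\pi\Z$, say $\vartheta(0)=\pi M$, and $r(0)=(-1)^M$. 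Therefore $\deg(r)=N-M$, so $(-1)^{\deg(r)}=(-1)^{N}(-1)^{-M}=(-1)^N(-1)^M$. Finally, $r(\pi)/r(0)=(-1)^N/(-1)^M=(-1)^{N-M}=(-1)^{\deg(r)}$, which is the claim.

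\textbf{Main obstacle.} The only delicate point is justifying that the integer $N$ in the relation $\vartheta(2\pi-k)=-\vartheta(k)+2\pi N$ is genuinely constant in $k$: this follows because both $\vartheta(2\pi-k)+\vartheta(k)$ and the right-hand side vary continuously, their difference lies in $2\pi\Z$ at each point, and $[0,\pi]$ is connected, so the integer-valued continuous function $\frac{1}{2\pi}(\vartheta(2\pi-k)+\vartheta(k))$ is constant. Everything else is bookkeeping with lifts, so I expect the writeup to be short; I would phrase it entirely in terms of the lift $\vartheta$ to avoid any branch-cut ambiguity, and I would note explicitly at the start that $r(0)$ and $r(\pi)$ are automatically in $\{\pm1\}$ since $0$ and $\pi$ are the fixed points of $\iota$.
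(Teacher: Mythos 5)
Your proof is correct, and it takes a genuinely different route from the paper. The paper's argument is homotopy-theoretic: it first notes that the values $r(0),r(\pi)\in\{\pm1\}$ are invariants of the equivariant homotopy class, so that $\Delta(r)=r(\pi)/r(0)$ descends to a homomorphism $H^1_{\Z_2}(\n{S}^1,\Z(1))\to\Z_2$, and then it verifies $\Delta(r)=(-1)^{\deg(r)}$ only on the two explicit generators of $H^1_{\Z_2}(\n{S}^1,\Z(1))\simeq\Z\oplus\Z_2$ (the identity map and the constant $-1$), relying on the group computation recalled earlier in that section. Your argument instead lifts $r$ to a continuous $\vartheta:[0,2\pi]\to\R$ and extracts the identity $\vartheta(2\pi-k)+\vartheta(k)=2\pi N$ directly from equivariance; evaluating at $k=0$ and $k=\pi$ and writing $\vartheta(0)=\pi M$ gives $\deg(r)=N-M$, $r(\pi)=(-1)^N$, $r(0)=(-1)^M$, and the formula follows. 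The one delicate point — constancy of $N$ — you handle correctly via continuity, integrality and connectedness. What each approach buys: yours is elementary and self-contained, needing neither the computation of $H^1_{\Z_2}(\n{S}^1,\Z(1))$ nor the identification of its generators, and it proves the statement pointwise for the actual map $r$ rather than for its homotopy class; the paper's is shorter given the cohomological machinery already in place and makes transparent why both sides are homomorphisms of the relevant group, which is the structural fact used downstream in Proposition \ref{prop:first-from-1D}.
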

\proof
The equivariance condition  implies that $r(k) = \pm 1$ at the fixed points $k = 0, \pi$. In particular, the value at each fixed point is invariant under equivariant homotopy deformations of $r$. Therefore, the assignment $\Delta: r \mapsto r(\pi)/r(0)$ induces a well-defined homomorphism $\Delta:  H^1_{\Z_2}(\n{S}^1,\Z(1)) \to \Z_2$ (still denoted with the same symbol).
By using the  explicit basis of $H^1_{\Z_2}(\n{S}^1,\Z(1))$, one can verify the equality $\Delta(r) = (-1)^{\deg (r)}$.
\qed

\medskip

With the help of Lemma \ref{lemma:1D-range} and Lemma \ref{lem:degree_formula_on_circle} we can provide a first formula for the invariant $\nu_q$.
\begin{proposition}[The degree fromula]\label{prop:first-from-1D}
Consider the involutive space $(\n{S}^1,\iota)$ along with a map 
 $q : \n{S}^1 \to \n{U}(2n)$ such that
$$
q\big(\iota(k)\big)\;=\;-q(k)^\mathtt{t}\;,\qquad \forall\; k\in \n{S}^1\;.
$$
Then, the topological invariant $\nu_q$ of the map $q$ described in Definition \ref{dfn:invariant} can be computed with the formula
$$
\nu_q\; =\; (-1)^{\deg (p)}\;\in\;\Z_2\;,
$$
independently of  $p \in C(\n{S}^1, \n{U}(1))_q$. 
\end{proposition}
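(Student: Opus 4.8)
The strategy is to simply unwind the definition of $\nu_q$ in the case $X=\n{S}^1$ and feed it through the identification of the target group with $\Z_2$ provided by Lemma~\ref{lemma:1D-range}. First I would record that $(\n{S}^1,\iota)$ satisfies Assumption~\ref{assumption} — properties (a) and (b) are immediate and (c) holds because every \virg{Quaternionic} vector bundle over $(\n{S}^1,\iota)$ is trivial, so its FKMM invariant vanishes — hence $\nu_q$ is well defined. By Lemma~\ref{lemma_01}(1) the set $C(\n{S}^1,\n{U}(1))_q$ is non-empty, so a representative $p$ exists; by Definition~\ref{dfn:invariant}, $\nu_q$ is, by construction, the image of the class $[p]\in H^1(\n{S}^1,\Z)$ in the quotient $H^1(\n{S}^1,\Z)/H^1_{\Z_2}(\n{S}^1|(\n{S}^1)^\iota,\Z(1))$.

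Next I would invoke Lemma~\ref{lemma:1D-range}: the map $\delta(p)=(-1)^{\deg(p)}$ descends to an \emph{isomorphism} of this quotient with $\Z_2$. Applying this isomorphism to $\nu_q=[p]$ yields exactly $\nu_q=(-1)^{\deg(p)}\in\Z_2$, which is the asserted degree formula.

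It then remains to argue that $(-1)^{\deg(p)}$ does not depend on the choice of $p\in C(\n{S}^1,\n{U}(1))_q$. This can be seen in two equivalent ways, and I would spell out the direct one for concreteness: given $p,p'\in C(\n{S}^1,\n{U}(1))_q$, Lemma~\ref{lemma_01}(1) says that $r:=p'\overline{p}$ is a $\Z_2$-equivariant map with $r(k)=1$ for $k\in(\n{S}^1)^\iota=\{0,\pi\}$; Lemma~\ref{lem:degree_formula_on_circle} then gives $(-1)^{\deg(r)}=r(\pi)/r(0)=1$, so $\deg(r)$ is even, and since $\deg(p')=\deg(p)+\deg(r)$ we conclude $(-1)^{\deg(p')}=(-1)^{\deg(p)}$. (Alternatively, this is automatic from the fact that $H^1(\n{S}^1,\Z)_q$ is a single coset of $H^1_{\Z_2}(\n{S}^1|(\n{S}^1)^\iota,\Z(1))$ and $\delta$ is a homomorphism annihilating that subgroup.)

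There is no real analytic obstacle here: the proposition is essentially the specialization of Definition~\ref{dfn:invariant} to $\n{S}^1$ combined with Lemma~\ref{lemma:1D-range}, and the only point requiring a moment's care is checking that the chosen representative $p$ — which is a genuine, not necessarily equivariant, map $\n{S}^1\to\n{U}(1)$ — produces a well-defined parity, which is precisely what the combination of Lemmas~\ref{lemma_01} and~\ref{lem:degree_formula_on_circle} delivers.
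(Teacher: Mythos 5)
Your proposal is correct and follows essentially the same route as the paper's proof: well-definedness via Assumption \ref{assumption}, the identification of the quotient with $\Z_2$ by parity of the degree from Lemma \ref{lemma:1D-range}, and the independence of the representative $p$ via the torsor property of Lemma \ref{lemma_01}(1) combined with the evenness of $\deg(r)$ from Lemma \ref{lem:degree_formula_on_circle}. No gaps to report.
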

\proof
First of all it is worth noting that all the conditions listed in Assumption \ref{assumption} are satisfied and therefore the invariant $\nu_q$ is well-defined. In view of  Lemma \ref{lemma:1D-range} one gets that 
$$
\nu_q \;\in\;H^1(\n{S}^1,\Z)/H^1_{\Z_2}\big(\n{S}^1|(\n{S}^1)^\iota,\Z(1)\big) \;\simeq\;\Z_2\;.
$$
Since the isomorphism with $\Z_2$ is realized by the parity of the degree, we get the expression $\nu_q = (-1)^{\deg p}$ in terms of $p \in C(\n{S}^1, \n{U}(1))_q$. Indeed, in view of Lemma \ref{lemma_01} (1) we know that for every two $p,p'\in C(\n{S}^1, U(1))_q$ there is an equivariant map $r:\n{S}^1\to \n{U}(1)$ such that $p'=rp$ and  $r(0)=r(\pi)=1$. From Lemma \ref{lem:degree_formula_on_circle} one infers that ${\deg (r)}$ is even and in turn ${\deg (p)}$ and ${\deg (p')}={\deg (p)}+{\deg (r)}$ have the same parity.
\qed

\medskip

The next result is preparatory for the proof of the Teo-Kane formula.

\begin{lemma} \label{lem:homotopy_to_the_trivial_determinant}
Let $q : \n{S}^1 \to \n{U}(2n)$ be a map
defined
 on the involutive space  $(\n{S}^1,\iota)$ 
 such that
$$
q\big(\iota(k)\big)\;=\;-q(k)^\mathtt{t}\;,\qquad \forall\; k\in \n{S}^1\;.
$$
Then, there exists a map
 $\widetilde{q} : \n{S}^1 \times [0, 1] \to \n{U}(2n)$ such that 
 $$
 \widetilde{q}\big(\iota(k),t\big)\; =\; -\widetilde{q}(k, t)^\mathtt{t}\;,\qquad \widetilde{q}(k, 0)\; =\; q(k)\;,\qquad
 {\rm det} \big[\widetilde{q}(k, 1)\big]\;=\;1\;,
 $$
  for all $k \in \n{S}^1$ and $t \in [0, 1]$.
\end{lemma}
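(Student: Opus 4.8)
The plan is to build the homotopy $\widetilde{q}$ in two stages. The first stage normalizes the determinant of $q$ to a simple target by rotating through phases; the second stage kills that simple target. The constraint to keep in mind throughout is the reality/skew-symmetry condition $\widetilde q(\iota(k),t) = -\widetilde q(k,t)^{\mathtt t}$, which at the fixed points $k=0,\pi$ forces $\widetilde q(0,t)$ and $\widetilde q(\pi,t)$ to remain skew-symmetric unitary matrices, so their Pfaffians (and hence determinants, since $\det = \mathrm{Pf}^2$) are constrained to stay on the unit circle but must be deformed continuously. The degree-of-$\det[q]$ datum is the obstruction that must be tracked.

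\textbf{Step 1: reduce to the block form from Proposition \ref{prop:st_rep}/Lemma \ref{lemma:diag_form} and isolate the phase.} First I would use the polar-type decomposition: any $q(k)\in\n{U}(2n)$ can be written as $q(k) = \det[q(k)]^{1/(2n)}\,\widehat q(k)$ only locally because of the branch of the $2n$-th root; instead the cleaner move is to work directly with $\det[q]:\n{S}^1\to\n{U}(1)$, which is an equivariant map satisfying $\det[q(\iota(k))] = \det[q(k)]$ (it is \emph{invariant}, not merely equivariant, because $\det[-q^{\mathtt t}] = (-1)^{2n}\det[q] = \det[q]$). So $\det[q]$ represents a class in $H^1_{\Z_2}(\n{S}^1,\Z) = 0$; hence $\det[q]$ is equivariantly null-homotopic, \ie there is a homotopy of \emph{invariant} maps $g_t:\n{S}^1\to\n{U}(1)$ with $g_0 = \det[q]$ and $g_1 \equiv 1$. (Here I use the table of cohomology groups: $H^1_{\Z_2}(\n{S}^1,\Z)=0$, together with the homotopy-classification of Appendix \ref{sec:eq:cohom_homot}.)

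\textbf{Step 2: lift the scalar homotopy to a homotopy of sewing matrices.} The issue is to multiply $q$ by a correcting factor whose determinant is $g_t\,\overline{\det[q]}$ while preserving skew-symmetry under $\iota$. Write $u_t := g_t\,\overline{\det[q]}:\n{S}^1\to\n{U}(1)$, an invariant map with $u_0\equiv 1$ and $\det[u_1 q] \equiv 1$ pointwise — wait, that is not quite right since multiplying a $2n\times 2n$ matrix by a scalar $\lambda$ multiplies the determinant by $\lambda^{2n}$. So instead I would choose, using $H^1_{\Z_2}(\n{S}^1,\Z)=0$ once more, an invariant map $w_t:\n{S}^1\to\n{U}(1)$ with $w_0\equiv1$ and $w_1^{2n} = \overline{\det[q]}$; such a $w_1$ exists precisely because $\overline{\det[q]}$ is invariant and nullhomotopic so it admits an invariant $2n$-th root up to homotopy, and we can absorb the homotopy into the $t$-variable. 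Then set
$$
\widetilde q(k,t) \;:=\; w_t(k)\, q(k)\,,\qquad k\in\n{S}^1,\ t\in[0,1].
$$
Since $w_t$ is invariant, $w_t(\iota(k)) = w_t(k)$, so $\widetilde q(\iota(k),t) = w_t(k)q(\iota(k)) = -w_t(k)q(k)^{\mathtt t} = -\big(w_t(k)q(k)\big)^{\mathtt t} = -\widetilde q(k,t)^{\mathtt t}$, and $\det[\widetilde q(k,1)] = w_1(k)^{2n}\det[q(k)] = 1$.

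\textbf{The main obstacle and how to handle it.} The one genuine difficulty is the existence of the invariant $2n$-th root $w_1$ of $\overline{\det[q]}$, \ie solving $w^{2n} = \overline{\det[q]}$ in $C(\n{S}^1,\n{U}(1))_{\mathrm{inv}}$ up to invariant homotopy. Because $\det[q]$ is invariant and $H^1_{\Z_2}(\n{S}^1,\Z)=0$, the map $\det[q]$ is invariantly homotopic to a constant; and at the fixed points $\det[q(0)],\det[q(\pi)]\in\{\pm1\}$ since they are squares of Pfaffians of skew-symmetric unitaries — so we need the constant value to be $+1$, equivalently $\det[q(0)]=\det[q(\pi)]=1$. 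In general these equal $\mathrm{Pf}[q(0)]^2$ and $\mathrm{Pf}[q(\pi)]^2$, which are automatically $+1$; so this is fine. Then $\overline{\det[q]}$ is invariantly homotopic to the constant map $1$ through invariant maps taking value in $\{\pm1\}$ at the fixed points — but the value at each fixed point is homotopy-invariant, so it is $+1$ at both fixed points throughout, and extracting an invariant square root (iterated $\log n$ times, or directly a $2n$-th root) of a nullhomotopic invariant map fixing $+1$ at the fixed points is elementary: write $\overline{\det[q(k)]} = \expo{\ii\theta(k)}$ with $\theta:\n{S}^1\to\R$ continuous, odd ($\theta(\iota(k)) = -\theta(k)$), and $\theta(0)=\theta(\pi)=0$ — such a branch exists because the degree of an invariant map $\n{S}^1\to\n{U}(1)$ is zero by Lemma \ref{lem:degree_formula_on_circle} combined with invariance forcing $r(\pi)/r(0) = 1$, hence $(-1)^{\deg} = 1$, and in fact the full degree vanishes since $H^1_{\Z_2}(\n{S}^1,\Z)=0$. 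Then $w_t(k) := \expo{\ii t\,\theta(k)/(2n)}$ does the job: it is invariant (odd exponent against odd $\theta$... more precisely $w_t(\iota(k)) = \expo{-\ii t\theta(k)/(2n)} = \overline{w_t(k)}$, which is the \emph{equivariance} condition; but since $w_t$ takes value $1$ at the fixed points it is even consistent as written), $w_0\equiv1$, and $w_1(k)^{2n} = \expo{\ii\theta(k)} = \overline{\det[q(k)]}$. Composing, $\widetilde q(k,t) = w_t(k)q(k)$ is the desired homotopy, and one checks $\widetilde q(k,0)=q(k)$, the skew condition, and $\det[\widetilde q(k,1)]=1$ directly. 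I expect writing down the branch $\theta$ with the correct parity and fixed-point behavior to be the only place requiring care.
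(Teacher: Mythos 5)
Your overall strategy is the same as the paper's: use $[\n{S}^1,\n{U}(1)]_{\rm inv}\simeq H^1_{\Z_2}(\n{S}^1,\Z)=0$ to trivialize the invariant map $\det[q]$, extract a root of it, and correct $q$ by that root so as to normalize the determinant while preserving the sewing-matrix condition. (The paper takes a square root $g$ of $\det[q]^{-1}$ and sets $\widetilde q = DqD$ with $D=\mathrm{diag}(g,1,\dots,1)$; you take a global scalar $2n$-th root. Both work for the same reason, and yours is marginally cleaner.) However, the execution of the key step contains a genuine error: you assert that a continuous logarithm branch $\theta$ of $\overline{\det[q]}$ can be chosen \emph{odd}, $\theta(\iota(k))=-\theta(k)$, with $\theta(0)=\theta(\pi)=0$. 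This is wrong, and it matters: with odd $\theta$ the map $w_t(k)=\mathrm{e}^{\ii t\theta(k)/(2n)}$ is \emph{equivariant}, $w_t(\iota(k))=\overline{w_t(k)}$, so the verification $\widetilde q(\iota(k),t)=w_t(\iota(k))\,q(\iota(k))=-\overline{w_t(k)}\,q(k)^{\mathtt t}$ does \emph{not} yield $-\widetilde q(k,t)^{\mathtt t}=-w_t(k)\,q(k)^{\mathtt t}$ unless $w_t$ is real-valued. You notice this tension yourself and then dismiss it with the remark that $w_t$ equals $1$ at the fixed points; that is not an argument, since invariance is needed at \emph{every} $k\in\n{S}^1$, not only at the two fixed points. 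A related slip is the claim that $\det[q(0)],\det[q(\pi)]\in\{\pm1\}$ because they are squares of Pfaffians: the Pfaffian of a skew-symmetric \emph{unitary} is an arbitrary element of $\n{U}(1)$ (e.g.\ $q(0)=\mathrm{e}^{\ii\alpha}Q$ has $\det[q(0)]=\mathrm{e}^{2n\ii\alpha}$), and it is \emph{equivariant} maps, not invariant ones, whose values at fixed points are forced into $\{\pm1\}$ (Lemma \ref{lem:degree_formula_on_circle}).

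The good news is that the correct parity is exactly the one you need. Since $\det[q]$ is invariant, its degree vanishes (either because $\deg(f\circ\iota)=-\deg(f)$, or because $H^1_{\Z_2}(\n{S}^1,\Z)=0$), so a continuous branch $\theta:\n{S}^1\to\R$ with $\overline{\det[q(k)]}=\mathrm{e}^{\ii\theta(k)}$ exists; moreover $\theta(\iota(k))-\theta(k)$ is a continuous $2\pi\Z$-valued function, hence constant, and evaluation at the fixed point $k=0$ shows the constant is zero. Thus $\theta$ is automatically \emph{invariant} (even), no normalization at the fixed points is required, $w_t$ is invariant, and your formula $\widetilde q(k,t)=w_t(k)\,q(k)$ then satisfies all three required properties. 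With that one-line correction the proof is complete.
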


\begin{proof}
As anticipated in item ii) of Section \ref{sect:intro},
${\rm det}[q]:\n{S}^1\to\n{U}(1)$ is an invariant map, \ie 
${\rm det}[q(k)]={\rm det}[q(\iota(k))]$ for every $k\in\n{S}^1$.
Let $[\n{S}^1,\n{U}(1)]_{\rm inv}$ be the set of classes of 
homotopy equivalent invariant functions. Then, one has that 
$[\n{S}^1,\n{U}(1)]_{\rm inv}\simeq H^1_{\Z_2}(\n{S}^1,\Z)=0$
(\cf Appendix \ref{sec:eq:cohom_homot}). Therefore,
  there exists ${f} : \n{S}^1 \times [0, 1] \to \n{U}(1)$ such that ${f}(k, t) =  {f}(\iota(k), t)$,
 $ {f}(k, 0) = 1$ and $ {f}(k, 1) = \det [q(k)]^{-1}$ for every $k \in \n{S}^1$ and $t \in [0, 1]$. 
 The degree of the map $k\mapsto  {f}(k, t)$ is constant in $t$ and therefore is trivial for every $t \in [0, 1]$ (since it is trivial for $t=0$). This allows to define  a square root
$$
g(k,t)\;:=\; {f}(k, t)^{\frac{1}{2}}
$$ 
 consistently. This provides a continuous map $g:\n{S}^1 \times [0, 1] \to \n{U}(1)$ such that $g(k, t) = g(\iota(k), t)$,
 $g(k, 0) = 1$ and $g(k, 1)^2 = \det [q(k)]^{-1}$ for every $k \in \n{S}^1$ and $t \in [0, 1]$. Let us introduce the map $  {q}:\n{S}^1 \times [0, 1] \to \n{U}(2n)$ defined by
\begin{equation}\label{eq:q-qtild}
 \widetilde{q}(k, t)\; :=\;\left(\begin{array}{ccc}g(k, t) & 0 & 0 \\0 & 1 & 0 \\0 & 0 & \ddots\end{array}\right) \;q(k)\;\left(\begin{array}{ccc}g(k, t) & 0 & 0 \\0 & 1 & 0 \\0 & 0 & \ddots\end{array}\right)\;.
\end{equation}
 By construction the map $\widetilde{q}$ meets the properties 
stated in the claim. 
\end{proof}

\begin{theorem}[The Teo-Kane formula] \label{thm:invariant_by_Pfaffian}
Let $q : \n{S}^1 \to \n{U}(2n)$ be a map
defined
 on the involutive space  $(\n{S}^1,\iota)$ 
 such that
$$
q\big(\iota(k)\big)\;=\;-q(k)^\mathtt{t}\;,\qquad \forall\; k\in \n{S}^1\;.
$$
Then, the invariant $\nu_q \in \Z_2$ admits the expression
$$
\nu_q \;=\; \frac{{\rm Pf}\big[q(\pi)\big]}{{\rm Pf}\big[q(0)\big]}\; \frac{{\rm det}\big[q(0)\big]^{\frac{1}{2}}}{{\rm det}\big[q(\pi)\big]^{\frac{1}{2}}}\;
$$
where the branch $k\mapsto {\rm det}[q(k)]^{\frac{1}{2}}$ must be chosen continuously between the two
fixed points $k=0,\pi$.\end{theorem}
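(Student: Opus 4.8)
The plan is to reduce the statement to the degree formula already established in Proposition \ref{prop:first-from-1D}, namely $\nu_q = (-1)^{\deg(p)}$ for any $p \in C(\n{S}^1, \n{U}(1))_q$, and then to exhibit a concrete $p$ whose degree parity equals the Pfaffian ratio displayed in the theorem. The key observation is that evaluating $(-1)^{\deg(p)}$ for an equivariant-type map can be done locally at the two fixed points: since $p$ satisfies $\det[q(k)] = p(\iota(k))p(k)$, the ratio $p(\pi)/p(0)$ is not canonically $\pm 1$ (because $p$ itself need not be equivariant), but the combination of the two boundary conditions defining $C(\n{S}^1,\n{U}(1))_q$ — including $p(x) = {\rm Pf}[q(x)]$ at fixed points — will pin down $\deg(p) \bmod 2$ in terms of the data at $k=0,\pi$.

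First I would use Lemma \ref{lem:homotopy_to_the_trivial_determinant} to replace $q$ by a homotopic sewing matrix $\widetilde{q}(\cdot,1)$ with $\det[\widetilde{q}(k,1)] \equiv 1$; by Theorem \ref{thm:invariant} (1) the invariant $\nu_q$ is unchanged, and I must track how the right-hand side transforms. The homotopy in \eqref{eq:q-qtild} conjugates $q$ by the diagonal matrix $\mathrm{diag}(g(k,t),1,\dots,1)$ with $g(k,1)^2 = \det[q(k)]^{-1}$ and $g(k,0)=1$; this multiplies ${\rm Pf}[q(k)]$ by $g(k,1)$ at each fixed point and multiplies $\det[q(k)]$ by $g(k,1)^2 = \det[q(k)]^{-1}$, so the quantity ${\rm Pf}[q(k)]\,\det[q(k)]^{-1/2}$ (with the continuous branch of the square root) is \emph{invariant} under this homotopy, up to the consistent choice of branch — this is exactly the combination appearing in the theorem after forming the ratio between $k=\pi$ and $k=0$. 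So it suffices to prove the formula in the special case $\det[q] \equiv 1$, where the branch of $\det[q]^{1/2}$ is constant and the claimed formula collapses to $\nu_q = {\rm Pf}[q(\pi)]/{\rm Pf}[q(0)]$.

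In that special case, since $\det[q]\equiv 1$ the constant map $p_0 \equiv 1$ satisfies the first condition $\det[q(k)] = p_0(\iota(k))p_0(k)$, but it generally fails the second condition $p_0(x) = {\rm Pf}[q(x)]$ at the fixed points. I would instead build an admissible $p \in C(\n{S}^1,\n{U}(1))_q$ explicitly: because $H^1_{\Z_2}(\n{S}^1,\Z(1))$-style obstructions vanish here (Assumption \ref{assumption} (c) holds for $(\n{S}^1,\iota)$ by the remarks before Section \ref{sec:Teo-Kane}, all \virg{Quaternionic} bundles being trivial), such a $p$ exists by Lemma \ref{lemma_01} (1). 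Then by Lemma \ref{lem:degree_formula_on_circle} applied to the equivariant ratio $r := p \cdot \overline{p \circ \iota}$ — note $r(\iota(k)) = \overline{r(k)}$ — one has $(-1)^{\deg(r)} = r(\pi)/r(0)$; but $\deg(r) = \deg(p) - \deg(p\circ\iota) = 2\deg(p)$ is even, while $r(x) = p(x)^2 = {\rm Pf}[q(x)]^2 = \det[q(x)] = 1$ at the fixed points, so this is consistent but gives no information directly. The right move is subtler: I would write $p(k) = {\rm Pf}[q(0)] \exp(\ii \theta(k))$ with $\theta(0) = 0$, track $\theta$ around the circle, and use that at $k = \pi$ the value must be ${\rm Pf}[q(\pi)]$ while $p(\iota(k))p(k) = 1$ forces $\theta(k) + \theta(-k) \in 2\pi\Z$, which constrains $\theta(\pi) \bmod \pi$ and hence $\deg(p) = \tfrac{1}{2\pi}(\theta(2\pi) - \theta(0)) \bmod 2$ in terms of ${\rm Pf}[q(\pi)]/{\rm Pf}[q(0)]$. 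Carrying this parity bookkeeping carefully yields $(-1)^{\deg(p)} = {\rm Pf}[q(\pi)]/{\rm Pf}[q(0)]$, and combined with Proposition \ref{prop:first-from-1D} this is the claim.

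The main obstacle I anticipate is the careful branch-of-square-root bookkeeping in two places at once: first showing that the quantity ${\rm Pf}[q(\pi)]{\rm Pf}[q(0)]^{-1}\det[q(0)]^{1/2}\det[q(\pi)]^{-1/2}$ is genuinely well-defined (independent of homotopy and of the continuous branch chosen between the two fixed points, which requires $\det[q]$ to be null-homotopic as an invariant map — true here since $H^1_{\Z_2}(\n{S}^1,\Z)=0$), and second matching the $\bmod 2$ reduction of $\deg(p)$ to that Pfaffian–determinant ratio. A clean way to organize the second part is to note that $p/{\rm Pf}[\,\cdot\,]$-type corrections only change $\deg(p)$ by the degree of an equivariant map vanishing (i.e. equal to $1$) at both fixed points, which is even by Lemma \ref{lem:degree_formula_on_circle}, so the parity is independent of the admissible choice of $p$ — exactly as already asserted in Proposition \ref{prop:first-from-1D} — and can therefore be evaluated on any convenient representative, in particular one built by a square-root construction as in Lemma \ref{lem:homotopy_to_the_trivial_determinant}.
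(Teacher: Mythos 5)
Your proposal is correct and follows essentially the same route as the paper: reduce to $\det[q]\equiv 1$ via Lemma \ref{lem:homotopy_to_the_trivial_determinant}, track the Pfaffian through the conjugation using ${\rm Pf}[BAB^{\mathtt{t}}]=\det[B]\,{\rm Pf}[A]$, and then evaluate the parity of $\deg(p)$ at the two fixed points via Proposition \ref{prop:first-from-1D}. The only place you work harder than necessary is the last step: once $\det[q']\equiv 1$, the condition $p'(\iota(k))\,p'(k)=1$ says precisely that $p'$ is equivariant, so Lemma \ref{lem:degree_formula_on_circle} applies directly to $p'$ and gives $(-1)^{\deg(p')}=p'(\pi)/p'(0)={\rm Pf}[q'(\pi)]/{\rm Pf}[q'(0)]$, making your explicit $\theta$-bookkeeping (which amounts to re-deriving that lemma) unnecessary.
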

\proof
Theorem \ref{thm:invariant} (1) ensures that the invariant $\nu_q$ only depends on the homotopy class of $q$. By combining this fact with  Lemma \ref{lem:homotopy_to_the_trivial_determinant} one gets that $\nu_q=\nu_{q'}$  where  
 ${q'}(k):=\widetilde{q}(k,1)$ according to the notation of Lemma \ref{lem:homotopy_to_the_trivial_determinant}.
The crucial property of ${q'}$ is that ${\rm det}[q'(k)]=1$ for every $k\in\n{S}^1$.
From Proposition \ref{prop:first-from-1D} we know that 
$
\nu_{q'} = (-1)^{\deg (p')}
$,
where $p' : \n{S}^1 \to \n{U}(1)$ is such that $1 = p'(\iota(k)) p'(k)$ for all $k \in \n{S}^1$ and $p'(k) = {\rm Pf}[q'(k)]
$ for $k = 0, \pi$. Then, one has $p'(0)^2=1=p'(\pi)^2$
so that $p'(\pi)/p'(0) \in \Z_2$. 
In view of Lemma \ref{lem:degree_formula_on_circle}, one has that 
$$
\nu_{q' }\;=\; (-1)^{\deg(p')}
\;=\;  \frac{p'(\pi)}{p'(0)}\;
=\; \frac{{\rm Pf}\big[q'(\pi)\big]}{{\rm Pf}\big[q'(0)\big]}\;.
$$
From equation \eqref{eq:q-qtild} which provides the link between $q$ and $q'$ and the well-known formula ${\rm Pf}[BAB^\mathtt{t}]={\rm det}[B]{\rm Pf}[A]$ for generic matrices $A,B$ with $A$ skew-symmetric, one gets
$$
{\rm Pf}\big[q'(k)\big]\;=\;g(k,1)\; {\rm Pf}\big[q(k)\big]\;=\;\frac{{\rm Pf}\big[q(k)\big]}{{\rm det}\big[q(k)\big]^{\frac{1}{2}}}\;,\qquad k=0,\pi\;.
$$
This concludes the proof. 
\qed

\medskip

For a concrete physical model of one-dimensional topological insulator of class DIII whose (non-trivial) topology is described
be the Teo-Kane formula we refer to  \cite[Section IV.C]{teo-kane-10} or \cite[Section B.6.a]{chiu-teo-schnyder-ryu-16}.
The connection with the \emph{Kramers polarization} is discussed in 
\cite{budich-ardonne-13}.

\subsection{The role as topological obstruction}
In this section we will show that in dimension one the 
invariant $\nu_q$ can be interpreted as a   \emph{topological obstruction}. 
For this aim, we need to introduce the involutive space $(\n{U}(2n), \theta)$ given by the unitary group on $\C^{2n}$ endowed with the involution 
 $\theta(U): = -U^\mathtt{t}$. The fixed point set $\n{U}(2n)^\theta$ coincides with the set of 
  skew-symmetric matrices on $\C^{2n}$.
Let us introduce the standard symplectic matrix
\begin{equation}\label{eq:mat-Q}
Q \;:=\; 
\left(
\begin{array}{cc}
0 & -{\bf 1}_{n} \\
+{\bf 1}_{n} & 0 
\end{array}
\right)\; \in\; \n{U}(2n)^\theta\;
\end{equation}
and the \emph{compact symplectic group} $Sp(n) = \{ S \in \n{U}(2n)\; |\;  S^\mathtt{t}QS =  Q \}$. One has that ${\rm Pf}[Q]=(-1)^{\frac{n(n+1)}{2}}$. By considering  the left action of 
$Sp(n)$ on $\n{U}(2n)$ one can define the quotient space
$\n{U}(2n)/Sp(n)$. In view of the normal decomposition for skew-symmetric matrices, one  gets the well-defined homeomorphism
\begin{align*}
\n{U}(2n)/Sp(n)\;\ni\;[U]\;\longmapsto\; U^\mathtt{t}QU\;\in\; \n{U}(2n)^\theta\;.
\end{align*}
In particular, $\n{U}(2n)^\theta$ can be seen as the base space of a principal $Sp(n)$-bundle. 

\medskip

Let us start with two preliminary results.
\begin{lemma} \label{lem:normalization_at_base_point}
Let $(X,\tau)$ be an involutive space with a fixed point $\ast \in X^\tau$.
Let $q : X \to \n{U}(2n)$ be a map satisfying 
$$
q\big(\tau(x)\big)\;=\;-q(x)^\mathtt{t}\;,\qquad \forall\; x\in X\;.
$$
Then there exists a map $\widetilde{q} : X \times [0, 1] \to \n{U}(2n)$ such that $\widetilde{q}(\tau(x), t) = -  \widetilde{q}(x, t)^\mathtt{t}$ and $\widetilde{q}(x, 0) = q(x)$ for all $(x,t) \in X\times [0, 1]$ and $\widetilde{q}(\ast, 1) = Q$.
\end{lemma}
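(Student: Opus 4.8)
The plan is to deform $q$ at the single fixed point $\ast$ to the standard symplectic matrix $Q$, exploiting the fact that the fixed-point set $\n{U}(2n)^\theta$ of skew-symmetric unitaries is path-connected and that any path there can be lifted to a path of the matrix $q(\ast)$ compatible with the involution. First I would note that $q(\ast)=-q(\ast)^\mathtt{t}$ is a skew-symmetric unitary, i.e. $q(\ast)\in\n{U}(2n)^\theta$, and likewise $Q\in\n{U}(2n)^\theta$. Since $\n{U}(2n)^\theta\cong\n{U}(2n)/Sp(n)$ is the continuous image of the connected group $\n{U}(2n)$, it is path-connected; hence there is a path $\gamma:[0,1]\to\n{U}(2n)^\theta$ with $\gamma(0)=q(\ast)$ and $\gamma(1)=Q$. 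The task is then to spread this path out over all of $X$ in a way that preserves the skew-symmetry relation $\widetilde q(\tau(x),t)=-\widetilde q(x,t)^\mathtt{t}$ for every $x$, not just at $\ast$.

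The key device I would use is a conjugation deformation. Using the principal $Sp(n)$-bundle structure on $\n{U}(2n)^\theta$, write the path $\gamma$ as $\gamma(t)=U(t)^\mathtt{t}\,q(\ast)\,U(t)$ for a continuous path $U:[0,1]\to\n{U}(2n)$ with $U(0)={\bf 1}$ (such a lift exists because $\n{U}(2n)\to\n{U}(2n)^\theta$ is a fibration with $U(1)^\mathtt{t}q(\ast)U(1)=Q$; alternatively one can build $U(t)$ directly from the normal form of $q(\ast)$). Then choose a continuous function $\lambda:X\to[0,1]$ with $\lambda(\ast)=1$ which is \emph{invariant}, $\lambda\circ\tau=\lambda$ — e.g. an invariant bump function supported near $\ast$, obtained by averaging an ordinary bump function over the $\Z_2$-action; here one uses that $X$ is a $\Z_2$-CW complex (hence metrizable and paracompact) and that $\ast$ is a fixed point. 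Define
$$
\widetilde q(x,t)\;:=\;U\big(\lambda(x)\,t\big)^{\mathtt{t}}\;q(x)\;U\big(\lambda(x)\,t\big)\;,\qquad (x,t)\in X\times[0,1]\;.
$$
Because $\lambda$ is invariant, the conjugating factor at $\tau(x)$ equals that at $x$, so the identity $\widetilde q(\tau(x),t)=U(\lambda(x)t)^\mathtt{t}\,q(\tau(x))\,U(\lambda(x)t)=-U(\lambda(x)t)^\mathtt{t}\,q(x)^\mathtt{t}\,U(\lambda(x)t)=-\widetilde q(x,t)^\mathtt{t}$ holds for all $x$; at $t=0$ the conjugating factor is $U(0)={\bf 1}$, so $\widetilde q(x,0)=q(x)$; and at $x=\ast$, $t=1$ we get $\widetilde q(\ast,1)=U(1)^\mathtt{t}q(\ast)U(1)=Q$. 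Continuity is clear since $U$, $\lambda$ and $q$ are continuous. This produces the required homotopy.

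The main obstacle I anticipate is the construction of the lift $U(t)$ with $U(0)={\bf 1}$ and $U(1)^\mathtt{t}q(\ast)U(1)=Q$ together with the invariant cutoff $\lambda$; neither is deep, but both need the structural inputs of Assumption-type hypotheses (the $\Z_2$-CW structure for $\lambda$, and the homogeneity of $\n{U}(2n)^\theta$ under the congruence action $U\mapsto U^\mathtt{t}q(\ast)U$ for $U(t)$). A clean way to get $U(t)$ without invoking fibration theory is to use the spectral/normal form: every skew-symmetric unitary is congruent via a unitary to a block-diagonal matrix of $2\times2$ blocks $\begin{pmatrix}0&e^{\ii\theta_j}\\-e^{\ii\theta_j}&0\end{pmatrix}$, rotate the phases $\theta_j$ continuously to those of $Q$, and then permute blocks continuously inside $\n{U}(2n)$; composing these congruences gives $U(t)$. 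Once $U(t)$ and $\lambda$ are in hand, the verification in the displayed formula above is the routine bookkeeping already sketched, and the lemma follows.
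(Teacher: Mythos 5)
Your proof is correct and is essentially the paper's argument: both deform $q$ by a global congruence $\widetilde q(x,t)=u(t)^{\mathtt t}\,q(x)\,u(t)$ with $u(0)={\bf 1}$ and $u(1)$ a unitary carrying $q(\ast)$ to $Q$, which automatically preserves the sewing-matrix relation. The two extra devices you introduce are unnecessary: since the lemma does not ask the deformation to be stationary away from $\ast$, you may take $\lambda\equiv 1$ (so no invariant cutoff or $\Z_2$-CW hypothesis is needed), and instead of lifting a path in $\n{U}(2n)^\theta$ through the bundle $\n{U}(2n)\to\n{U}(2n)^\theta$ one can simply pick a unitary $U$ with $U^{\mathtt t}QU=q(\ast)$ (normal form) and any path in the path-connected group $\n{U}(2n)$ from ${\bf 1}$ to $U^{-1}$, as the paper does.
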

\proof
At the fixed point  $q(\ast) \in \n{U}(2n)^\theta$
is a skew-symmetric unitary matrix. Let $U \in \n{U}(2n)$ be a unitary matrix such that $U^\mathtt{t} Q U = q(\ast)$. 
Since $\n{U}(2n)$ is path connected, there is a path ${u} : [0, 1] \to \n{U}(2n)$ such that $u(0) = {\bf 1}_{2n}$ and $u(1) = U^{-1}$. Now, let us define $\widetilde{q} : X \times [0, 1] \to \n{U}(2n)$ by
$$
\tilde{q}(x, t) \;:=\; u(t)^\mathtt{t}\; q(x)\; u(t)\;.
$$
A straightforward check shows that this map has the prescribed properties.
\qed

\begin{lemma}\label{lemma:conn_comp}
The space 
$$
S\n{U}(2n)^\theta\;:=\;\left\{U\in\n{U}(2n)^\theta\;\big|\;{\rm det}[U]=1\right\}
$$
has two connected components distinguished by their Pfaffians. 
\end{lemma}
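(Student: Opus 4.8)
The claim is that $S\n{U}(2n)^\theta$ — the skew-symmetric unitary $2n\times 2n$ matrices with determinant $1$ — has exactly two connected components, distinguished by the value of the Pfaffian (which on this space is a square root of $\det=1$, hence $\pm1$). The strategy is to first invoke the normal form for skew-symmetric unitary matrices: every $U\in\n{U}(2n)^\theta$ can be written as $U=V^\mathtt{t}QV$ for some $V\in\n{U}(2n)$, so that $\n{U}(2n)^\theta\simeq\n{U}(2n)/Sp(n)$ as recalled in the excerpt, with $\det[U]=\det[V]^2$ and $\mathrm{Pf}[U]=\det[V]\,\mathrm{Pf}[Q]$. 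The constraint $\det[U]=1$ forces $\det[V]=\pm1$, i.e. $V$ lies in $S\n{U}(2n)$ or in the coset $V_0\,S\n{U}(2n)$ for any fixed $V_0$ with $\det V_0=-1$, and the two cases produce $\mathrm{Pf}[U]=\pm\mathrm{Pf}[Q]$. Thus $S\n{U}(2n)^\theta$ is the image of $S\n{U}(2n)\sqcup V_0\,S\n{U}(2n)$ under the continuous surjection $V\mapsto V^\mathtt{t}QV$, and since $SU(2n)$ is connected, the image has at most two components; the Pfaffian, being continuous and taking the two distinct values $\pm\mathrm{Pf}[Q]$ on the two pieces, separates them. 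Hence there are exactly two components.

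First I would make precise the passage $\n{U}(2n)^\theta\cong\n{U}(2n)/Sp(n)$ (already stated in the excerpt) and restrict it: $S\n{U}(2n)^\theta$ corresponds to those $[V]$ with $\det[V]^2=1$. I would then observe that $\det:\n{U}(2n)\to\n{U}(1)$ descends to a well-defined continuous map $\n{U}(2n)/Sp(n)\to\n{U}(1)$ because $Sp(n)\subset S\n{U}(2n)$ (every symplectic matrix has determinant $1$ — this needs a one-line justification, e.g. from $S^\mathtt{t}QS=Q$ one gets $\det[S]^2=1$, and connectedness of $Sp(n)$ with ${\bf 1}\in Sp(n)$ pins it to $+1$). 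Consequently the Pfaffian $\mathrm{Pf}:\n{U}(2n)^\theta\to\n{U}(1)$, which satisfies $\mathrm{Pf}[U]^2=\det[U]$, restricts on $S\n{U}(2n)^\theta$ to a continuous function valued in $\{\pm1\}$, hence locally constant: this immediately shows $S\n{U}(2n)^\theta$ has \emph{at least} two components once we exhibit matrices realizing each sign (e.g. $Q$ itself with $\mathrm{Pf}[Q]=(-1)^{n(n+1)/2}$, and $Q'$ obtained from $Q$ by swapping one pair of basis vectors, which flips the Pfaffian but not the determinant).

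For the \emph{at most two} direction I would argue via the surjection: write $\n{U}(2n)=S\n{U}(2n)\sqcup \n{U}(2n)^{(-)}$, where $\n{U}(2n)^{(-)}$ is the set of $V$ with $\det[V]=-1$ (noting $\n{U}(2n)^{(-)}$ need not be considered in full — only $\det V=\pm1$ matters — but both $S\n{U}(2n)$ and any fixed fiber $\{\det V=-1\}$ are path-connected, the latter being a coset of the connected group $S\n{U}(2n)$). The map $\Phi:V\mapsto V^\mathtt{t}QV$ is continuous, so $\Phi(S\n{U}(2n))$ and $\Phi(V_0\,S\n{U}(2n))$ are each connected, and their union is exactly $S\n{U}(2n)^\theta$ (since $\det[U]=1$ iff $\det[V]=\pm1$). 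Therefore $S\n{U}(2n)^\theta$ is a union of two connected sets, so it has at most two components. Combining the two directions gives exactly two, separated by $\mathrm{Pf}$.

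\textbf{Main obstacle.} The only genuinely delicate point is ensuring that the Pfaffian descends to a \emph{well-defined continuous} function on $\n{U}(2n)^\theta$ (as opposed to being only defined up to sign via square roots of the determinant), and that it is in fact constant on the fibers of $\Phi$ up to the sign of $\det[V]$ — i.e. the transformation law $\mathrm{Pf}[V^\mathtt{t}QV]=\det[V]\,\mathrm{Pf}[Q]$. This is the classical identity $\mathrm{Pf}[BAB^\mathtt{t}]=\det[B]\,\mathrm{Pf}[A]$ already invoked in the excerpt (proof of Theorem on the Teo-Kane formula), so it may simply be cited; the continuity of $\mathrm{Pf}$ as a polynomial in the matrix entries is immediate. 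Everything else — connectedness of $SU(2n)$ and of cosets, $Sp(n)\subset SU(2n)$, the normal form — is standard and needs at most a sentence each.
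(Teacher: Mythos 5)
Your proposal is correct and follows essentially the same route as the paper: both use the normal form $\n{U}(2n)^\theta\simeq \n{U}(2n)/Sp(n)$, identify $S\n{U}(2n)^\theta$ with the image of $\det^{-1}(+1)\sqcup\det^{-1}(-1)$, deduce connectedness of each piece from that of $S\n{U}(2n)$ (the paper via an explicit homeomorphism $U\mapsto IUI$ between the two determinant-level sets, you via a coset translate), and separate the two pieces with the transformation law $\mathrm{Pf}[V^\mathtt{t}QV]=\det[V]\,\mathrm{Pf}[Q]$. Your extra remarks (that $Sp(n)\subset S\n{U}(2n)$ so $\det$ descends, and that $\mathrm{Pf}$ is a polynomial hence continuous) are correct and only make the argument more self-contained.
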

\proof
Under the homeomorphism $\n{U}(2n)/Sp(n) \simeq \n{U}(2n)^\theta$, one can identify the subspace $S\n{U}(2n)^\theta\subset \n{U}(2n)^\theta$ with
$$
{\det}^{-1}(+1)/Sp(n)\; \sqcup\; {\det}^{-1}(-1)/Sp(n) \;\subset \;\n{U}(2n)/Sp(n)
$$
where ${\det}^{-1}(\pm1):=\{U\in \n{U}(2n)\;|\; {\rm det}(U)=\pm1\}$. In particular ${\det}^{-1}(+1)=S\n{U}(2n)$. Let 
$$
I\;:=\; \left(\begin{array}{ccc}\ii & 0 & 0 \\0 & 1 & 0 \\0 & 0 & \ddots\end{array}\right)\;\in\;\n{U}(2n)\;.
$$
The map $U\mapsto IUI$ provides a homeomorphism ${\det}^{-1}(-1)\simeq{\det}^{-1}(+1)$.
Thus, $S\n{U}(2n)^\theta$ is homeomorphic to the disjoint union of two copies of  $S\n{U}(2n)/Sp(n)$, and in turn it has two connected components. 
From the form of the homeomorphism and the property of the Pfaffian one gets that ${\rm Pf}:[U]\mapsto {\rm Pf}[U^\mathtt{t}QU]=\pm {\rm Pf}[Q]$ for every $[U]\in {\det}^{-1}(\pm1)/Sp(n)$. Therefore,
 the Pfaffian distinguishes the two connected components of $S\n{U}(2n)^\theta$.
 \qed

\medskip

The next result will shed light on the nature of the invariant $\nu_q$
as the topological obstruction for the existence of a homotopy transformation between two given sewing matrices.
\begin{theorem}[Topological obstruction] \label{thm:obstruction_on_circle}
Consider the involutive space $(\n{S}^1,\iota)$ and two maps $q_j:\n{S}^1\to\n{U}(n)$, $j=0,1$, satisfying the sewing matrix condition
$$
q_j\big(\iota(k)\big)\;=\;-q_j(k)^\mathtt{t}\;,\qquad \forall\; k\in \n{S}^1\;,\quad j=0,1\;.
$$
The maps $q_0,q_1$ are homotopy equivalent within the space of sewing matrices if and only if $\nu_{q_0} = \nu_{q_1}$. 
\end{theorem}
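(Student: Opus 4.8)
The forward implication is immediate: Theorem~\ref{thm:invariant}(1) says that $\nu_q$ depends only on the homotopy class of $q$ within the space of sewing matrices, so homotopic $q_0,q_1$ have $\nu_{q_0}=\nu_{q_1}$. For the converse, I would write $\mathcal{S}$ for the space of sewing matrices $q:\n{S}^1\to\n{U}(2n)$ and $\mathcal{S}_1\subset\mathcal{S}$ for the subspace of those with $\det[q]\equiv 1$. Lemma~\ref{lem:homotopy_to_the_trivial_determinant} connects any $q\in\mathcal{S}$ to some $q'\in\mathcal{S}_1$ by a path inside $\mathcal{S}$, and by Theorem~\ref{thm:invariant}(1) this does not change $\nu$; hence it suffices to treat the case $q_0,q_1\in\mathcal{S}_1$. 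On $\mathcal{S}_1$ the Teo--Kane formula (Theorem~\ref{thm:invariant_by_Pfaffian}) collapses: the continuous branch of $\det[q(k)]^{\frac12}$, having square $\equiv 1$ on the connected interval between the two fixed points, is constant, whence
$$
\nu_q\;=\;\frac{{\rm Pf}[q(\pi)]}{{\rm Pf}[q(0)]}\;,\qquad {\rm Pf}[q(0)],\,{\rm Pf}[q(\pi)]\in\{\pm1\}\;,
$$
the membership in $\{\pm1\}$ holding because the squares of these Pfaffians are $\det[q(0)]=\det[q(\pi)]=1$.

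The heart of the proof is the computation of $\pi_0(\mathcal{S}_1)$. Choosing $[0,\pi]$ as a fundamental domain of $\iota$ on $\n{S}^1$, an element of $\mathcal{S}_1$ is the same datum as a path $[0,\pi]\to SU(2n)$ whose two endpoints lie in $A:=S\n{U}(2n)^\theta$ (the extension to all of $\n{S}^1$ by $q(-k)=-q(k)^\mathtt{t}$ is then forced and automatically continuous). Thus $\mathcal{S}_1$ is the space of paths in $SU(2n)$ with endpoints in $A$, and endpoint-evaluation $\mathcal{S}_1\to A\times A$ is a fibration, being the pullback of the Hurewicz path fibration $SU(2n)^{[0,\pi]}\to SU(2n)\times SU(2n)$ along $A\times A$; its fibre is the space of paths in $SU(2n)$ joining two prescribed points. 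Since $SU(2n)$ is simply connected this fibre is connected, so the exact sequence of the fibration gives a bijection
$$
\pi_0(\mathcal{S}_1)\;\xrightarrow{\ \sim\ }\;\pi_0(A)\times\pi_0(A)\;,\qquad q\;\longmapsto\;\big({\rm Pf}[q(0)],\,{\rm Pf}[q(\pi)]\big)\;,
$$
where by Lemma~\ref{lemma:conn_comp} each $\pi_0(A)\cong\Z_2$ is detected by the sign of the Pfaffian. Consequently, $q_0,q_1\in\mathcal{S}_1$ are homotopic inside $\mathcal{S}_1$ precisely when the pairs $({\rm Pf}[q_j(0)],{\rm Pf}[q_j(\pi)])$ coincide.

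It then remains to merge the components carrying the same value of $\nu$. The key observation is that multiplication by a constant phase preserves the sewing condition: for $\mu\in\n{U}(1)$ one has $(\mu q)(-k)=\mu q(-k)=-\mu q(k)^\mathtt{t}=-(\mu q)(k)^\mathtt{t}$, and $t\mapsto\mu_t q$, with $\mu_t$ a path from $1$ to $\mu$ in $\n{U}(1)$, is a homotopy within $\mathcal{S}$. Choosing $\mu$ with $\mu^n=-1$ (hence $\mu^{2n}=1$) keeps $\mu q$ in $\mathcal{S}_1$ while flipping both Pfaffians, since ${\rm Pf}[(\mu q)(k)]=\mu^n{\rm Pf}[q(k)]=-{\rm Pf}[q(k)]$ at $k=0,\pi$. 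Now assume $\nu_{q_0}=\nu_{q_1}$ with $q_0,q_1\in\mathcal{S}_1$. If ${\rm Pf}[q_0(0)]={\rm Pf}[q_1(0)]$, then the displayed formula for $\nu$ forces ${\rm Pf}[q_0(\pi)]={\rm Pf}[q_1(\pi)]$ as well, and $q_0\simeq q_1$ inside $\mathcal{S}_1$. Otherwise ${\rm Pf}[q_0(0)]=-{\rm Pf}[q_1(0)]$, hence also ${\rm Pf}[q_0(\pi)]=-{\rm Pf}[q_1(\pi)]$, so $\mu q_0$ has the same Pfaffian pair as $q_1$; then $q_0\simeq\mu q_0$ in $\mathcal{S}$ and $\mu q_0\simeq q_1$ in $\mathcal{S}_1$, giving $q_0\simeq q_1$ in $\mathcal{S}$, as required.

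The only genuinely non-formal input is the identification $\pi_0(\mathcal{S}_1)\cong\Z_2\times\Z_2$ by the two fixed-point Pfaffians, which I expect to be the main point to argue carefully; it rests on the simple-connectedness of $SU(2n)$ through the path-fibration exact sequence together with the two-component description of $S\n{U}(2n)^\theta$ provided by Lemma~\ref{lemma:conn_comp}. Everything else --- the reduction to $\det\equiv1$, the Pfaffian form of $\nu$, and the constant-phase trick that identifies the two pairs of components sharing a value of $\nu$ --- is routine given the results already established.
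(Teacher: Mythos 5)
Your proof is correct, and for the non-trivial direction it takes a genuinely different route from the paper's. The paper first normalizes both the determinant (Lemma~\ref{lem:homotopy_to_the_trivial_determinant}) \emph{and} the value at the fixed point $k=0$ (Lemma~\ref{lem:normalization_at_base_point}, forcing $q_0(0)=q_1(0)=Q$), so that $\nu_{q_0}=\nu_{q_1}$ becomes ${\rm Pf}[q_0(\pi)]={\rm Pf}[q_1(\pi)]$; it then writes the homotopy down by hand on $\partial([0,\pi]\times[0,1])$, joining $q_0(\pi)$ to $q_1(\pi)$ inside $S\n{U}(2n)^\theta$ via Lemma~\ref{lemma:conn_comp}, fills in the square using $\pi_1(S\n{U}(2n))=0$, and extends by reflection. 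You use only the determinant normalization and replace the explicit square-filling by the computation $\pi_0(\mathcal{S}_1)\simeq\pi_0(A)\times\pi_0(A)\simeq\Z_2\oplus\Z_2$ through the endpoint-evaluation fibration of the path space --- the same topological inputs ($S\n{U}(2n)$ simply connected, $S\n{U}(2n)^\theta$ with two Pfaffian-labelled components) packaged into a fibration exact sequence. The residual ambiguity (Pfaffian pairs that are simultaneously flipped) is resolved in the paper by the basepoint normalization and in your argument by the constant-phase rotation $q\mapsto\mu q$ with $\mu^{n}=-1$, $\mu^{2n}=1$; both work, and your version has the small bonus of identifying $\pi_0$ of the determinant-one sewing matrices as $\Z_2\oplus\Z_2$ and exhibiting explicitly how its four components collapse onto the two components of the full space. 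If you write this up, it is worth stating explicitly that the homotopy $t\mapsto\mu_t q_0$ leaves $\mathcal{S}_1$ for intermediate $t$, which is harmless because the theorem only asks for a homotopy within the space of sewing matrices, not one with unit determinant.
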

\proof
Theorem \ref{thm:invariant} (1) provides the
\virg{only if} part of the claim. Therefore, we 
only need to show the \virg{if} part. By Lemma \ref{lem:homotopy_to_the_trivial_determinant} and Lemma \ref{lem:normalization_at_base_point}, we can assume that $\det [q_0 (k)] = \det [q_1 (k)] = 1$ for all $k \in \n{S}^1$ and  $q_0 (0) = q_1(0) = Q$. Under these assumptions, the coincidence of the invariants $\nu_{q_0} = \nu_{q_1}$ is equivalent to ${\rm Pf}[q_0(\pi)] = {\rm Pf}[q_1(\pi)]
$ in view of  Theorem \ref{thm:invariant_by_Pfaffian}.
We will proceed to build a homotopy 
of sewing matrices $\widetilde{q} : \n{S}^1 \times [0, 1] \to \n{U}(2n)$ such that $\widetilde{q}(k, j) = q_j(k)$ for $j = 0, 1$ and $k \in \n{S}^1$. We will do this in several steps: (1) 
As  a first step let us choose a continuous map 
$$
\widetilde{Q}\; :\; [0, 1]\; \longrightarrow\; S\n{U}(2n)^\theta \subset \n{U}(2n)\;,\qquad \widetilde{Q}(j)\;:=\;q_j(\pi)\;,\;\;j\;=\;0,1\;.
$$
This is possible, because Lemma \ref{lemma:conn_comp} establishes
 that $S\n{U}(2n)^\theta$ has two connected components distinguished by the Pfaffian, and ${\rm Pf}[q_0(\pi)] = {\rm Pf}[q_1(\pi)]$ by hypothesis.\\ (2) As a  second step,  let us  build the continuous map $\tilde{q} : \partial ([0, \pi] \times [0, 1]) \to \n{U}(2n)$ defined by
$$
\widetilde{q}(k, t)
\;:=\;
\left\{
\begin{aligned}
&q_0(0)\; =\; q_1(0) \;=\; Q &\quad\text{if} \quad& k = 0\;,\;\; 0 \leqslant t \leqslant 1\;, \\
&q_0(k)&\quad\text{if} \quad&  0 \leqslant k\leqslant \pi\;,\;\; t = 0\;, \\
&q_1(k)&\quad\text{if} \quad& 0 \leqslant k\leqslant \pi\;,\;\; t = 1\;, \\
&\widetilde{Q}(t)&\quad\text{if} \quad&  k = \pi\;,\;\; 0 \leqslant t \leqslant 1\;.
\end{aligned}
\right.
$$
The map above can be extended, without obstructions, to a continuous map $\widetilde{q} : [0, \pi] \times [0, 1] \to \n{U}(2n)$ (still denoted with the same symbol). Indeed, 
by identifying $\partial([0, \pi] \times [0, 1])$ with $\n{S}^1$, one  can think of $\widetilde{q}$ as a loop in $S\n{U}(2n)$. Since the fundamental group of $S\n{U}(2n)$ is trivial, 
it is possible to get the invoked continuous extension.\\ (3) The third step consists in a further extension
$\widetilde{q} : [-\pi, \pi] \times [0, 1] \to \n{U}(2n)$ defined by
$$
\widetilde{q}(k, t)
\;:=\;
\left\{
\begin{aligned}
&\widetilde{q}(k, t) &\quad\text{if} \quad&  (k, t) \in [0, \pi] \times [0, 1]\;, \\
-&\widetilde{q}(-k, t)^\mathtt{t} &\quad\text{if} \quad&  (k, t) \in [-\pi, 0] \times [0, 1]\;.
\end{aligned}
\right.
$$
Since $q_0(0) = q_1(0) = Q$ is skew-symmetric, $\widetilde{q}$ is a well-defined continuous map on $[-\pi, \pi] \times [0, 1]$. 
Moreover, since $\widetilde{q}(\pi, t) = \widetilde{Q}(t)$ is chosen to be skew-symmetric for all $t \in [0, 1]$, one gets that 
$\widetilde{q}(\pi, t) = \widetilde{q}(-\pi, t)$ for all $t \in [0, 1]$. This shows that $\widetilde{q}$
 descends to a continuous map  $\widetilde{q}:\n{S}^1 \times [0, 1]\to \n{U}(2n)$.\\ (4) The fourth and final step consists in the direct check of the fact that the resulting map $\widetilde{q}$  provides the required homotopy of sewing matrices. 
\qed

\medskip

\begin{remark}[A small generalization]
It is likely that   
Theorem \ref{thm:obstruction_on_circle} could be generalized in the case of a 
 $1$-dimensional $\Z_2$-CW complex $(X,\tau)$ such  that the fixed point set $X^\tau$ is a finite collection of points. In this case it is expected that two sewing matrices $q_0, q_1 : X \to \n{U}(2n)$ are linked by a homotopy within the space of sewing matrices if and only if the following two conditions are satisfied:
\begin{enumerate}
\item
$[\det [q_0]] = [\det [q_1]]$  in $H^1_{\Z_2}(X,\Z) \simeq [X, \n{U}(1)]_{\rm inv}$; 
\vspace{1mm}
\item
$\nu_{q_0} = \nu_{q_1}$ in $H^1(X,Z)/H^1_{\Z_2}(X| X^\tau,\Z(1))$.
\end{enumerate}
Note that   in this case the invariants $\nu_{q_j}$ are well-defined since  
$H^2_{\Z_2}(X| X^\tau,\Z(1)) = 0$ as proved in \cite[Proposition 4.1]{denittis-gomi-18-II}. \hfill $\blacktriangleleft$
\end{remark}

\subsection{The role as an index}
As an application of Theorem \ref{thm:obstruction_on_circle}, we can establish an \emph{index-type} theorem for the invariant for topological insulators of class DIII described in Definition \ref{dfn:invariant}. By this we mean that the quantity $\nu_q$, which is purely \emph{topological} by definition, can be computed by an \emph{analytic} expression. The latter is nothing more than a suitable {index} of an appropriate Fredholm operator.

\medskip

In order to prove the main result of this section we need first to introduce some notation. Let ${\rm Mat}_{2n}(\C)$ be the $C^*$-algebra of $2n\times 2n$ matrices. Every continuous function
$f : \n{S}^1 \to {\rm Mat}_{2n}(\C)$ defines a bounded multiplication operator $M_f$ acting on the Hilbert space 
$L^2(\n{S}^1)\otimes\C^{2n}$ according to the prescription
$$
(M_f\psi)(k)\;:=\;f(k)\psi(k)\;,\qquad \psi\in L^2(\n{S}^1)\otimes\C^{2n}\;.
$$
By construction the map $f\mapsto M_f$ is a homomorphism of $C^*$-algebras. In  particular one has
$M^*_f = M_{f^*}$. Using the standard complex conjugation on $\C^{2n}$, one can endow $L^2(\n{S}^1)\otimes\C^{2n}$ with the real structure (or antiunitary involution) $K$ defined by
 $(K \psi)(k) := \overline{\psi(-k)}$. Given a (bounded) operator $A$ on $L^2(\n{S}^1)\otimes\C^{2n}$, we define the complex conjugate of $A$ as $\overline{A} := KAK$, and the transpose of $A$ as $A^\mathtt{t} := (\overline{A})^* = KA^*K$. The operator $A$ will be called \emph{real} if $\overline{A}=A$, \emph{self-adjoint} if $A=A^*$, \emph{complex symmetric} if $A=A^\mathtt{t}$, \emph{skew-adjoint} if $A=-A^*$ and \emph{skew-complex symmetric} if $A=-A^\mathtt{t}$ \cite{li-zhu-13,ko-ko-lee-15}.

\medskip

The next result shows that 
 the multiplication operators associated to sewing matrices are automatically skew-complex symmetric.

\begin{lemma}\label{lemma:SK-sew-mat}
Let $q : \n{S}^1 \to \n{U}(2n)$ be a map defined on the involutive space  $(\n{S}^1,\iota)$ which satisfies the sewing matrix condition
$$
q\big(\iota(k)\big)\;=\;-q(k)^\mathtt{t}\;,\qquad \forall\; k\in \n{S}^1\;.
$$
Then, it holds true that
$$
(M_q)^\mathtt{t} \:=\: - M_q\;.
$$
\end{lemma}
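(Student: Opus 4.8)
The plan is to verify the identity directly from the definition of the transpose of an operator, $A^{\mathtt{t}} = K A^* K$, where $(K\psi)(k) = \overline{\psi(-k)}$. So first I would compute $(M_q)^*$: since $f \mapsto M_f$ is a $*$-homomorphism, we have $(M_q)^* = M_{q^*}$, where $q^*$ denotes the pointwise adjoint $k \mapsto q(k)^*$. Then I would conjugate by $K$ and track how $K$ acts on a multiplication operator.

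The key computation is the following. For $\psi \in L^2(\n{S}^1)\otimes\C^{2n}$,
\begin{align*}
\big((M_q)^{\mathtt{t}}\psi\big)(k)
&= \big(K (M_q)^* K \psi\big)(k)
= \overline{\big((M_q)^* (K\psi)\big)(-k)}
= \overline{q(-k)^* (K\psi)(-k)}\\
&= \overline{q(-k)^*}\; \overline{(K\psi)(-k)}
= \overline{q(-k)^*}\; \psi(k)\;.
\end{align*}
Now $\overline{q(-k)^*} = q(-k)^{\mathtt{t}} = q(\iota(k))^{\mathtt{t}}$, and the sewing-matrix condition $q(\iota(k)) = -q(k)^{\mathtt{t}}$ gives $q(\iota(k))^{\mathtt{t}} = -q(k)$. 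Hence $\big((M_q)^{\mathtt{t}}\psi\big)(k) = -q(k)\psi(k) = -(M_q\psi)(k)$, which is exactly the claim.

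There is essentially no obstacle here; the only thing to be careful about is bookkeeping with the two operations that $K$ performs simultaneously — the complex conjugation of the matrix entries and the reflection $k \mapsto -k$ of the argument — and to make sure these are applied in the right order relative to the adjoint. A minor point worth stating explicitly is that $K$ is an (antiunitary) involution, so $KK = \mathrm{Id}$ and the definition $A^{\mathtt{t}} := KA^*K$ makes sense; this is already recorded in the text preceding the lemma. One could also phrase the argument as an identity between matrix-valued functions: the map $q \mapsto \widehat{q}$ with $\widehat{q}(k) := \overline{q(-k)}$ is the transpose at the level of multiplication operators, i.e. $M_{\widehat{q}^{\,*}} = (M_q)^{\mathtt{t}}$, and the sewing-matrix condition says precisely $\widehat{q}^{\,*} = -q$. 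Either way the proof is a couple of lines.
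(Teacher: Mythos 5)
Your computation is correct and is essentially identical to the paper's own proof: both expand $(M_q)^{\mathtt{t}}\psi = K M_{q^*} K\psi$ pointwise, use $\overline{q(\iota(k))^*} = q(\iota(k))^{\mathtt{t}}$ together with the sewing condition to get $-q(k)$, and note $\overline{(K\psi)(-k)} = \psi(k)$. Nothing is missing.
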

\proof
One has that
\begin{align*}
(K M^*_q K \psi)(k) &\;=\; \overline{(M_{q^*} K \psi)(-k)}
\;=\; \overline{q(\iota(k))^*}\; \overline{(K \psi)(-k)} \\
&\;=\; - q(k) \psi(k)
= - (M_q \psi)(k)
\end{align*}
for every $\psi\in L^2(\n{S}^1)\otimes\C^{2n}$. This completes the proof.
\qed

\medskip

The Fourier modes $\phi_m(k):=\expo{\ii m k}$, with $m\in\Z$, provides a basis for $L^2(\n{S}^1)$. Let 
$\s{H}_m:=\C[\phi_m]\otimes\C^{2n}$ be the subspace of $L^2(\n{S}^1)\otimes\C^{2n}$ generated by $\phi_m\otimes {\rm v}$ with ${\rm v}\in \C^{2n}$ and define the subspace of positive frequencies
$\s{H}_+:=\bigoplus_{m\in\N}\s{H}_m$. The associated 
orthogonal projection $\Pi:L^2(\n{S}^1)\otimes\C^{2n}\to \s{H}_+$
is known as the \emph{Hardy projection}.
Its adjoint is given by the inclusion $\Pi^* : \s{H}_+\to L^2(\n{S}^1)\otimes\C^{2n}$. Since $K\phi_m=\phi_m$ for every $m\in\Z$, it follows that  $\s{H}_+$ is preserved by the real structure $K$. 
Therefore (with a little abuse of notation) one has that 
$K \Pi = \Pi K$ and $K \Pi^* = \Pi^* K$. 
The \emph{Toeplitz operator} associated to the continuous function $f : \n{S}^1 \to {\rm Mat}_{2n}(\C)$ is the bounded operator $T_f$ on $\s{H}_+$
defined by the compression $T_f := \Pi M_f \Pi^*$. As a matter of fact, if $f$ vanishes nowhere, then  $T_f$ is a Fredholm operator \cite[Corollary 3.5.12]{murphy-13} with the associated {Noether-Fredholm index} \cite[Section 1.4]{murphy-13} 
$$
{\rm ind}(T_f)\;:=\;{\rm dim}\; {\rm Ker} (T_f)\;-\;{\rm dim}\; {\rm Ker} (T_f^*)\;\in\;\Z\;.
$$
However, when  $T_f$ is skew-complex symmetric
one automatically gets ${\rm ind}(T_f)=0$. In fact from 
$(T_f)^\mathtt{t}= - T_f$ one
immediately infers the equality of the dimension of ${\rm Ker} (T_f)$ and ${\rm Ker} (T_f^*)$.
 Therefore, to capture the topology of  skew-complex symmetric Toeplitz operators one needs a secondary index. Following \cite{atiyah-singer-69,denittis-schulz-baldes-14,schulz-baldes-15}, we will consider the \emph{$\Z_2$-(Fredholm) index} defined by 
\begin{equation}\label{eq:Z2-ind}
{\rm ind}_{\Z_2}(T_f)\;:=\;(-1)^{{\rm dim}\; {\rm Ker} (T_f)}\;\in\;\Z_2\;.
\end{equation}
This index turns out to be  a deformation invariant for a skew-complex symmetric Fredholm operator. To justify the latter claim, let us consider the unitary operator on $L^2(\n{S}^1)\otimes\C^{2n}$ defined by
$$
\s{I}\;:=\;{\bf 1}_{L^2}\;\otimes\; \left(
\begin{array}{cc}
 0& -{\bf 1}_{n} \\
  +{\bf 1}_{n} & 0
\end{array}
\right)\;.
$$
One can check that $\s{I}^{-1}=\s{I}^*=-\s{I}$, and $K\s{I}K=\s{I}$, namely $\s{I}$ is \emph{real skew-symmetric}. Moreover,  
$\s{I}$ commutes with   the Hardy projection $\Pi$, and therefore it provides a real skew-symmetric operator also on the subspace $\s{H}_+$. If $A$ is skew-complex symmetric Fredholm operator then 
$A':=\s{I}A$ satisfies the relation $\s{I}^*(A')^\mathtt{t}\s{I}=A'$, namely $A'$ is an \emph{odd symmetric operator} in the parlance of \cite{denittis-schulz-baldes-14,schulz-baldes-15}.
In view of the bijection induced by the mapping $A\mapsto \s{I}A$, and the fact that the dimension of the kernels of $A$ and $\s{I}A$ are the same, we can  use \cite[Theorem 2]{schulz-baldes-15} to state:
\begin{lemma}\label{lemma_top_ind}
The set of skew-complex symmetric Toeplitz operators on the Hilbert space $\s{H}_+$ is  the disjoint union of two open and connected components labelled by ${\rm ind}_{\Z_2}$.
\end{lemma}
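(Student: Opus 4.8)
The plan is to transport the statement to the known topology of odd symmetric Fredholm operators and then feed in the one-dimensional homotopy classification of sewing matrices. Write $\s{T}$ for the family in the statement, i.e. the Toeplitz operators $T_q$ with $q:\n{S}^1\to\n{GL}(2n)$ continuous and $q(\iota(k))=-q(k)^{\mathtt t}$; by the Fredholm criterion \cite[Corollary 3.5.12]{murphy-13} and Lemma \ref{lemma:SK-sew-mat} these are Fredholm and skew-complex symmetric. As recalled just before the statement, $A\mapsto\s{I}A$ is a homeomorphism of $\s{T}$ onto a family of odd symmetric Fredholm operators that preserves $\dim{\rm Ker}$, so \cite[Theorem 2]{schulz-baldes-15} makes ${\rm ind}_{\Z_2}$ locally constant on $\s{T}$. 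Consequently $\s{T}$ is the disjoint union of the two open subsets $\s{T}_{\pm1}:={\rm ind}_{\Z_2}^{-1}(\pm1)$, and it remains to prove that each of them is non-empty and path connected.

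Non-emptiness is witnessed by explicit unitary-valued symbols. The constant symbol $q\equiv Q$ of \eqref{eq:mat-Q} gives an invertible $T_Q$ with ${\rm Ker}\,T_Q=0$, so ${\rm ind}_{\Z_2}(T_Q)=+1$; on the other hand, for $2n=2$ the sewing matrix $q_1(k):=\left(\begin{smallmatrix}0&\expo{\ii k}\\-\expo{-\ii k}&0\end{smallmatrix}\right)$ (and $q_1\oplus Q'$ with $Q'$ the constant symplectic form on $\C^{2n-2}$ in general) produces $T_{q_1}=\left(\begin{smallmatrix}0&S\\-S^\ast&0\end{smallmatrix}\right)$, with $S$ the unilateral shift, so $\dim{\rm Ker}\,T_{q_1}=1$ and ${\rm ind}_{\Z_2}(T_{q_1})=-1$.

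For path connectedness I would first reduce to unitary symbols via the polar retraction $f_t(k):=f(k)\bigl(f(k)^\ast f(k)\bigr)^{-t/2}$, $t\in[0,1]$: it stays pointwise in $\n{GL}(2n)$, is norm continuous in $t$, and preserves the constraint $f(\iota(k))=-f(k)^{\mathtt t}$. This last point is the one genuine computation and follows from the intertwining identity $A^{\mathtt t}\,g(\overline{A}A^{\mathtt t})=g(A^{\mathtt t}\overline{A})\,A^{\mathtt t}$ together with the compatibility of complex conjugation with the continuous functional calculus of positive matrices. Hence every element of $\s{T}$ is joined inside $\s{T}$ to some $T_q$ with $q:\n{S}^1\to\n{U}(2n)$ a genuine sewing matrix, and it suffices to connect two of the latter. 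Since $(\n{S}^1,\iota)$ satisfies Assumption \ref{assumption}, each sewing matrix $q$ carries a well-defined $\Z_2$-valued invariant $\nu_q$; and if $\nu_q=\nu_{q'}$, then Theorem \ref{thm:obstruction_on_circle} gives a homotopy of sewing matrices from $q$ to $q'$, which through the continuity of $f\mapsto T_f$ yields a path in $\s{T}$ from $T_q$ to $T_{q'}$. Thus for each $c\in\Z_2$ the set $\{T_q\mid q\ \text{a sewing matrix},\ \nu_q=c\}$ is path connected; ${\rm ind}_{\Z_2}$ takes both values on their union (by the two unitary-symbol examples) and is constant on each, so the two sets are disjoint and are exactly the unitary-symbol parts of $\s{T}_{+1}$ and $\s{T}_{-1}$. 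Together with the polar-retraction reduction this proves that $\s{T}_{+1}$ and $\s{T}_{-1}$ are each path connected, completing the proof.

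The anticipated main obstacle is precisely the passage in the last paragraph: a priori, restricting from all odd symmetric Fredholm operators (where \cite{schulz-baldes-15} supplies the two connected components) to the thin subfamily of Toeplitz operators could break a component, and the genuinely one-dimensional input Theorem \ref{thm:obstruction_on_circle} is what rules this out. The only nontrivial computation is the verification that the polar retraction respects skew-complex symmetry, and it is short.
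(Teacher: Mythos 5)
Your proof is correct, and on the connectedness claim it does genuinely more than the paper. The paper's entire justification for this lemma is the paragraph preceding it: the homeomorphism $A\mapsto\s{I}A$ onto odd symmetric Fredholm operators together with \cite[Theorem 2]{schulz-baldes-15}, which gives two open connected components for the space of \emph{all} skew-complex symmetric Fredholm operators. Read literally for the Toeplitz subfamily, that argument only yields what you call the easy half --- disjointness and openness of the two index level sets --- and the possibility that a component breaks upon restriction is exactly the obstacle you identify. Your polar retraction to unitary symbols (the intertwining identity $A^{\mathtt t}\,g(\overline{A}A^{\mathtt t})=g(A^{\mathtt t}\overline{A})\,A^{\mathtt t}$ is indeed the only computation, and it checks out, as does the joint continuity in $(k,t)$ and hence the sup-norm continuity of the path of symbols) combined with Theorem \ref{thm:obstruction_on_circle} supplies the missing path-connectedness, and your two witnesses coincide with the paper's $q_+=Q$ and $q_-$ from the proof of Theorem \ref{the:index}, so the non-emptiness step matches. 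There is no circularity in invoking Theorem \ref{thm:obstruction_on_circle}: it precedes this lemma in the paper and does not use it. The trade-off is that your version of the lemma now depends on the one-dimensional homotopy classification of sewing matrices, whereas the paper's intended content --- and the only way the lemma is actually used afterwards, namely that ${\rm ind}_{\Z_2}(T_q)$ is locally constant and hence a homotopy invariant of $q$ --- is dimension-independent functional analysis; your argument buys a literally true statement about the thin Toeplitz family at the cost of importing the obstruction theory that the index theorem is meant to illuminate.
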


\medskip

 Therefore, as a consequence of Lemma \ref{lemma:SK-sew-mat} and Lemma \ref{lemma_top_ind}, it turns out that the $\Z_2$-index is the correct invariant to detect the topology of the Toeplitz operator $T_q=\Pi M_q \Pi^*$ associated to a sewing matrix $q : \n{S}^1 \to \n{U}(2n)$.
Since a homotopy of sewing matrices  induces a homotopy of the associated Toeplitz operators in the space of
skew-complex symmetric Fredholm operators it follows that ${\rm ind}_{\Z_2}(T_q)$
provides  a homotopy invariant of the sewing matrix $q$. The next result shows that this invariant coincides with $\nu_q$.
\begin{theorem}[Index theorem]\label{the:index}
Let $q : \n{S}^1 \to \n{U}(2n)$ be a map defined on the involutive space  $(\n{S}^1,\iota)$ which satisfies the sewing matrix condition
$$
q\big(\iota(k)\big)\;=\;-q(k)^\mathtt{t}\;,\qquad \forall\; k\in \n{S}^1\;.
$$
Then, it holds true that
$$
\nu_q \;=\;  {\rm ind}_{\Z_2}(T_q)\;.
$$
In particular, this implies that 
$$
\frac{{\rm Pf}\big[q(\pi)\big]}{{\rm Pf}\big[q(0)\big]}\; \frac{{\rm det}\big[q(0)\big]^{\frac{1}{2}}}{{\rm det}\big[q(\pi)\big]^{\frac{1}{2}}}
 \;=\; (-1)^{{\rm dim}\; {\rm Ker} (T_q)}\;.
$$
where the branch $k\mapsto {\rm det}[q(k)]^{\frac{1}{2}}$ must be chosen continuously between the two
fixed points $k=0,\pi$.

\end{theorem}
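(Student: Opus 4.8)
The plan is to prove the identity $\nu_q = {\rm ind}_{\Z_2}(T_q)$ by exploiting that both sides are homotopy invariants of the sewing matrix $q$ and that both are detected by the same $\Z_2$-valued data at the two fixed points. Since Theorem~\ref{thm:obstruction_on_circle} tells us that $\nu_{q_0} = \nu_{q_1}$ \emph{if and only if} $q_0$ and $q_1$ are homotopic through sewing matrices, it suffices to check the equality $\nu_q = {\rm ind}_{\Z_2}(T_q)$ on one representative in each of the two homotopy classes of sewing matrices $\n{S}^1 \to \n{U}(2n)$. Equivalently, since ${\rm ind}_{\Z_2}(T_q)$ is also a homotopy invariant (by Lemma~\ref{lemma_top_ind}, the skew-complex symmetric Toeplitz operators split into exactly two connected components labelled by ${\rm ind}_{\Z_2}$, and a homotopy of sewing matrices induces a continuous path of such operators), the claim reduces to showing that the two invariants \emph{both distinguish the two components} and agree on one pair of representatives. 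So the strategy is: (i) reduce to a normal form for $q$, (ii) compute $\nu_q$ on the normal form using the Teo--Kane formula of Theorem~\ref{thm:invariant_by_Pfaffian}, (iii) compute ${\rm dim}\,{\rm Ker}(T_q)$ directly for the normal form, and (iv) verify both give the same element of $\Z_2$.

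First I would use Lemma~\ref{lem:homotopy_to_the_trivial_determinant} and Lemma~\ref{lem:normalization_at_base_point} to replace $q$ by a homotopic sewing matrix $q'$ with $\det[q'(k)] = 1$ for all $k$ and $q'(0) = Q$, where $Q$ is the standard symplectic matrix of~\eqref{eq:mat-Q}; neither $\nu_q$ nor ${\rm ind}_{\Z_2}(T_q)$ changes under this replacement. In this normal form, Theorem~\ref{thm:invariant_by_Pfaffian} gives $\nu_{q'} = {\rm Pf}[q'(\pi)]/{\rm Pf}[q'(0)] = {\rm Pf}[q'(\pi)]/{\rm Pf}[Q]$, so $\nu_{q'} = +1$ or $-1$ according to whether $q'(\pi)$ lies in the same component of $S\n{U}(2n)^\theta$ as $Q$ or not (Lemma~\ref{lemma:conn_comp}). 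Thus it is enough to exhibit, for each of the two components, an explicit sewing matrix in normal form and to compute ${\rm dim}\,{\rm Ker}(T_{q'})$ for it. A natural choice for the trivial component is the constant map $q' \equiv Q$: then $T_{q'} = \Pi M_Q \Pi^*$ is a direct sum of $2n$ copies of the identity Toeplitz operator (up to the constant unitary $Q$), which is invertible, so ${\rm dim}\,{\rm Ker}(T_{q'}) = 0$ and ${\rm ind}_{\Z_2}(T_{q'}) = +1 = \nu_{q'}$. For the non-trivial component, I would build $q'$ from $Q$ by multiplying one $2\times 2$ block by a loop of determinant one that connects $Q$ at $k = 0$ to the ``other'' skew-symmetric matrix of determinant one at $k = \pi$ — concretely a rotation in the relevant plane — designed so that the associated scalar Toeplitz operator picks up exactly one unit of winding, hence a one-dimensional kernel. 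That gives ${\rm dim}\,{\rm Ker}(T_{q'}) = 1$ and ${\rm ind}_{\Z_2}(T_{q'}) = -1 = \nu_{q'}$.

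The main obstacle will be step (iii): controlling ${\rm dim}\,{\rm Ker}(T_{q'})$ for the chosen representative of the non-trivial class. While for a scalar symbol $f : \n{S}^1 \to \n{U}(1)$ the Toeplitz kernel dimension is governed by the winding number (Gohberg--Krein: ${\rm ind}(T_f) = -\deg(f)$ and, for a unitary-valued symbol, the kernel or cokernel is exactly $|\deg(f)|$-dimensional), for a $\n{U}(2n)$-valued symbol the kernel dimension is \emph{not} a homotopy invariant in general — only its parity is, via the skew-complex symmetry. I would therefore choose the normal-form representative so that $M_{q'}$ decomposes as a direct sum of a $2\times 2$ skew-complex symmetric block (carrying all the topology) and a trivial complement, reducing the kernel computation to the $2\times 2$ case; there one can diagonalize or pass to the explicit Fourier picture and count kernel vectors by hand. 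An alternative, perhaps cleaner, route that avoids any explicit kernel count: observe that ${\rm ind}_{\Z_2}(T_{q_0}) \cdot {\rm ind}_{\Z_2}(T_{q_1})$ depends only on the homotopy class of the ``difference'' of $q_0$ and $q_1$, and use the additivity of $\nu$ under direct sums (Theorem~\ref{thm:invariant}~(3)) together with the additivity ${\rm ind}_{\Z_2}(T_{q \oplus q'}) = {\rm ind}_{\Z_2}(T_q) \cdot {\rm ind}_{\Z_2}(T_{q'})$ to reduce to a single generator of $KR^{-3}(\n{S}^1,\iota) \simeq \Z_2$, whose $\Z_2$-index can be read off from the Teo--Kane formula applied to a standard generating model. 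The final displayed consequence is then immediate by substituting the Teo--Kane formula of Theorem~\ref{thm:invariant_by_Pfaffian} for the left-hand side and the definition~\eqref{eq:Z2-ind} of ${\rm ind}_{\Z_2}$ for the right-hand side.
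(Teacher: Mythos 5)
Your proposal is correct and follows essentially the same route as the paper: reduce to one representative per homotopy class via Theorem \ref{thm:obstruction_on_circle} and the homotopy invariance of both sides, take the constant map $Q$ for the trivial class, and for the non-trivial class take $Q$ with one $2\times 2$ block replaced by a winding loop, then compute the one-dimensional Toeplitz kernel by hand on that block. The paper's explicit representative $q_-$ and its Fourier-mode kernel computation are exactly the construction you describe.
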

\proof
It is enough to check the coincidence of the invariants for a representative of each of the 
\emph{two}
homotopy classes of sewing matrices (\cf Theorem \ref{thm:obstruction_on_circle}). 
One homotopy class is represented by the constant map
$q_+(k) = Q$ for all $k\in\n{S}^1$, with $Q$ defined by \eqref{eq:mat-Q}. An immediate application of Theorem \ref{thm:invariant_by_Pfaffian} provides $\nu_{q_+}=+1$.
 On the other hand, since $M_{q_0}$ is the multiplication operator by a constant invertible matrix, it follows that $T_{q_+}:=\Pi M_{q_+} \Pi^*$ is invertible in $\s{H}_+$. As a consequence 
the dimension of ${\rm Ker} (T_{q_+})$ is zero, and in turn 
${\rm ind}_{\Z_2}(T_{q_+})=+1$. This shows that $\nu_{q_+}={\rm ind}_{\Z_2}(T_{q_+})$. The second homotopy class can be represented by
$$
q_-(k)\;: =\;
\left(
\begin{array}{cc|cc}
0 & +\expo{+\ii k} & 0 & 0 \\
-\expo{-\ii k} & 0 & 0 & 0 \\
\hline
0 & 0 & 0& -{\bf 1}_{n-1} \\
0 & 0 & +{\bf 1}_{n-1} & 0
\end{array}
\right)\;,\qquad k\in\n{S}^1\;.
$$
Indeed, from
$$
\begin{aligned}
{\rm Pf}\big[q_-(k)\big]\;&=\;{\rm Pf}\left[
\left(\begin{array}{cc}
0 & +\expo{+\ii k} \\
-\expo{-\ii k} & 0  \\
\end{array}
\right)\right]\;{\rm Pf}\left[
\left(
\begin{array}{cc}
 0& -{\bf 1}_{n-1} \\
  +{\bf 1}_{n-1} & 0
\end{array}
\right)\right]\\
&=\;(-1)^{\frac{n(n-1)}{2}}\;{\rm Pf}\left[
\left(\begin{array}{cc}
0 & +\expo{+\ii k} \\
-\expo{-\ii k} & 0  \\
\end{array}
\right)\right]
\end{aligned}\;,
\qquad k=0,\pi
$$
one gets that 
$$
{\rm Pf}\big[q_-(0)\big]\;=\;(-1)^{\frac{n(n-1)}{2}}\;=\;-{\rm Pf}\big[q_-(\pi)\big]\;,
$$
and in turn
one  has $\nu_{q_-} = -1$ as a consequence of Theorem \ref{thm:invariant_by_Pfaffian} and  the constancy of the determinant
\[
{\rm det}\big[q_-(k)\big]\;=\;{\rm det}\left[\left(
\begin{array}{cc}
 0& -{\bf 1}_{n-1} \\
  +{\bf 1}_{n-1} & 0
\end{array}
\right)\right]\;=\;(-1)^{n-1}\;,\qquad \forall k\in\n{S}^1\:.
\]
To compute the kernel of $T_{q_-}$ let us consider the orthogonal decomposition  
$\s{H}_+=\s{H}_{+,1}\oplus\s{H}_{+,2}$ induced by the decomposition
$$
\C^{2n}\;=\;\C[e_1,e_2]\;\oplus\;\C[e_3,\ldots,e_{2n}]\;\simeq\;\C^{2}\oplus \C^{2(n-1)}
$$ where $\{e_1,\ldots,e_{2n}\}$ denotes the canonical basis of $\C^{2n}$. Evidently 
$T_{q_-}$ is injective on $\s{H}_{+,2}$. On the other hand, a generic element of $\psi\in \s{H}_{+,1}$ can be represented in terms of the Fourier modes $\phi_m$ as
$$
\psi\;=\;\left(\sum_{m\in\N}a_{m,1}\;\phi_m\otimes e_1\right)+\left(\sum_{m\in\N}a_{m,2}\;\phi_m\otimes e_2\right)\;,\qquad a_{m,\ell}\in\C\;,\quad \ell=1,2
$$
and a direct computation provides
$$
M_{q_-} \Pi^*\psi\;=\;\left(\sum_{m\in\N}-a_{m,1}\;\phi_{m-1}\otimes e_2\right)+\left(\sum_{m\in\N}a_{m,2}\;\phi_{m+1}\otimes e_1\right)\;.
$$
Therefore, one gets that the equation $T_{q_-} \psi$ has the non-trivial solution $\psi=a(\phi_1\otimes e_1)$, $a \in\C$, and as a consequence the dimension of ${\rm Ker} (T_{q_-})$ is one.
This provides ${\rm ind}_{\Z_2}(T_{q_-1})=-1$ and in turn  $\nu_{q_-}={\rm ind}_{\Z_2}(T_{q_-})$. This concludes the proof.
\qed

\begin{remark}[The bulk-edge correspondence]
In the jargon of the theory of topological insulator 
an index formula of the type of Theorem \ref{the:index} is referred as \emph{bulk-edge correspondence} \cite{prodan-schulz-baldes-book}. In fact the topological invariant $\nu_q$ is derived by the data concerning the system in the \emph{bulk}, while the index ${\rm ind}_{\Z_2}(T_q)$ is computed  by means of a Toeplitz operator which encodes the existence of an edge. 
\hfill $\blacktriangleleft$
\end{remark}

\begin{remark}[The interpretation in K-theory]
By combining  the map $A\mapsto \s{I}A$ introduced above with  the argument used in the proof of \cite[Theorem 2 ]{schulz-baldes-15} one obtain a homeomorphism between the 
set of skew-complex symmetric Toeplitz operators on the Hilbert space $\s{H}_+$ and the classifying space $\s{F}^2(\s{H}_\R)$ introduced in \cite{atiyah-singer-69}. As a consequence
the $K$-group $KR^{-2}(\ast)$ can be identified with the set of homotopy class of skew-complex symmetric Fredholm operators, and the $\Z_2$-index \eqref{eq:Z2-ind} realizes the isomorphism $KR^{-2}(\ast) \simeq \Z_2$. The construction of the skew-complex symmetric Fredholm operator $T_q$ from $q$ extends to a homomorphism of $KR$-theory
$$
T \;:\; KR^{-3}(\n{S}^1,\iota)\; \longrightarrow\; KR^{-2}(\ast)\;,
$$
which is bijective by the results obtained so far. Notice that the push-forward
$$
\pi_* \;:\; KR^{-3} (\n{S}^1,\iota)\; \longrightarrow\; KR^{-2}(\ast)
$$
along the projection $\pi : (\n{S}^1,\iota) \to \ast$ is also a bijection. Thus, we have $\pi_* = T$ which means that  the construction with the Toeplitz operator realizes the push-forward.
 \hfill $\blacktriangleleft$
\end{remark}

\subsection{The relation with the KR-theory}\label{sec:KR-theo_II}
In this section we will prove that in the one-dimensional case the 
homomorphism \eqref{int_nu} is indeed an isomorphism.
In Lemma \ref{lemma:1D-range} it  has been proved that 
$$
H^1(\n{S}^1,\Z)/H^1_{\Z_2}\big(\n{S}^1|(\n{S}^1)^\iota,\Z(1)\big) \;\simeq\;\Z_2\;.
$$
On the other hand, by using formula (B.4) and  Table B.1 in \cite{denittis-gomi-14}  one gets
$$
\begin{aligned}
KR^{-3} (\n{S}^1,\iota)\;&\simeq\;KR^{-3}(\ast)\;\oplus\;KR^{-2}(\ast)\\
&\simeq\;0\;\oplus\;\Z_2\;\simeq\;\Z_2\;.
\end{aligned}
$$
Theorem \ref{thm:homomorphism_from_KR} and Remark \ref{rk:obos-S1} establish  the existence of  the well-defined homomorphism
\begin{equation}\label{eq:one-dim_sio-nu}
\nu \;:\; KR^{-3}(\n{S}^1,\iota)\; \longrightarrow\; H^1(\n{S}^1,\Z)/H^1_{\Z_2}\big(\n{S}^1|(\n{S}^1)^\iota,\Z(1)\big)\;.
\end{equation}
\begin{proposition}\label{prop:iso-one}
The homomorphism \eqref{eq:one-dim_sio-nu} is  an isomorphism.
\end{proposition}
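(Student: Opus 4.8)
The plan is to exploit the fact that both sides of \eqref{eq:one-dim_sio-nu} are groups of order two: the source is $KR^{-3}(\n{S}^1,\iota)\simeq\Z_2$ by the splitting recalled just above, and the target is $H^1(\n{S}^1,\Z)/H^1_{\Z_2}(\n{S}^1|(\n{S}^1)^\iota,\Z(1))\simeq\Z_2$ by Lemma \ref{lemma:1D-range}. Consequently it suffices to prove that $\nu$ is \emph{injective}, since an injective homomorphism between two groups of order two is automatically bijective. Before doing so I recall that $\nu$ really is defined on the whole of $KR^{-3}(\n{S}^1,\iota)$: by Remark \ref{rk:obos-S1} every \virg{Quaternionic} vector bundle over $(\n{S}^1,\iota)$ is trivial, so the FKMM homomorphism $\kappa$ in \eqref{eq:FKMM-KR} vanishes identically, whence $\mathrm{Ker}(\kappa)=KR^{-3}(\n{S}^1,\iota)$ and Theorem \ref{thm:homomorphism_from_KR}~(2) applies on the full group.

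For injectivity, let $\xi\in KR^{-3}(\n{S}^1,\iota)$ lie in $\mathrm{Ker}(\nu)$. Exactly as in the proof of Theorem \ref{thm:homomorphism_from_KR}, after stabilising by a suitable complementary twisted bundle we may represent $\xi$ by a triple $(\n{S}^1\times\C^{4n},H_0,H_1)$ whose twisted actions are in the standard form of Proposition \ref{prop:st_rep}, and by passing to the flattened Hamiltonians $Q_{H_0},Q_{H_1}$ we obtain sewing matrices $q_0,q_1:\n{S}^1\to\n{U}(2n)$ with $\nu(\xi)=\nu_{q_0}-\nu_{q_1}$. The hypothesis $\xi\in\mathrm{Ker}(\nu)$ thus forces $\nu_{q_0}=\nu_{q_1}$, so Theorem \ref{thm:obstruction_on_circle} provides a homotopy $\widetilde q:\n{S}^1\times[0,1]\to\n{U}(2n)$ through sewing matrices with $\widetilde q(\cdot,j)=q_j$ for $j=0,1$. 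The associated off-diagonal family of invertible Hermitian maps built from $\widetilde q(\cdot,t)$ is compatible with the standard twisted actions for every $t$ and connects $Q_{H_0}$ to $Q_{H_1}$; since each $Q_{H_i}$ is in turn homotopic to $H_i$ within invertible Hermitian maps compatible with the twisted actions, $H_0$ and $H_1$ are homotopic in that class. Hence the triple lies in ${}^\phi\mathcal{Z}^{\sigma+(0,1)}_{\Z_2}(\n{S}^1)$ and $\xi=0$, so $\mathrm{Ker}(\nu)=0$ and $\nu$ is an isomorphism.

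The only genuine technicality here is the passage back and forth between a $KR$-theory triple and a pair of honest sewing matrices, but this is precisely the bookkeeping already carried out in the proof of Theorem \ref{thm:homomorphism_from_KR}, so nothing new is required; the real work is entirely contained in the obstruction statement, Theorem \ref{thm:obstruction_on_circle}. As an alternative to the kernel computation one can instead verify surjectivity directly using the explicit models of Sections \ref{sec:Teo-Kane}--\ref{sec:one-dim-case}: the constant sewing matrix $q_+\equiv Q$ and the matrix $q_-$ constructed in the proof of Theorem \ref{the:index} satisfy $\nu_{q_+}=+1$ and $\nu_{q_-}=-1$, so $\nu$ sends the class of $(\n{S}^1\times\C^{4},H_{q_-},H_{q_+})$ to the nontrivial element of $\Z_2$; combined with the order count of the first paragraph this again yields the claim.
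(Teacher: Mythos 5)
Your proposal is correct, but your main argument runs in the opposite direction from the paper's. The paper proves surjectivity: it observes that both groups are $\Z_2$ and then exhibits the two explicit sewing matrices $q_\pm$ already constructed in the proof of Theorem \ref{the:index}, with $\nu_{q_\pm}=\pm1$, so that the classes $[q_\pm]$ hit both elements of the target; bijectivity then follows from the order count. This is exactly the ``alternative'' you sketch in your last paragraph. Your primary route instead establishes injectivity: a class in $\mathrm{Ker}(\nu)$ yields sewing matrices $q_0,q_1$ with $\nu_{q_0}=\nu_{q_1}$, Theorem \ref{thm:obstruction_on_circle} then supplies a homotopy of sewing matrices, and the induced path of off-diagonal invertible Hermitian maps (together with the flattening homotopy $H_i\sim Q_{H_i}$) places the representing triple in ${}^\phi\mathcal{Z}^{\sigma+(0,1)}_{\Z_2}(\n{S}^1)$, hence in the zero class. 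This is a valid argument; its cost is that it leans on the full strength of the obstruction theorem (which itself rests on the Pfaffian and connected-component analysis of $S\n{U}(2n)^\theta$), whereas the paper's surjectivity argument is lighter, recycling two representatives already computed. What the injectivity route buys in exchange is a direct verification that $\nu_q$ separates $KR$-classes without appealing to the prior knowledge that the source has only two elements until the very last step; in particular it would localize any failure of bijectivity to a failure of surjectivity. Both arguments correctly reduce to the observation that a homomorphism between two groups of order two is an isomorphism as soon as it is injective (or, equivalently, nonzero), and your preliminary remark that $\mathrm{Ker}(\kappa)=KR^{-3}(\n{S}^1,\iota)$ via Remark \ref{rk:obos-S1} is the right justification for $\nu$ being defined on the whole group.
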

\proof
From the computations above one has that  the homomorphism \eqref{eq:one-dim_sio-nu} has the form $\nu:\Z_2\to\Z_2$. 
Moreover, in the proof of Theorem \ref{the:index} we construct two
sewing matrices $q_\pm$ such that $\nu_{q_\pm}=\pm1$. Let $[q_\pm]\in KR^{-3}(\n{S}^1,\iota)$ be the class of the 
 band insulators represented by the sewing matrix $q_\pm$. Since the homomorphism \eqref{eq:one-dim_sio-nu} corresponds to $\nu:[q_\pm]=\nu_{q_\pm}$ one gets the result.
 \qed

\subsection{The relation with the classification of  \virg{Real} gerbes}\label{sec:gerbs}
The aim of this section is to clarify  the relation between the  invariant $\nu$ and a \emph{gerbe invariant} introduced in \cite{gomi-thiang-21}. 

\medskip

Let us first introduce the  
 gerbe invariant in a way slightly different from the original definition in \cite{gomi-thiang-21}. First of all, let us recall that  a gerbe $\bb{G}$ over the space $S\n{U}(2n)$ is classified by its 
 Dixmier-Douady class 
 $$
 \s{DD}[\bb{G}]\;\in\;H^3\big(S\n{U}(2n), \Z\big)\; \simeq\; \Z\;.
 $$
The basic gerbe $\bb{G}$ over $S\n{U}(2n)$ is a representative of the class $ \s{DD}[\bb{G}]\simeq 1$ which generates $H^3(S\n{U}(2n), \Z)$.
The basic  gerbe $\bb{G}$ admits only two \virg{Real} structures   (up to stable equivalences) over the involutive space $(S\n{U}(2n),\theta)$. We will denote with $\bb{G}_+$ and  
$\bb{G}_-$ the 
 basic gerbe equipped with the two different  `Real'' structures.
 The two \virg{Real} basic  gerbes are classified by their  \virg{Real} Dixmier-Douady class
$$
 \s{DD}_R[\bb{G}_\pm]\;\in\; H^3_{\Z_2}\big(S\n{U}(2n), \Z(1)\big)\;
$$ 
 and the two classes differ by an element of in $H^3_{\Z_2}(\ast, \Z(1)) \subset H^3_{\Z_2}(S\n{U}(2n), \Z(1))$, where $\ast \in S\n{U}(2n)$ is any fixed point under the involution $\theta$.
 As a result, we get   a well-defined element
 $$
 \widehat{\s{DD}}_R[\bb{G}_\pm]\; \in\; H^3_{\Z_2}\big(S\n{U}(2n), \Z(1)\big)/H^3_{\Z_2}\big(\ast, \Z(1)\big)\;,
 $$  
 independently  of the choice of the \virg{Real} structures on $\mathcal{G}$, as well as of the fixed point $\ast \in S\n{U}(2n)$.
 
 \medskip
 
 Let $(X, \tau)$ be an involutive space which admits a fixed point $\ast \in X$, and $q : X \to S\n{U}(2n)$ an equivariant map. 
 Such a map induces by pull-back the homomorphism
 $$
 q^*\;:\;H^3_{\Z_2}\big(S\n{U}(2n), \Z(1)\big)/H^3_{\Z_2}\big(\ast, \Z(1)\big)\;\longrightarrow\;H^3_{\Z_2}\big(X, \Z(1)\big)/H^3_{\Z_2}\big(\ast, \Z(1)\big)\;.
 $$
 The \emph{gerbe invariant} of the map $q : X \to S\n{U}(2n)$ is by definition the pull-back
 $$
dd_R[q]\;:=\;q^*\widehat{\s{DD}}_R[\bb{G}_\pm]\;.
 $$
  This is an invariant of the equivariant homotopy class of $q$. Therefore, $\dd{d}_R$ provides a well-defined  map (still denoted with the same symbol)
\begin{align*}
dd_R \;:\; \big[X, S\n{U}(2n)\big]_{\Z_2}\; \longrightarrow\; 
H^3_{\Z_2}\big(X, \Z(1)\big)/H^3_{\Z_2}\big(\ast, \Z(1)\big).
\end{align*}
The invariants $dd_R[q]$ and $\nu_q$ take values in different groups, and therefore we cannot generally compare these two invariants. However, in the case of the involutive space $(\n{S}^1,\iota)$, one has that  
$$
H^3_{\Z_2}\big(\n{S}^1, \Z(1)\big)/H^3_{\Z_2}\big(\ast, \Z(1)\big)\; \simeq\; \Z_2
$$
 in view of 
 $$
 H^3_{\Z_2}\big(\n{S}^1, \Z(1)\big)\; \simeq\; \tilde{H}^3_{\Z_2}\big(\n{S}^1, \Z(1)\big) \oplus H^3_{\Z_2}\big(\ast, \Z(1)\big)\; \simeq\; \Z_2 \oplus \Z_2
 $$ and 
 $H^3_{\Z_2}(\ast, \Z(1)) \simeq \Z_2$. This allows us to regard $dd_R[q] \in \Z_2$. Since also $\nu_q \in \Z_2$
 we can compare the two $\Z_2$-invariants $dd_R[q]$ and $\nu_q$ for equivariant maps $q : \n{S}^1 \to S\n{U}(2n)$ over the involutive circle $(\n{S}^1,\iota)$.
\begin{theorem}[Gerbe invariant]
Let $q : \n{S}^1 \to S\n{U}(2n)$ be an equivariant map defined on
 the involutive space  $(\n{S}^1,\iota)$ 
 satisfying the sewing matrix condition
$$
q\big(\iota(k)\big)\;=\;-q(k)^\mathtt{t}\;,\qquad \forall\; k\in \n{S}^1\;.
$$
Then, it holds true that
$$
\nu_q \;=\;  dd_R[q]\;.
$$
\end{theorem}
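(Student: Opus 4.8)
The plan is to compare the two $\Z_2$-valued invariants by evaluating them on a complete list of equivariant homotopy representatives. Both are invariants of the $\Z_2$-equivariant homotopy class of the sewing matrix $q:\n{S}^1\to S\n{U}(2n)$ --- for $\nu_q$ this is Theorem~\ref{thm:invariant}~(1), and for $dd_R[q]$ it is the homotopy invariance of the gerbe invariant recalled above. Since the statement assumes $q$ to be $S\n{U}(2n)$-valued, one has $\det[q]\equiv 1$, so the continuous branch of $\det[q(k)]^{1/2}$ is the constant $1$ and the Teo--Kane formula of Theorem~\ref{thm:invariant_by_Pfaffian} collapses to
\[
\nu_q \;=\; \frac{{\rm Pf}[q(\pi)]}{{\rm Pf}[q(0)]}\;\in\;\Z_2\;,
\]
so that $\nu_q$ depends only on which of the two connected components of $S\n{U}(2n)^\theta$ (Lemma~\ref{lemma:conn_comp}) contain the skew-symmetric matrices $q(0)$ and $q(\pi)$. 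Next I would record that there are exactly two equivariant homotopy classes of such $q$: Theorem~\ref{thm:obstruction_on_circle} gives this through the larger space of $\n{U}(2n)$-valued sewing matrices, and to keep the homotopy inside $S\n{U}(2n)$ one corrects the homotopy $\widetilde{q}$ it produces by a diagonal determinant-phase factor, available exactly as in the proof of Lemma~\ref{lem:homotopy_to_the_trivial_determinant}, because $\det[\widetilde{q}(\cdot,t)]$ is an invariant loop in $[\n{S}^1,\n{U}(1)]_{\mathrm{inv}}\simeq H^1_{\Z_2}(\n{S}^1,\Z)=0$ based at the constant $1$. Thus the two classes are represented by the constant map $q_+\equiv Q$ and by the map $q_-$ built in the proof of Theorem~\ref{the:index} (which takes values in $S\n{U}(2n)$), with $\nu_{q_+}=+1$ and $\nu_{q_-}=-1$.

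It then suffices to show $dd_R[q_+]=+1$ and $dd_R[q_-]=-1$. For $q_+$ this is immediate: $Q\in S\n{U}(2n)^\theta$ is a fixed point, so $q_+$ factors as $\n{S}^1\to\{\ast\}\hookrightarrow S\n{U}(2n)$ with $\ast=Q$; since $\widehat{\s{DD}}_R[\bb{G}_\pm]$ lies in the quotient $H^3_{\Z_2}(S\n{U}(2n),\Z(1))/H^3_{\Z_2}(\ast,\Z(1))$, its restriction along $\{\ast\}\hookrightarrow S\n{U}(2n)$ vanishes, hence so does $q_+^*\widehat{\s{DD}}_R[\bb{G}_\pm]$, \ie $dd_R[q_+]=0$, which under $\Z_2\cong\{\pm1\}$ reads $+1=\nu_{q_+}$. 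For $q_-$ I would argue via the fixed-point stratum: the composite $(\n{S}^1)^\iota=\{0,\pi\}\hookrightarrow\n{S}^1\xrightarrow{q_-}S\n{U}(2n)$ takes values in $S\n{U}(2n)^\theta$, and the Pfaffian computation in the proof of Theorem~\ref{the:index}, namely ${\rm Pf}[q_-(0)]=-{\rm Pf}[q_-(\pi)]$, shows by Lemma~\ref{lemma:conn_comp} that $q_-(0)$ and $q_-(\pi)$ lie in the two \emph{different} components of $S\n{U}(2n)^\theta$. One then invokes the explicit construction of the \virg{Real} basic gerbe in \cite{gomi-thiang-21}: the reduced \virg{Real} Dixmier--Douady class, restricted to the fixed-point set, is precisely the $\Z_2$-class that detects the component of $S\n{U}(2n)^\theta$ through its Pfaffian. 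Since $q_-$ joins the two components, the pull-back is the non-trivial element of $H^3_{\Z_2}(\n{S}^1,\Z(1))/H^3_{\Z_2}(\ast,\Z(1))\simeq\Z_2$, giving $dd_R[q_-]=-1=\nu_{q_-}$.

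The main obstacle is this last point: matching the reduced \virg{Real} Dixmier--Douady class of the basic gerbe, restricted to the fixed stratum, with the Pfaffian dichotomy of Lemma~\ref{lemma:conn_comp}. The cleanest way to pin it down is probably to establish directly, for every $S\n{U}(2n)$-valued sewing matrix $q$, the closed formula $dd_R[q]={\rm Pf}[q(\pi)]/{\rm Pf}[q(0)]$ by unwinding the cocycle data of the two \virg{Real} structures on $\bb{G}$ along an arc of $\n{S}^1$ joining the fixed points $0$ and $\pi$; at that point it coincides on the nose with the specialized Teo--Kane formula above, and the theorem follows. Everything else in the argument is either homotopy invariance already available in the paper or the two explicit model computations.
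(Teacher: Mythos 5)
Your proposal follows essentially the same route as the paper's proof: reduce to the two representatives $q_\pm$ via the equivariant homotopy classification, get $\nu_{q_\pm}=\pm1$ from the Teo--Kane formula, and detect $dd_R[q_\pm]$ through the restriction of the basic gerbe's \virg{Real} Dixmier--Douady class to the fixed-point set $S\n{U}(2n)^\theta$. The one step you flag as the \virg{main obstacle} --- that this restriction is constant on each Pfaffian component of $S\n{U}(2n)^\theta$ and differs between the two components --- is precisely what the paper imports from \cite{gomi-thiang-21} (the sign invariant of Definition~4.4, the product-of-signs formula of Corollary~4.7, and Remark~5.7), so no fresh cocycle computation is required; with that citation in place your argument closes.
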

\proof
The proof is similar to that of Theorem \ref{the:index}. By Lemma \ref{lem:normalization_at_base_point} and Theorem \ref{thm:obstruction_on_circle}, the invariant $\nu$ induces a bijection $[\n{S}^1, S\n{U}(2n)]_{\Z_2} \simeq \Z_2$. In the proof of Theorem \ref{the:index}, we constructed two equivariant maps $q_\pm : \n{S}^1 \to S\n{U}(2n)$ such that $\nu_{q_\pm} = \pm 1$.
Therefore, the claim is proved by showing that $dd_R[q_\pm]=\pm 1$.
For the computation, we make use of an expression of 
$dd_R[q_\pm] \in \Z_2$  in terms of the \emph{sign invariant} of \virg{Real} gerbes \cite[Definition 4.4]{gomi-thiang-21}.
 In general, for a \virg{Real} gerbe $\bb{G}$ on an involutive space $(X, \tau)$ and a path connected subspace $Y = Y^\tau \subset X^\tau$ in the fixed point set, there is a well-defined sign $\sigma(\bb{G}, Y) \in \Z_2$. 
 Actually, this is the component in $H^0(Y, \Z_2) \simeq \Z_2$ of the ``Real'' Dixmier-Douady class $\s{DD}_R(\bb{G}|_Y) \in H^3_{\Z_2}(Y, \Z(1))$ of the restricted gerbe $\bb{G}|_Y$ with respect to the decomposition 
 $$
 H^3_{\Z_2}\big(Y, \Z(1)\big)\; \simeq\; H^0(Y, \Z_2)\; \oplus\; H^2(Y, \Z_2)
 $$ 
 as shown in \cite[Lemma B.1]{gomi-thiang-21}. 
 The fixed point set of the involutive space $(\n{S}^1,\iota)$ consists of the points  $k_+ = 0$ and $k_- = \pi$. 
Let $\bb{G}_\pm$ be the two \virg{Real} basic gerbes  over $S\n{U}(2n)$ and $q^*\bb{G}_\pm$ the pull-back gerbes over $\n{S}^1$.
 In view of \cite[Corolary 4.7]{gomi-thiang-21} one has that
 $$
 \begin{aligned}
 q^*\s{DD}_R(\bb{G}_\pm)\;&=\;\s{DD}_R(q^*\bb{G}_\pm)\;=\;\sigma\big(q^*\bb{G}_\pm, \{k_+\}\big)\;\sigma\big(q^*\bb{G}_\pm, \{k_-\}\big)\\
&=\; \sigma\big(\bb{G}_\pm, \{q(k_+)\}\big)\;\sigma\big(\bb{G}_\pm, \{q(k_-)\}\big)
 \end{aligned}
 $$
 is computed as a product of signs. Since the right-hand side of the equation above is not affected to the passage to the quotient which defines the map $dd_R$, one gets
 $$
 dd_R[q]\;=\;\sigma\big(\bb{G}_\sharp, \{q(k_+)\}\big)\;\sigma\big(\bb{G}_\sharp, \{q(k_-)\}\big)\;\in\;\Z_2
 $$
 where $\bb{G}_\sharp$ stands irrelevantly for $\bb{G}_+$ or $\bb{G}_-$.
 As seen in Lemma \ref{lemma:conn_comp}, the fixed point set of  $(S\n{U}(2n), \theta)$ has two connected components. As it is pointed out in \cite[Remark 5.7]{gomi-thiang-21}, the sign invariant of  $\bb{G}_\sharp$ is constant on the connected components of $S\n{U}(2n)^\theta$ and differs on each of the two components.
 As a result, $dd_R[q] = + 1$ if and only if $q(k_+)$ and $q(k_-)$ are in the same  connected component
and  
 $dd_R[q] = - 1$
 if and only if $q(k_+)$ and $q(k_-)$ are in different connected components. This  immediately proves that  $dd_R[q_\pm] = \pm 1$ and the proof is over.
\qed

\section{The two-dimensional case}\label{sec:two-dim-case}
The analysis of physical systems 
shows that there are (at least) two interesting
 definitions for two-dimensional band insulators of class DIII
which differ for the involutive base space. By borrowing the discussion in \cite[Section 2]{denittis-gomi-14}, we can distinguish between the \emph{Dirac} (or \emph{free}) case and the \emph{Bloch} (or \emph{periodic}) case.
In the Dirac case  the involutive  space
is  $(\n{S}^2,\iota)$ where
$$
\n{S}^2\;:=\;\left\{(x_0,x_1,x_2)\in\R^3\;\big|\;x_0^2+x_1^2+x_2^2=1\right\}
$$
is the two-dimensional sphere with involution given by
$$
\iota(x_0,x_1,x_2)\;:=\; (x_0,-x_1,-x_2)\;,\qquad \forall\;(x_0,x_1,x_2)\in\n{S}^2\;.
$$
In the Bloch case  the involutive  space
is  $(\n{T}^2,\iota)$ where
$$
\n{T}^2\;:=\;\n{S}^1\times\n{S}^1\;=\;\left(\R/2\pi\Z\right)\times \;\left(\R/2\pi\Z\right)
$$
is the two-dimensional torus with involution given by
$$
\iota(k_1,k_2)\;:=\; (-k_1,-k_2)\;,\qquad \forall\;(k_1,k_2)\in\n{T}^2\;.
$$
In short, one has that $(\n{T}^2,\iota)=(\n{S}^1,\iota)\times (\n{S}^1,\iota)$ is given by two copies of the involutive  space
for the one-dimensional case. 

\medskip

According to the terminology currently used in the theory of topological insulators, the topological phases detected in the Dirac case are called \emph{strong}. Usually, the Bloch case present a richer family of topological phases.
Some of these are inherited from the Dirac case via the pull-back
of the equivariant map $\pi_0:(\n{T}^2,\iota)\to (\n{S}^2,\iota)$
described in \cite[eq. (4.14)]{denittis-gomi-14-gen}
and are still called strong. The remaining topological phases
are known as \emph{weak}.

\medskip

It is worth noting that
both spaces  $(\n{S}^2,\iota)$ and $(\n{T}^2,\iota)$ meet the properties (a) and (b) listed in Assumption \ref{assumption}. However, the  isomorphisms
$$
H^2_{\Z_2}\big(\n{S}^2|(\n{S}^2)^\iota,\Z(1)\big)\;\stackrel{}{\simeq}\;\Z_2\;\stackrel{}{\simeq}\;H^2_{\Z_2}\big(\n{T}^2|(\n{T}^2)^\iota,\Z(1)\big)\,,
$$
and the bijectivity of the FKMM invariant
 \cite[Theorem 1.2]{denittis-gomi-14-gen}, imply that
the property (c) of Assumption \ref{assumption}
 is not automatically guaranteed. This fact  makes the analysis of the two-dimensional case slightly more complicated than the one-dimensional case.

\subsection{Strong invariant: the Dirac case} 
Let us start by observing that for a two-dimensional sphere it holds true that $H^1(\n{S}^2,\Z)=0$ which immediately implies that
$$
H^1(\n{S}^2,\Z)/H^1_{\Z_2}(\n{S}^2|(\n{S}^2)^\iota,\Z(1))\;=\;0\;.
$$
Therefore, in the Dirac case the map $\nu$ in \eqref{int_nu} amounts to the trivial homomorphism, and for this reason it is unable to capture the topology of the related band insulator of class DIII.

\medskip

On the other hand,  from equations (B.3) and (B.9) and Tables B.1 and B.3 in \cite{denittis-gomi-14}  one gets
$$
\begin{aligned}
KR^{-3} (\n{S}^2,\iota)\;&\simeq\;\widetilde{KR}^{-3}(\n{S}^2,\iota)\;\oplus\;KR^{-3}(\ast)\\
&\simeq\;\widetilde{KR}^{0}(\n{S}^6,\iota)\;\oplus\;0\;\simeq\;\Z_2\;.
\end{aligned}
$$
The latter equation shows that there is a non-trivial topology for 
band insulators of class DIII over $(\n{S}^2,\iota)$.

\medskip

The topology of $KR^{-3} (\n{S}^2,\iota)$ can be detected via the FKMM invariant by using the map $\kappa$ in \eqref{eq:FKMM-KR}.
First of all let us observe  that 
$$
 {\rm Vec}^{2n}_{\rr{Q},0}({\n{S}}^2,\iota)\;\simeq\;{\rm Vec}^{2}_{\rr{Q}}({\n{S}}^2,\iota)\;\simeq\;H^2_{\Z_2}\big(\n{S}^2|(\n{S}^2)^\iota,\Z(1)\big)\;\simeq\;\Z_2\;.
$$
The first isomorphism  is a consequence of  \cite[Proposition 4.1]{denittis-gomi-14-gen}
which provides the equality ${\rm Vec}^{2n}_{\rr{Q},0}({\n{S}}^2,\iota)={\rm Vec}^{2n}_{\rr{Q}}({\n{S}}^2,\iota)$
(\ie every   \virg{Quaternionic} vector bundle over 
 $(\n{S}^2,\iota)$ can be built over a product vector bundle)  and \cite[Corollary 2.1]{denittis-gomi-14-gen} which describes the stable range for low-dimensional \virg{Quaternionic} vector bundles. The second isomorphism is provided by the FKMM invariant \cite[Theorem 1.2]{denittis-gomi-14-gen}. With this information we can prove the following result:
\begin{proposition}[Two-dimensional strong invariant]\label{prop_strong}
The homomorphism
$$
\kappa\;:\;KR^{-3} (\n{S}^2,\iota)\;\longrightarrow\;H^2_{\Z_2}\big(\n{S}^2|(\n{S}^2)^\iota,\Z(1)\big)
$$
described in Theorem \ref{thm:homomorphism_from_KR} is indeed an isomorphism.
\end{proposition}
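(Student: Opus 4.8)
The plan is to leverage that both the source $KR^{-3}(\n{S}^2,\iota)$ and the target $H^2_{\Z_2}(\n{S}^2|(\n{S}^2)^\iota,\Z(1))$ have already been identified with $\Z_2$. A group homomorphism between two copies of $\Z_2$ is an isomorphism if and only if it is non-zero, so it suffices to exhibit a single class in $KR^{-3}(\n{S}^2,\iota)$ on which $\kappa$ does not vanish; equivalently, I would prove that $\kappa$ is surjective. (Proving injectivity directly would be more awkward here, since the companion homomorphism $\nu$ of \eqref{int_nu} is valued in $H^1(\n{S}^2,\Z)/H^1_{\Z_2}(\n{S}^2|(\n{S}^2)^\iota,\Z(1))=0$ and therefore carries no information over $\n{S}^2$.)

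To produce such a class I would start from the non-trivial side. By the FKMM isomorphism ${\rm Vec}^{2}_{\rr{Q}}(\n{S}^2,\iota)\simeq H^2_{\Z_2}(\n{S}^2|(\n{S}^2)^\iota,\Z(1))\simeq \Z_2$ recalled just above the statement, there is a \virg{Quaternionic} vector bundle $\bb{E}$ of rank $2$ over $(\n{S}^2,\iota)$ with $\kappa(\bb{E})\neq 0$. By \cite[Proposition 4.1]{denittis-gomi-14-gen} every \virg{Quaternionic} bundle over $(\n{S}^2,\iota)$ is built over a product bundle, hence $\bb{E}\simeq \bb{E}_q$ for some sewing matrix $q:\n{S}^2\to \n{U}(2)$ with $q(\iota(x))=-q(x)^{\mathtt{t}}$.

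Running the dictionary of Proposition \ref{prop:st_rep} in reverse, the flattened operator
$$
H(x)\;=\;\left(\begin{array}{cc} 0 & q(x)^{*} \\ q(x) & 0 \end{array}\right)\;,\qquad x\in\n{S}^2\;,
$$
together with the symmetries $T,C$ in the standard form \eqref{eq:sew_02}, is a band insulator of class DIII over $(\n{S}^2,\iota)$ seated in $\s{H}=\C^{4}$, already in standard form, whose sewing matrix is $q$ (this uses Proposition \ref{prop:4-dim} for the identification $\s{H}\simeq\C^{4}$). Pairing it with the constant reference $H_0$ whose sewing matrix is $q_0\equiv Q$, the matrix \eqref{eq:mat-Q} with $n=1$, the triple $(\n{S}^2\times\C^{4},H,H_0)$ determines an element of $KR^{-3}(\n{S}^2,\iota)$ through the construction of Section \ref{sec:KR-theo}. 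By formula \eqref{eq:def_k_KR} and the additivity of the FKMM invariant one gets
$$
\kappa\big([\n{S}^2\times\C^{4},H,H_0]\big)\;=\;\kappa(\bb{E}_{q})-\kappa(\bb{E}_{q_0})\;=\;\kappa(\bb{E}_{q})\;\neq\;0\;,
$$
since $\bb{E}_{q_0}$ is the trivial \virg{Quaternionic} bundle and therefore has vanishing FKMM invariant. Hence $\kappa$ is surjective, and being a homomorphism $\Z_2\to\Z_2$ it is an isomorphism.

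The only delicate point — which I expect to be the main, if mild, obstacle — is the bookkeeping of Section \ref{sec:KR-theo}: one must check that the $H$ above really is an invertible Hermitian map compatible with the $Cl_{0,1}$-twisted action (which is precisely Proposition \ref{prop:st_rep} read backwards), and that the class of $(\n{S}^2\times\C^{4},H,H_0)$ in $KR^{-3}(\n{S}^2,\iota)$ does not depend on the choice of the constant reference $H_0$ (this follows from the well-definedness and additivity of $\kappa$ established in Theorem \ref{thm:homomorphism_from_KR}). Once these are in place the argument is complete.
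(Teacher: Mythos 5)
Your proposal is correct and follows essentially the same route as the paper: reduce to surjectivity of a homomorphism $\Z_2\to\Z_2$, realize the non-trivial FKMM class by a sewing matrix $q:\n{S}^2\to\n{U}(2)$ on the product bundle, build the off-diagonal Hamiltonian from $q$, pair it with the constant reference $H_0$, and apply \eqref{eq:def_k_KR}. The only cosmetic difference is that the paper points to an explicit formula for the non-trivial sewing matrix (eq.\ (4.9) of \cite{denittis-gomi-14-gen}) where you invoke its existence abstractly via the FKMM isomorphism and \cite[Proposition 4.1]{denittis-gomi-14-gen}.
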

\proof
In view of Theorem \ref{thm:homomorphism_from_KR} and the results showed above we are considering a well-defined homomorphism of the form $\kappa:\Z_2\to\Z_2$. Therefore, to prove the claim  it is enough to show that $\kappa$ is surjective. Let us use the additive representation of $\Z_2\simeq\Z/2\Z$.
The two distinct classes in  ${\rm Vec}^{2}_{\rr{Q}}({\n{S}}^2,\iota)\simeq\Z_2$ can be represented by the \virg{Quaternionic} vector bundles $\bb{E}_{q_0}$ and $\bb{E}_{q_1}$ obtained by
endowing the product bundles $\n{S}^2\times\C^2$ with two inequivalent \virg{Quaternionic} structures $\Theta_{q_0}$ and $\Theta_{q_1}$ associated to two sewing matrices $q_0,q_1:{\n{S}}^2\to\n{U}(2)$. 
Let $q_0$ be the constant map defined as in Theorem \ref{thm:invariant} (2). Then,  $\bb{E}_{q_0}$ turns out to be the trivial \virg{Quaternionic} vector bundle with FKMM invariant $\kappa(\bb{E}_{q_0})=0$. An explicit realization for the sewing matrix $q_1$ is described in \cite[eq. (4.9)]{denittis-gomi-14-gen}. In this case $\bb{E}_{q_1}$ provides a representative for the  non trivial \virg{Quaternionic} class and $\kappa(\bb{E}_{q_1})=1$.
Now, consider the 
 Hamiltonians $H_i:{\n{S}^2}\to \rr{Herm}(\C^{4})^\times$ given by 
$$
H_i(x)\;:=\;
\left(\begin{array}{cc}0 & q_i(x)^{-1}\\
q_i(x) & 0\end{array}\right)\;,\qquad x\in \n{S}^2\;,\;\;i=0,1\;.
$$
Then $[\n{S}^2\times\C^4, H_1,H_0]$ is a well-defined class in $KR^{-3} (\n{S}^2,\iota)$ and by equation \eqref{eq:def_k_KR} one gets $\kappa([\n{S}^2\times\C^4, H_1,H_0])=1$. This concludes the proof.
\qed

\medskip

Let $H:{\n{S}^2}\to \rr{Herm}(\C^{4n})^\times$ be a band insulator of class DIII over $(\n{S}^2,\iota)$ represented in a standard way. Denote with $H_0$ the trivial insulator constructed with the constant sewing matrix $q_0:\n{S}^2\to\n{U}(2n)$ described 
in item (2) of Theorem \ref{thm:invariant}. Then
$$
\nu^{\rm s}(H)\;:=\;\kappa\big([\n{S}^2\times\C^{4n}, H,H_0]\big)
$$
will be called the \emph{strong invariant} of 
$H$. This is nothing more than the FKMM invariant $\kappa(\bb{E}_{q_H})$ of the \virg{Quaternionic} vector bundle associated to $H$ with the sewing matrix $q_H$.
 In view of Proposition \ref{prop_strong} this invariant totally captures the topology of   two-dimensional band insulators  of class DIII in the Dirac case.
 
\subsection{Weak-strong invariant: the Bloch case} 
In order to study the classification of two-dimensional band insulator $H$ of class DIII in the Bloch case we will specialize  
Theorem \ref{thm:homomorphism_from_KR} for the involutive space $(\n{T}^2,\iota)$.

\medskip

By using \cite[eq. (B.7)]{denittis-gomi-14} one gets
\begin{equation}\label{eqKR3-T2}
\begin{aligned}
KR^{-3} (\n{T}^2,\iota)\;&\simeq\;{KR}^{-3}(\ast)\;\oplus\;\left(KR^{-2}(\ast)\right)^{\oplus 2}\;\oplus\;KR^{-1}(\ast)\\
&\simeq\;0\;\oplus\;\left(\Z_2\right)^{\oplus 2}\;\oplus\;\Z_2\\&\simeq\;\Z_2\oplus\;\Z_2\oplus\;\Z_2\;.
\end{aligned}
\end{equation}
On the other hand,  the product structure $(\n{T}^2,\iota):=(\n{S}^1,\iota)\times (\n{S}^1,\iota)$ 
and the use of the K\"{u}nneth formula, 
provide
\begin{equation}\label{eq:comput_cohom_T2}
\begin{aligned}
H^1(\n{T}^2,\Z)/H^1_{\Z_2}\big(\n{T}^2|(\n{T}^2)^\iota,\Z(1)\big)\;&\simeq\;\Big(H^1(\n{S}^1,\Z)/H^1_{\Z_2}\big(\n{S}^1|(\n{S}^1)^\iota,\Z(1)\big)\Big)^{\oplus 2}\\
 &\simeq\;\left(\Z_2\right)^{\oplus 2}\;=\;\Z_2\oplus\Z_2\;.
\end{aligned}
\end{equation}
Finally, from \cite[Theorem 1.2]{denittis-gomi-14-gen} one knows that 
$$
{\rm Vec}^{2n}_{\rr{Q}}({\n{T}}^2,\iota)\;\simeq\;{\rm Vec}^{2}_{\rr{Q}}({\n{T}}^2,\iota)\;\simeq\;H^2_{\Z_2}(\n{S}^1|(\n{S}^1)^\iota,\Z(1))\;\simeq\;\Z_2\;.
$$
\begin{lemma}\label{lemma_strong}
The homomorphism
$$
\kappa\;:\;KR^{-3} (\n{T}^2,\iota)\;\longrightarrow\;H^2_{\Z_2}\big(\n{T}^2|(\n{T}^2)^\iota,\Z(1)\big)
$$
described in Theorem \ref{thm:homomorphism_from_KR} is surjective.
\end{lemma}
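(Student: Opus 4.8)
The plan is to mirror the argument of Proposition~\ref{prop_strong} used in the Dirac case. By Theorem~\ref{thm:homomorphism_from_KR} the map
$$
\kappa\;:\;KR^{-3}(\n{T}^2,\iota)\;\longrightarrow\;H^2_{\Z_2}\big(\n{T}^2|(\n{T}^2)^\iota,\Z(1)\big)
$$
is a well-defined group homomorphism, and by \eqref{eqKR3-T2} and the isomorphism recalled just before the statement it has the form $\kappa:\Z_2\oplus\Z_2\oplus\Z_2\to\Z_2$. Hence it suffices to exhibit a single class in $KR^{-3}(\n{T}^2,\iota)$ whose image under $\kappa$ is the nonzero element of $H^2_{\Z_2}(\n{T}^2|(\n{T}^2)^\iota,\Z(1))$.

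First I would produce a sewing matrix $q_1:\n{T}^2\to\n{U}(2)$ whose associated \virg{Quaternionic} vector bundle $\bb{E}_{q_1}$ has nontrivial FKMM invariant $\kappa(\bb{E}_{q_1})$. This is possible because, by \cite[Theorem~1.2]{denittis-gomi-14-gen}, the FKMM invariant induces a bijection ${\rm Vec}^2_{\rr{Q}}(\n{T}^2,\iota)\simeq H^2_{\Z_2}(\n{T}^2|(\n{T}^2)^\iota,\Z(1))\simeq\Z_2$, and both isomorphism classes of rank-$2$ \virg{Quaternionic} bundles over $(\n{T}^2,\iota)$ are realised on the product bundle $\n{T}^2\times\C^2$, hence by a sewing matrix in the sense of Definition~\ref{def:sew_mat}. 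Concretely, one may take $q_1:=q_1^{\n{S}^2}\circ\pi_0$, the pull-back of the explicit sphere sewing matrix $q_1^{\n{S}^2}:\n{S}^2\to\n{U}(2)$ of \cite[eq.~(4.9)]{denittis-gomi-14-gen} along the equivariant map $\pi_0:(\n{T}^2,\iota)\to(\n{S}^2,\iota)$ of \cite[eq.~(4.14)]{denittis-gomi-14-gen}; equivariance of $\pi_0$ makes $q_1$ a sewing matrix, one has $\bb{E}_{q_1}=\pi_0^*\bb{E}_{q_1^{\n{S}^2}}$, and naturality of the FKMM invariant together with the fact (again from \cite{denittis-gomi-14-gen}) that $\pi_0$ carries the nontrivial \virg{Quaternionic} class over $\n{S}^2$ to the nontrivial one over $\n{T}^2$ gives $\kappa(\bb{E}_{q_1})\neq 0$.

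Next I would form, exactly as in the proof of Proposition~\ref{prop_strong}, the band insulator of class DIII
$$
H_1(x)\;:=\;\left(\begin{array}{cc}0 & q_1(x)^{-1}\\ q_1(x) & 0\end{array}\right)\;,\qquad x\in\n{T}^2\;,
$$
on $\n{T}^2\times\C^4$, together with the trivial insulator $H_0$ built from the constant sewing matrix $q_0=Q$ of Theorem~\ref{thm:invariant}~(2), for which $\bb{E}_{q_0}$ is the trivial \virg{Quaternionic} bundle and $\kappa(\bb{E}_{q_0})=0$. Since the twisted actions of these two presentations are automatically in the standard form, the triple $[\n{T}^2\times\C^4,H_1,H_0]$ is a well-defined class in $KR^{-3}(\n{T}^2,\iota)$ and formula \eqref{eq:def_k_KR} gives
$$
\kappa\big([\n{T}^2\times\C^4,H_1,H_0]\big)\;=\;\kappa(\bb{E}_{q_1})\;-\;\kappa(\bb{E}_{q_0})\;=\;\kappa(\bb{E}_{q_1})\;\neq\;0\;.
$$
Therefore $\kappa$ is surjective, which is the claim.

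The only genuinely non-routine step is the first one, namely producing a sewing matrix on $(\n{T}^2,\iota)$ with nontrivial FKMM invariant: one must know both that every rank-$2$ \virg{Quaternionic} bundle over the torus is realisable as a product bundle (so that the sewing-matrix language applies) and that the strong class is not killed by the pull-back $\pi_0^*$ on relative equivariant cohomology. Both points are covered by \cite{denittis-gomi-14-gen}; once they are granted, the remainder is a verbatim transcription of the Dirac argument of Proposition~\ref{prop_strong}.
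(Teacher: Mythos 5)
Your proposal is correct and follows essentially the same route as the paper: the paper likewise pulls back the nontrivial sphere data along $\pi_0:(\n{T}^2,\iota)\to(\n{S}^2,\iota)$ to produce a class $[\n{T}^2\times\C^4,H_1',H_0']$ with $\kappa=1$, relying (as you make explicit) on the fact from \cite{denittis-gomi-14-gen} that $\pi_0^*$ carries the nontrivial \virg{Quaternionic} class over $\n{S}^2$ to the nontrivial one over $\n{T}^2$. Your write-up is if anything slightly more careful than the paper's in isolating that naturality step as the one genuinely non-routine ingredient.
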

\proof
The proof follows exactly as in Proposition \ref{prop_strong} by constructing a class $[\n{T}^2\times\C^4, H'_1,H'_0]$ such that 
$\kappa([\n{T}^2\times\C^4, H'_1,H'_0])=1$. In this case the Hamiltonians $H'_1$ and $H'_0$ can be obtained form the 
the Hamiltonians $H_1$ and $H_0$ introduced  in the proof of Proposition \ref{prop_strong} via the pull-back under the equivariant map $\pi_0:(\n{T}^2,\iota)\to (\n{S}^2,\iota)$
described in \cite[eq. (4.14)]{denittis-gomi-14-gen}. More precisely one has that $H'_i(k):=H_i(\pi_0(k))$ for $i=0,1$.
This concludes the proof.
\qed

\medskip

Let $H:{\n{T}^2}\to \rr{Herm}(\C^{4n})^\times$ be a band insulator of class DIII over $(\n{T}^2,\iota)$ represented in a standard way. As in the Dirac case we can define the \emph{strong invariant}
\begin{equation}\label{eq:def_SI_T2}
\nu^{\rm s}(H)\;:=\;\kappa\big([\n{T}^2\times\C^{4n}, H,H_0']\big)
\end{equation}
where 
 $H_0$ is the trivial insulator constructed with the constant sewing matrix $q_0:\n{T}^2\to\n{U}(2n)$ described 
in item (2) of Theorem \ref{thm:invariant}. 
Since this invariant coincides with the FKMM invariant $\kappa(\bb{E}_{q_H})$ of the \virg{Quaternionic} vector bundle associated to $H$ with the sewing matrix $q_H$, one 
can use the  Fu–Kane–Mele index proved in \cite[Theorem 4.2]{denittis-gomi-14-gen} to compute $\nu$.
\begin{proposition}[Fu–Kane–Mele formula for the strong invariant]
The strong invariant $\nu^{\rm s}$ defined by 
\eqref{eq:def_SI_T2} can be computed by the formula 
\begin{equation}\label{eq:strong_tor}
\nu^{\rm s}(H)\;=\;\prod_{k\in(\n{T}^2)^\iota}\frac{{\rm Pf}\big[q_H(k)\big]}{{\rm det}\big[q_H(k)\big]^{\frac{1}{2}}}\;\in\;\{\pm 1\}
\end{equation}
where the branch $k\mapsto {\rm det}[q_H(k)]^{\frac{1}{2}}$ must be chosen continuously and that $\nu^{\rm s}$ must be understood as a map with values into the  multiplicative group $\Z_2\simeq\{\pm1\}$.
\end{proposition}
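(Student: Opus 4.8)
The plan is to reduce the statement to the Fu--Kane--Mele index formula for the FKMM invariant of a \virg{Quaternionic} vector bundle established in \cite[Theorem 4.2]{denittis-gomi-14-gen}. First I would recall that, by the very definition \eqref{eq:def_SI_T2} together with the identification noted right after Lemma \ref{lemma_strong}, the strong invariant coincides with the FKMM invariant of the \virg{Quaternionic} vector bundle associated with $H$, namely $\nu^{\rm s}(H) = \kappa(\bb{E}_{q_H})$, where $q_H : \n{T}^2 \to \n{U}(2n)$ is a sewing matrix obtained from a standard representation of $H$ (\cf Proposition \ref{prop:st_rep} and Proposition \ref{prop:4-dim}). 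By \cite[Proposition 4.1]{denittis-gomi-14-gen}, every \virg{Quaternionic} vector bundle over $(\n{T}^2,\iota)$ of even rank is carried by the product bundle $\n{T}^2 \times \C^{2n}$, so $\bb{E}_{q_H}$ is precisely of the type for which the Fu--Kane--Mele formula applies, its determinant \virg{Real} line bundle being classified by $\det[q_H] : \n{T}^2 \to \n{U}(1)$.

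Next I would invoke \cite[Theorem 4.2]{denittis-gomi-14-gen}, which, under the multiplicative identification $H^2_{\Z_2}(\n{T}^2|(\n{T}^2)^\iota,\Z(1)) \simeq \Z_2 \simeq \{\pm 1\}$, expresses $\kappa(\bb{E}_{q_H})$ as the product over the four fixed points $(\n{T}^2)^\iota = \{(0,0),(0,\pi),(\pi,0),(\pi,\pi)\}$ of the local factors ${\rm Pf}[q_H(k)]/{\rm det}[q_H(k)]^{1/2}$, a continuous branch of $k \mapsto {\rm det}[q_H(k)]^{1/2}$ being fixed. Comparing with \eqref{eq:strong_tor} yields the claim, provided one checks that both sides are insensitive to the residual freedom in the choice of $q_H$. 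This follows as in the proof of Theorem \ref{thm:homomorphism_from_KR}: a different standard representation replaces $q_H$ by $q'_H(x) = f(\iota(x))^{\mathtt{t}}\, q_H(x)\, f(x)$ for some continuous $f : \n{T}^2 \to \n{U}(2n)$, which induces an isomorphism $\bb{E}_{q'_H} \simeq \bb{E}_{q_H}$ and hence leaves $\kappa$ unchanged; on the other hand, at each fixed point ${\rm Pf}[q'_H(k)] = {\rm det}[f(k)]\,{\rm Pf}[q_H(k)]$ by the identity ${\rm Pf}[B^{\mathtt{t}} A B] = {\rm det}[B]\,{\rm Pf}[A]$, while ${\rm det}[q'_H(k)] = {\rm det}[f(k)]^2\,{\rm det}[q_H(k)]$, so the coherent choice ${\rm det}[q'_H(\cdot)]^{1/2} := {\rm det}[f(\cdot)]\,{\rm det}[q_H(\cdot)]^{1/2}$ keeps every local factor unchanged, and any other continuous branch differs by a global sign whose contribution to a product over four fixed points is a square, hence $1$.

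The principal difficulty is bookkeeping rather than conceptual: reconciling the normalization conventions for the sewing matrix and for the FKMM invariant used in \cite[Theorem 4.2]{denittis-gomi-14-gen} with those fixed in Section \ref{sec:Z2_inv}, and pinning down the meaning of \virg{chosen continuously}. On the latter point one can note that, $q_H$ having even rank, the map $\det[q_H] : \n{T}^2 \to \n{U}(1)$ is invariant, hence has trivial class in $H^1(\n{T}^2,\Z)$ --- restricting to each coordinate circle through the origin, which is an invariant $\n{S}^1$ on which $H^1_{\Z_2}(\n{S}^1,\Z) = 0$ forces degree zero --- so it admits a global continuous square root and each factor ${\rm Pf}[q_H(k)]/{\rm det}[q_H(k)]^{1/2}$, as well as the whole product, is a well-defined sign. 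Should one prefer an argument independent of \cite[Theorem 4.2]{denittis-gomi-14-gen}, a self-contained alternative would be to restrict $q_H$ to the $\Z_2$-invariant $1$-skeleton of $(\n{T}^2,\iota)$ --- the invariant circles through the fixed points --- apply the Teo--Kane formula (Theorem \ref{thm:invariant_by_Pfaffian}) along them, and assemble the contributions through the splitting \eqref{eq:comput_cohom_T2}; but invoking the cited theorem, which already records the identity in the required form, is the most economical route.
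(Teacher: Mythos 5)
Your overall route is the same as the paper's: reduce the claim to the Fu--Kane--Mele index theorem \cite[Theorem 4.2]{denittis-gomi-14-gen} applied to $\bb{E}_{q_H}$. The gap lies in what you attribute to that theorem. As the paper invokes it, the cited result does \emph{not} already produce the quotient by a continuous branch of ${\rm det}[q_H]^{1/2}$; it yields $\nu^{\rm s}(H)=\prod_{k\in(\n{T}^2)^\iota}{\rm Pf}[q_H(k)]/p_H(k)$, where $p_H:\n{T}^2\to\n{U}(1)$ is a trivialization of the determinant \virg{Real} line bundle, i.e.\ any continuous map satisfying $\det[q_H(k)]=p_H(\iota(k))\,p_H(k)$ for all $k$. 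The entire content of the proposition is the passage from such a $p_H$ to a genuine continuous square root of $\det[q_H]$, so by asserting that the cited theorem ``already records the identity in the required form'' you assume exactly what has to be proved.

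That said, you have most of the missing ingredient: your degree argument (restricting to the two coordinate circles, where $H^1_{\Z_2}(\n{S}^1,\Z)=0$ forces degree zero) shows that $[\det[q_H]]=0$ in $H^1(\n{T}^2,\Z)$, hence a global continuous square root $s$ of $\det[q_H]$ exists --- this is precisely the paper's computation $n_1=n_2=0$ in its explicit parametrization of maps $\n{T}^2\to\n{U}(1)$. What remains to be checked, and is absent from your write-up, is that such an $s$ is an \emph{admissible} choice of $p_H$, i.e.\ that $s(\iota(k))\,s(k)=\det[q_H(k)]$. This holds because $s$ is automatically $\iota$-invariant: $(s\circ\iota)/s$ is a continuous $\{\pm1\}$-valued function on the connected space $\n{T}^2$ which equals $1$ at any fixed point, whence $s\circ\iota=s$ and $s(\iota(k))s(k)=s(k)^2=\det[q_H(k)]$. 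Adding that line closes the argument. Two minor remarks: the invariance check under $q'_H(x)=f(\iota(x))^{\mathtt{t}}q_H(x)f(x)$ is not needed here, being settled once and for all in Theorem \ref{thm:homomorphism_from_KR}; and your proposed branch $\det[f(\cdot)]\,\det[q_H(\cdot)]^{1/2}$ squares to $\det[f(x)]^2\det[q_H(x)]$, which agrees with $\det[q'_H(x)]=\det[f(\iota(x))]\det[f(x)]\det[q_H(x)]$ only at fixed points --- it is the global-sign observation you make afterwards that actually does the work there.
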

\proof
According to  \cite[Theorem 4.2]{denittis-gomi-14-gen} one has that
\begin{equation}\label{eq:strong_tor_BIS}
\nu^{\rm s}(H)\;=\;\prod_{k\in(\n{T}^2)^\iota}\frac{{\rm Pf}\big[q_H(k)\big]}{p_H(k)}
\end{equation}
where $p_H:\n{T}^2\to\n{U}(1)$ meets $\det [q_H(k)] = p_H(\iota(k)) p_H(k)$ for all $k \in \n{T}^2$. To obtain \eqref{eq:strong_tor} form \eqref{eq:strong_tor_BIS} we need to show that  one can replace $p_H$ with a map $p_H':\n{T}^2\to\n{U}(1)$
such that $\det [q_H(k)] = p_H'(k)^2$ for all $k \in \n{T}^2$.
The following fact follows by observing that every continuous map 
$s:\n{T}^2\to\n{U}(1)$ such that $s(k)^2=\det [q_H(k)]$ automatically satisfies $s(k)=s(\iota(k))$  for all $k \in \n{T}^2$.
The key to the proof is that every continuous map $f:\n{T}^2\to\n{U}(1)$ can be uniquely expressed as
$$
f(k_1, k_2)\; =\; \expo 
{\ii \left(
\alpha + \widetilde{f}(k_1, k_2) + n_1k_1 + n_2k_2
\right)}\;,\qquad (k_1,k_2)\in\n{T}^2
$$
where $n_1, n_2 \in \Z$, $\alpha \in \R/2\pi\Z$  and $\widetilde{f} : \n{T}^2 \to \R$ is a continuous function such that $\widetilde{f}(0, 0) = 0$. 
The integers $n_1$ and $n_2$ provide the winding numbers of $f$ around the two standard homology cycles of $\n{T}^2$ and $\alpha$ is determined by  $f(0,0)=\expo{ \ii\alpha}$. By applying the same representation to $\det [q_H]$ one gets
$$
\det [q_H(k_1, k_2)]\; =\; \expo 
{\ii \left(
\alpha_H + \widetilde{f}_H(k_1, k_2) + n_1k_1 + n_2k_2
\right)}\;,\qquad (k_1,k_2)\in\n{T}^2
$$
By the invariance $\det[ q_H(k_1, k_2)] = \det[ q_H(-k_1, -k_2)]$, 
and the continuity of $\widetilde{f}_H$ 
it follows that $n_1 = n_2 = 0$ and $\widetilde{f}_H(k_1, k_2) = \widetilde{f}_H(-k_1, -k_2)$. As a result, every continuous function whose square coincides with $\det[ q_H]$
must be of the form
$$
s(k_1, k_2)\; =\: \pm 
\expo{
\ii \left(
{\alpha}_s + \frac{1}{2}\tilde{f}(k_1, k_2)
\right)},
$$
with $\alpha_s = \frac{\alpha_H}{2}\in \R/2\pi\Z$. 
In particular one gets that $s(-k_1, -k_2) = s(k_1, k_2)$ as anticipated.
\qed

\medskip

It is  worth noting equation \eqref{eq:strong_tor} agrees with formula  \cite[eq. (3.69)]{chiu-teo-schnyder-ryu-16}.

\medskip

From Lemma \ref{lemma_strong} and \eqref{eqKR3-T2} one infers that 
$$
{\rm Ker}(\kappa)\;\simeq\;KR^{-3} (\n{T}^2,\iota)/\Z_2\;=\;\Z_2\oplus\Z_2\;.
$$
Theorem \ref{thm:homomorphism_from_KR} establishes the existence the of homomorphism
\begin{equation}\label{eq:nu_T2}
\nu\; :\; {\rm Ker}(\kappa) \; \longrightarrow\; H^1(\n{T}^2,\Z)/H^1_{\Z_2}\big(\n{T}^2|(\n{T}^2)^\iota,\Z(1)\big)\;
\end{equation}
which is of the form $\nu:\Z_2\oplus\Z_2\to\Z_2\oplus\Z_2$ in view of \eqref{eq:comput_cohom_T2}. This homomorphism turns out to be surjective, and therefore bijective.
\begin{lemma}\label{lemma_strong_alpha}
The homomorphism \eqref{eq:nu_T2} 
described in Theorem \ref{thm:homomorphism_from_KR} is an isomorphism.
\end{lemma}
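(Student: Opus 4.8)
The plan is to reduce the statement to the surjectivity of $\nu$. Both groups appearing in \eqref{eq:nu_T2} are isomorphic to $\Z_2\oplus\Z_2$ --- the source by Lemma \ref{lemma_strong} together with \eqref{eqKR3-T2}, the target by \eqref{eq:comput_cohom_T2} --- so a surjective homomorphism between them is injective by cardinality and hence an isomorphism. Therefore it suffices to produce two classes in ${\rm Ker}(\kappa)$ whose $\nu$-images are the two generators of $H^1(\n{T}^2,\Z)/H^1_{\Z_2}(\n{T}^2|(\n{T}^2)^\iota,\Z(1))$. Writing $\mathrm{pr}_1,\mathrm{pr}_2:(\n{T}^2,\iota)\to(\n{S}^1,\iota)$ for the two projections (each of which carries $(\n{T}^2)^\iota$ into $(\n{S}^1)^\iota$), the K\"unneth decomposition \eqref{eq:comput_cohom_T2} together with Lemma \ref{lemma:1D-range} identifies these generators with the classes of $\mathrm{pr}_1^*u$ and $\mathrm{pr}_2^*u$, where $u$ generates $H^1(\n{S}^1,\Z)\simeq\Z$ and a class is detected modulo the subgroup by the parity of its degree.

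The construction I would use parallels the proof of Lemma \ref{lemma_strong}. Take the sewing matrix $q_-:\n{S}^1\to\n{U}(2n)$ built in the proof of Theorem \ref{the:index}, for which $\nu_{q_-}=-1$, and set $q^{(j)}:=q_-\circ\mathrm{pr}_j:\n{T}^2\to\n{U}(2n)$ for $j=1,2$. Since $\mathrm{pr}_j\circ\iota=\iota\circ\mathrm{pr}_j$, the map $q^{(j)}$ again satisfies $q^{(j)}(\iota(k))=-q^{(j)}(k)^\mathtt{t}$, so it is a sewing matrix over $(\n{T}^2,\iota)$, and its associated \virg{Quaternionic} bundle is the pull-back $\bb{E}_{q^{(j)}}=\mathrm{pr}_j^*\bb{E}_{q_-}$. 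As every \virg{Quaternionic} bundle over $(\n{S}^1,\iota)$ is trivial \cite[Theorem 1.2]{denittis-gomi-14-gen}, the bundle $\bb{E}_{q_-}$ is isomorphic to the trivial one and hence so is $\bb{E}_{q^{(j)}}$; in particular its FKMM invariant vanishes, so $q^{(j)}$ obeys Assumption \ref{assumption}. Letting $H^{(j)}:\n{T}^2\to\rr{Herm}(\C^{4n})^\times$ be the Hamiltonian with sewing matrix $q^{(j)}$ and $H_0$ the trivial one with constant sewing matrix $q_0$, the triple $[\n{T}^2\times\C^{4n},H^{(j)},H_0]$ defines a class in ${\rm Ker}(\kappa)$ by \eqref{eq:def_k_KR}.

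Finally I would compute $\nu$ on these triples. By \eqref{map_nu} and Theorem \ref{thm:invariant} (2) one has $\nu([\n{T}^2\times\C^{4n},H^{(j)},H_0])=\nu_{q^{(j)}}-\nu_{q_0}=\nu_{q^{(j)}}$. Picking $p_-\in C(\n{S}^1,\n{U}(1))_{q_-}$ that represents $\nu_{q_-}$ --- so $\deg p_-$ is odd, since $\nu_{q_-}=(-1)^{\deg p_-}=-1$ by Proposition \ref{prop:first-from-1D} --- a direct check using $\mathrm{pr}_j\circ\iota=\iota\circ\mathrm{pr}_j$ and the fact that $\mathrm{pr}_j$ carries $(\n{T}^2)^\iota$ into $(\n{S}^1)^\iota$ shows that $p_-\circ\mathrm{pr}_j\in C(\n{T}^2,\n{U}(1))_{q^{(j)}}$. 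Hence $\nu_{q^{(j)}}$ is represented by $\mathrm{pr}_j^*[p_-]=(\deg p_-)\cdot\mathrm{pr}_j^*u$ in $H^1(\n{T}^2,\Z)$, whose class modulo $H^1_{\Z_2}(\n{T}^2|(\n{T}^2)^\iota,\Z(1))$ is exactly the $j$-th generator of $\Z_2\oplus\Z_2$ by the first paragraph. Thus both generators lie in the image of $\nu$, so $\nu$ is surjective and the lemma follows.

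The step I expect to be the main obstacle is not deep but requires care: one must verify that all the relevant structures --- the sewing-matrix condition, the triviality of the FKMM invariant, and membership in the torsor $C(\n{T}^2,\n{U}(1))_{q^{(j)}}$ --- are genuinely functorial under the projections $\mathrm{pr}_j$, and one must correctly trace the identification of the two generators of the target group with the parities of the winding numbers around the two standard cycles of $\n{T}^2$ through the K\"unneth isomorphism \eqref{eq:comput_cohom_T2} and the one-dimensional Lemma \ref{lemma:1D-range}. Since everything is obtained by pulling back from $(\n{S}^1,\iota)$, where the full picture is already established in Section \ref{sec:one-dim-case}, I do not anticipate any genuine difficulty beyond this bookkeeping.
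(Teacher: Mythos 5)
Your proof is correct and follows essentially the same route as the paper: the paper derives this lemma from Proposition \ref{prop_strong-weak}, whose proof uses exactly the same two generators $q_-\circ\pi_j$ obtained by pulling back the nontrivial one-dimensional sewing matrix along the two torus projections. The only cosmetic difference is that you evaluate $\nu$ directly in the torus quotient group via the explicit representative $p_-\circ\mathrm{pr}_j$ and the K\"{u}nneth identification \eqref{eq:comput_cohom_T2}, whereas the paper restricts along the inclusions $\phi_j$ and reads off the values from the one-dimensional invariant $\nu^{\rm w}$; both computations yield the same generators of $\Z_2\oplus\Z_2$.
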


\medskip

Lemma \ref{lemma_strong_alpha} follows as a   special case of Proposition \ref{prop_strong-weak} below.
For that, let us introduce the equivariant inclusions  
$$
\phi_j\;:\; (\n{S}^1,\iota)\;\lhook\joinrel\longrightarrow\; (\n{T}^2,\iota)\;,\qquad j=1,2
$$
defined by $\phi_1(k)=(k,1)$ and $\phi_2(k)=(1,k)$.
To every class $[\n{T}^2\times\C^{2n}, H_0, H_1]\in KR^{-3} (\n{T}^2,\iota)$ we can associate via pull-back two classes
$$
[\n{T}^2\times\C^{2n}, H_0, H_1]_j\;:=\;[\n{S}^1\times\C^{2n}, \phi_j^*H_0, \phi_j^*H_1]\;\in\; KR^{-3} (\n{S}^1,\iota)\;,\qquad j=1,2
$$
where $\phi_j^*H_i(k):=H_i(\phi_j(k))$ for all $k\in \n{S}^1$, $j=1,2$ and $i=0,1$. In view of Proposition \ref{prop:iso-one} one can introduce the homomorphism 
\begin{equation}
\nu^{\rm w}\;:\; KR^{-3}(\n{T}^2,\iota)\; \longrightarrow\;\Big(H^1(\n{S}^1,\Z)/H^1_{\Z_2}\big(\n{S}^1|(\n{S}^1)^\iota,\Z(1)\big)\Big)^{\oplus 2}
\end{equation}
given by
$$
\nu^{\rm w}\;:\;[\n{T}^2\times\C^{2n}, H_0, H_1]\;\mapsto\;\big(\nu([\n{T}^2\times\C^{2n}, H_0, H_1]_1),\nu([\n{T}^2\times\C^{2n}, H_0, H_1]_2)\big)\;.
$$
Notice that $\nu^{\rm w}$ is well-defined on the whole group $KR^{-3}(\n{T}^2,\iota)$ and not only on ${\rm Ker}(\kappa)$.
We are now in position to    provide  a complete classification of 
two-dimensional band insulators of class DIII  in the Bloch case.
\begin{proposition}[Two-dimensional weak-strong invariant]\label{prop_strong-weak}
The homomorphism $\nu^{\rm w}$ extends the homomorphism $\nu$. Moreover the composition $\underline{\nu}:=(\nu^{\rm w},\kappa)$ of the homomorphisms $\nu^{\rm w}$ and $\kappa$ 
provides the isomorphism
$$
\underline{\nu}\;:\;KR^{-3} (\n{T}^2,\iota)\;\longrightarrow\;H^1(\n{T}^2,\Z)/H^1_{\Z_2}\big(\n{T}^2|(\n{T}^2)^\iota,\Z(1)\big)\;\oplus\;H^2_{\Z_2}\big(\n{T}^2|(\n{T}^2)^\iota,\Z(1)\big)\;\;.
$$
\end{proposition}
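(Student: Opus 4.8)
The strategy has three parts: (i) record the naturality that forces $\nu^{\rm w}$ to restrict to $\nu$ on $\mathrm{Ker}(\kappa)$; (ii) exhibit explicit classes showing that $\underline{\nu}=(\nu^{\rm w},\kappa)$ is surjective; (iii) conclude by a counting argument, since the domain and codomain are both abstractly $\Z_2^{\oplus 3}$.

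For (i), recall from the proof of Theorem \ref{thm:homomorphism_from_KR} that on $\mathrm{Ker}(\kappa)$ one has $\nu([\n{T}^2\times\s{H},H_0,H_1])=\nu_{q_0}-\nu_{q_1}$, where $q_0,q_1$ are the sewing matrices of any standard presentation and $\nu_{q_i}$ is represented by some $p_i\in C(\n{T}^2,\n{U}(1))_{q_i}$. Since the equivariant inclusions $\phi_j:(\n{S}^1,\iota)\hookrightarrow(\n{T}^2,\iota)$ send fixed points to fixed points and the symmetries of a standard presentation are point-independent, the pull-back class $[\n{T}^2\times\s{H},H_0,H_1]_j=[\n{S}^1\times\s{H},\phi_j^*H_0,\phi_j^*H_1]$ is already standard, with sewing matrices $q_i\circ\phi_j$; moreover $p_i\circ\phi_j\in C(\n{S}^1,\n{U}(1))_{q_i\circ\phi_j}$ (the Pfaffian normalization survives because $\phi_j$ maps fixed points to fixed points), hence represents $\nu_{q_i\circ\phi_j}$. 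Therefore $\nu([\n{T}^2\times\s{H},H_0,H_1]_j)$ is the image of $\nu([\n{T}^2\times\s{H},H_0,H_1])$ under the restriction map $\phi_j^*$ on the relevant quotient groups. As the pair $(\phi_1^*,\phi_2^*)$ realizes the K\"{u}nneth isomorphism \eqref{eq:comput_cohom_T2} on these quotients, this identifies $\nu^{\rm w}|_{\mathrm{Ker}(\kappa)}$ with $\nu$; and since $\nu^{\rm w}$ (defined on all of $KR^{-3}(\n{T}^2,\iota)$ via Proposition \ref{prop:iso-one}) and $\kappa$ are homomorphisms, $\underline{\nu}$ is a homomorphism extending $(\nu,0)$ on $\mathrm{Ker}(\kappa)$.

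For (ii), combining \eqref{eqKR3-T2}, \eqref{eq:comput_cohom_T2} and $H^2_{\Z_2}(\n{T}^2|(\n{T}^2)^\iota,\Z(1))\simeq\Z_2$, both the domain and the codomain of $\underline{\nu}$ are abstractly $\Z_2^{\oplus 3}$, so it suffices to prove $\underline{\nu}$ onto. Let $q_-:\n{S}^1\to\n{U}(2n)$ be the sewing matrix with $\nu_{q_-}=-1$ constructed in the proof of Theorem \ref{the:index}, let $\mathrm{pr}_j:\n{T}^2\to\n{S}^1$ be the $j$-th projection, and set $\xi_j:=[\n{T}^2\times\C^{4n},H^{(j)},H_0]$, where $H^{(j)}$ is the standard Hamiltonian with sewing matrix $q_-\circ\mathrm{pr}_j$ and $H_0$ the trivial one with constant sewing matrix $Q$. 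The Quaternionic bundle $\bb{E}_{q_-\circ\mathrm{pr}_j}=\mathrm{pr}_j^*\bb{E}_{q_-}$ is a pull-back of a bundle over $(\n{S}^1,\iota)$, which is trivial, hence is itself trivial; thus $\kappa(\xi_j)=\kappa(\bb{E}_{q_-\circ\mathrm{pr}_j})-\kappa(\bb{E}_{q_0})=0$, i.e. $\xi_j\in\mathrm{Ker}(\kappa)$. Since $\mathrm{pr}_i\circ\phi_j$ is the identity for $i=j$ and a constant map for $i\ne j$, and since a constant sewing matrix $A$ has trivial invariant ($p\equiv{\rm Pf}[A]$ lies in $C(\n{S}^1,\n{U}(1))_A$ and has degree $0$, so $\nu_A=+1$ by Proposition \ref{prop:first-from-1D}), one obtains $\nu^{\rm w}(\xi_1)=(\nu_{q_-},0)=(1,0)$ and $\nu^{\rm w}(\xi_2)=(0,1)$ in $\Z_2^{\oplus 2}$, hence $\underline{\nu}(\xi_1)=((1,0),0)$ and $\underline{\nu}(\xi_2)=((0,1),0)$. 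Finally, by Lemma \ref{lemma_strong} there is $\eta\in KR^{-3}(\n{T}^2,\iota)$ with $\kappa(\eta)$ the generator of $H^2_{\Z_2}(\n{T}^2|(\n{T}^2)^\iota,\Z(1))$, so $\underline{\nu}(\eta)$ has nonzero last component. The three elements $\underline{\nu}(\xi_1),\underline{\nu}(\xi_2),\underline{\nu}(\eta)$ then generate $\Z_2^{\oplus 3}$, so $\underline{\nu}$ is surjective; being a surjection between finite groups of the same order it is an isomorphism. Restricting it to $\mathrm{Ker}(\kappa)$ yields Lemma \ref{lemma_strong_alpha} as a by-product.

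The routine parts are the two verifications that $p_i\circ\phi_j\in C(\n{S}^1,\n{U}(1))_{q_i\circ\phi_j}$ and the explicit evaluation of $\nu^{\rm w}(\xi_j)$; the step demanding the most care is in (i), namely checking that the K\"{u}nneth isomorphism \eqref{eq:comput_cohom_T2} is literally the pair of restriction maps $(\phi_1^*,\phi_2^*)$ and that passing to the quotient by $H^1_{\Z_2}(\,\cdot\mid\cdot\,,\Z(1))$ commutes with these restrictions. Once that compatibility is established, both ``$\nu^{\rm w}$ extends $\nu$'' and the surjectivity of $\underline{\nu}$ follow immediately.
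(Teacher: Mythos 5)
Your proof is correct and its core — exhibiting the classes built from $q_-\circ\mathrm{pr}_j$ and a class with nontrivial $\kappa$, checking surjectivity of $\underline{\nu}$ onto $\Z_2^{\oplus 3}$, and concluding by equality of orders — is the same strategy as the paper's, which uses the identical generators $q_{w_j}=q_-\circ\pi_j$, $q_s=q_1\circ\pi_0$ and $q_0$. Two genuine differences are worth noting. First, for the strong generator you only use that $\kappa(\eta)\neq 0$ (via Lemma \ref{lemma_strong}) and observe that the last coordinate alone suffices for generation; the paper instead computes $\nu_{q_s\circ\phi_i}=+1$ explicitly to get the full vector $(+1,+1,-1)$. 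Your shortcut is valid and slightly cleaner. Second, for the claim that $\nu^{\rm w}$ extends $\nu$, the paper deduces it a posteriori from the identification of generators of $\mathrm{Ker}(\kappa)$, whereas you prove it directly by a naturality argument: the pull-back along $\phi_j$ sends a standard presentation to a standard presentation, sends $C(\n{T}^2,\n{U}(1))_{q}$ into $C(\n{S}^1,\n{U}(1))_{q\circ\phi_j}$ (because $\phi_j$ is equivariant and preserves fixed points), and the pair $(\phi_1^*,\phi_2^*)$ realizes the K\"{u}nneth splitting \eqref{eq:comput_cohom_T2}. This last compatibility is the one point you flag but do not fully verify; it does hold, since a class in $H^1(\n{T}^2,\Z)\simeq\Z^2$ is detected by the two winding numbers $\deg(p\circ\phi_j)$ and the image of $H^1_{\Z_2}(\n{T}^2|(\n{T}^2)^\iota,\Z(1))$ consists exactly of the classes with both windings even, mirroring Lemma \ref{lemma:1D-range}. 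With that supplied, your treatment of the "extends" claim is actually more explicit than the paper's.
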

\proof
The proof is based on the direct calculations. For that let us  construct four sewing matrices $q_{w_1}, q_{w_2}, q_s, q_0$ on  $\n{T}^2$ with values on $\n{U}(2)$ as described below. Let $\pi_1$ and $\pi_2$ be the  
two canonical projections from $\n{T}^2\simeq \n{S}^1\times \n{S}^1$ to $\n{S}^1$.  Then, $q_{w_j}:=q_-\circ \pi_j$, for $j=1,2$, where $q_-:\n{S}^1\to \n{U}(2)$ is the sewing matrix 
introduced in the proof of  Theorem \ref{the:index}.
 The sewing matrix $q_s:=q_1\circ \pi_0$
  is the composition of the equivariant map $\pi_0 : \n{T}^2 \to \n{S}^2$ in the proof of Lemma \ref{lemma_strong} 
  and the sewing matrix $q_1 :\n{S}^2 \to \n{U}(2)$ such that $\kappa(\bb{E}_{q_1}) = 1$ in Proposition \ref{prop_strong}. 
Finally,  $q_0$ is the constant map as in Theorem \ref{thm:invariant} (2). These sewing matrices produce the
 following FKMM invariants:
\begin{align*}
\kappa(\bb{E}_{q_{w_1}}) &\;=\;+ 1\;, &
\kappa(\bb{E}_{q_{w_2}}) &\;=\;+1\;, &
\kappa(\bb{E}_{q_0}) &\;=\;+ 1\;,\; &
\kappa(\bb{E}_{q_s}) &\;=\; -1\;. 
\end{align*}
In addition, one has that 
\begin{align*}
\nu_{q_{w_1} \circ \phi_1} &\;=\; -1\;, &
\nu_{q_{w_2} \circ \phi_1} &\;=\;+ 1\;, &
\nu_{q_0 \circ \phi_1} &\;=\; +1\;, &
\nu_{q_s \circ \phi_1} &\;=\;+ 1\;, \\
\nu_{q_{w_1} \circ \phi_2} &\;=\;+ 1\;, &
\nu_{q_{w_2} \circ \phi_2} &\;=\; -1\;, &
\nu_{q_0 \circ \phi_2} &\;=\; +1\;, &
\nu_{q_s \circ \phi_2} &\;=\;+ 1\;.
\end{align*}
In the above equations the computations for the cases $q_{w_j} \circ \phi_i=q_-\circ(\pi_j\circ\phi_i)$ follows just by observing that  $\pi_j\circ\phi_i$ is the identity map on $\n{S}^1$ when $i=j$, and the restriction to a fixed point if  $i\neq j$.
The cases  $q_0 \circ \phi_i$ with $i=1,2$, are also trivial
in view of the fact that $q_0$ is the constant map.
For the last two cases $q_s \circ \phi_i$, with $i = 1, 2$, we notice that the equivariant map $\pi_0 : \n{T}^2 \to \n{S}^2$ carries the three fixed points $(0, 0), (0, \pi), (\pi, 0)$ to $(1, 0, 0)$ and the fixed point $(\pi, \pi)$ to $(-1, 0, 0)$. We also notice that $q_1$ takes values in $S\n{U}(2) \subset \n{U}(2)$. Since $q_s \circ \phi_i : \n{S}^1 \to S\n{U}(2)$ carries the fixed points of $\n{S}^1$ to the same point of $S\n{U}(2)$, Theorem \ref{thm:invariant_by_Pfaffian} leads to the results above. 
Now, let us denote with $H_{w_1}, H_{w_2}, H_{q_s}$ and $H_0$  the band insulators of class DIII associated to $q_{w_1}, q_{w_2}, q_s$ and $q_0$, respectively. One gets that
$$
\begin{aligned}
\underline{\nu}\big([\n{T}^2 \times \C^2, H_{w_1}, H_0]\big)
\;&=\; (-1, +1, +1)\;, \\
\underline{\nu}\big([\n{T}^2 \times \C^2, H_{w_2}, H_0]\big)
\;&=\; (+1, -1, +1)\;, \\
\underline{\nu}\big([\n{T}^2 \times \C^2, H_{s}, H_0]\big)
\;&=\; (+1, +1, -1),
\end{aligned}
$$
so that $\underline{\nu}$ is a surjection onto $\Z_2 \oplus \Z_2 \oplus \Z_2$. Since $KR^{-3}(T^2) \simeq \Z_2 \oplus \Z_2 \oplus \Z_2$, the homomorphism $\underline{\nu}$ must be an isomorphism. As a result, one obtains that  $\mathrm{Ker}(\kappa)$ is generated by $[\n{T}^2 \times \C^2, H_{w_i}, H_0]$, with $i = 1, 2$. This allows us to see that $\nu^w$ extends $\nu$.\qed

\medskip

The novelty of Proposition \ref{prop_strong-weak} is that, in combination with Theorem \ref{thm:invariant_by_Pfaffian}, it provides a way to compute also the weak invariants
$$
\nu^{\rm w}(H)\;:=\;(\nu_{q_{\phi_1^*H}}, \nu_{q_{\phi_2^*H}})
$$ 
of the two-dimensional band insulator $H$ of class DIII over $(\n{T}^2,\iota)$.

\appendix

\section{Equivariant cohomology and homotopy}\label{sec:eq:cohom_homot}
Let $(X,\tau)$ be an involutive space 
such that $X$ has the structure of a (path connected) $\Z_2$-CW complex. A map $f:X\to\n{U}(1)$  is called \emph{invariant} if $f(x)=f(\tau(x))$ for every $x\in X$. Similarly, it is called \emph{equivariant} if $f(x)=\overline{f(\tau(x))}$ for every $x\in X$. Let $[X, \n{U}(1)]$ be the set of classes of homotopy equivalent functions, $[X, \n{U}(1)]_{\rm inv}$ the set of classes of homotopy equivalent invariant functions and $[X, \n{U}(1)]_{\Z_2}$ the set of classes of homotopy equivalent equivariant functions. These three sets admit a representation in terms of cohomology groups. In fact, it holds true that 
\begin{equation}\label{eq:iso_hom_cohom}
\begin{aligned}
&{[X, \n{U}(1)]}\;&\simeq\;&\;H^1(X,\Z)&\\
&[X, \n{U}(1)]_{\rm inv}\;&\simeq\;&\;H^1_{\Z_2}(X,\Z)&\\
&[X, \n{U}(1)]_{\Z_2}\;&\simeq\;&\;H^1_{\Z_2}(X,\Z(1))&
\end{aligned}
\end{equation}
where $H^\bullet(X,\Z)$ is the usual cohomology ring of $X$ with coefficients in $\Z$, and $H^\bullet_{\Z_2}(X,\Z)$ and $H^\bullet_{\Z_2}(X,\Z(1))$ are the equivariant cohomology groups of $X$ with system of coefficients in $\Z$ and $\Z(1)$, respectively. For a review of the equivariant cohomology we refer to \cite[Section 5]{denittis-gomi-14} and references therein. 
The first isomorphism in \eqref{eq:iso_hom_cohom} is a classical result in algebraic topology 
and follows by observing that $ \n{U}(1)=K(\Z,1)$ is a Eilenberg-Maclane space and in turn $[X, K(\Z,1)]$ is a  representing space for the cohomology $H^1(X,\Z)$ in view of \cite[Theorem 4.57]{hatcher-02}. The proof of the second isomorphism is provided in 
\cite[Lemma A.3]{gomi-thiang-21}, while the third isomorphism is proved in \cite[Proposition A.2]{gomi-15}.

\section{Strong equivalence of \virg{Quaternionic} 
structures}\label{sec:Z2_inv-equiv}

Let ${\rm Vec}^n_{\rr{Q}}({X},\tau)$ be the set of equivalence classes of \virg{Quaternionic} vector bundles of rank $n$ over the involutive space $({X},\tau)$. The subset 
${\rm Vec}^n_{\rr{Q},0}({X},\tau)\subseteq {\rm Vec}^n_{\rr{Q}}({X},\tau)$ consists of the equivalence classes of 
\virg{Quaternionic} vector bundles whose underlying (complex) bundle structure is trivial. 
Let $\bb{E}:=(X\times \C^n, \Theta)$ and $\bb{E}':=(X\times \C^n, \Theta')$ be two \virg{Quaternionic} vector bundles defined on the trivial product bundle  $X\times \C^n$.  The two \virg{Quaternionic} structures 
$\Theta$ and $\Theta'$ are defined by two 
sewing matrices $q$ and $q'$ according to \eqref{eq:sew_01_02}.
The isomorphism condition $\bb{E}\simeq \bb{E}'$
 is equivalent to the existence of a continuous \emph{intertwining} map $\phi:X\to\n{U}(n)$ such that 
\begin{equation}\label{eq:inter_01}
q'(x)\;=\;\phi\big(\tau(x)\big)^\mathtt{t}\; q(x)\; \phi(x)\;,\qquad \forall\; x\in X\;. \qquad (\text{intertwining})
\end{equation}
In this case $\bb{E}$ and $\bb{E}'$ are representatives of the same equivalence class 
 $[\bb{E}]\in {\rm Vec}^n_{\rr{Q},0}({X},\tau)$.
 \begin{definition}[Strong isomorphism]\label{def:Str_iso}
 We will say that the isomorphism between $\bb{E}$ and $\bb{E}'$ is \emph{strong} if the {intertwining} map \eqref{eq:inter_01} meets the extra condition
\begin{equation}\label{eq:inter_02}
\phi(x)\;=\;\phi\big(\tau(x)\big)\;,\qquad \forall\; x\in X\;. \qquad (\tau\text{-invariance})
\end{equation}
In this case we will write $\bb{E}\stackrel{\mathtt{s}}{\simeq} \bb{E}'$.
 \end{definition}

\medskip 

The notion of strong isomorphism introduced with Definition \ref{def:Str_iso} provides a finer classification of ${\rm Vec}^n_{\rr{Q},0}({X},\tau)$. The following example should clarify the difference between the two notions of isomorphism.
\begin{example}
It is known that in the case of the involutive space $({\n{S}}^1,\iota)$ one has that ${\rm Vec}^2_{\rr{Q},0}({\n{S}}^1,\iota)={\rm Vec}^2_{\rr{Q}}({\n{S}}^1,\iota)=0$
\cite[Proposition 2.6]{denittis-gomi-14-gen}
meaning that every rank two \virg{Quaternionic} vector bundle over 
$({\n{S}}^1,\iota)$ is isomorphic to the  trivial \virg{Quaternionic} vector bundle $\bb{E}_0:=({\n{S}}^1\times \C^2, \Theta_{0})$ with \emph{trivial} structure $\Theta_{0}$ induced by the (constant) sewing matrix
$$
q_0(k)\;\equiv\;q_0\;:=\;\left(\begin{array}{cc}0 & -1 \\1 & 0\end{array}\right)\;,\qquad \forall\; k\in{\n{S}}^1\;.
$$
Consider the \virg{Quaternionic} vector bundle $\bb{E}_0':=({\n{S}}^1\times \C^2, \Theta_{0}')$ with \virg{Quaternionic} structure $\Theta_{0}'$ induced by the (again constant) sewing matrix
$$
q'_0(k)\;\equiv\;-q_0\;,\qquad \forall\; k\in{\n{S}}^1\;.
$$
The (constant) intertwining map
$$
\phi_0(k)\;\equiv\;\phi_0\;:=\;\left(\begin{array}{cc}1 & 0 \\0 & -1\end{array}\right)\;,\qquad \forall\; k\in{\n{S}}^1\;.
$$
provides the isomorphism between $\bb{E}_0$ and $\bb{E}_0'$. Moreover, since the map $\phi_0$ is constant, hence $\tau$-invariant, one has that
 the \virg{Quaternionic} vector bundles are \emph{strongly} equivalent, \ie $\bb{E}_0\stackrel{\mathtt{s}}{\simeq} \bb{E}'_0$.
 On the other hand, one can consider the \virg{Quaternionic} vector bundle $\bb{E}_1:=({\n{S}}^1\times \C^2, \Theta_1)$ with \virg{Quaternionic} structure $\Theta_{1}$ induced by the (non constant) sewing matrix
$$
q_1(k)\;:=\;\left(\begin{array}{cc}\sin(k) & -\cos(k) \\ \cos(k) & \sin(k)\end{array}\right)\;,\qquad \forall\; k\in{\n{S}}^1\;.
$$
In this case the isomorphism $\bb{E}_1\simeq \bb{E}_0$ is induced by 
 the intertwining map
$$
\phi_1(k)\;:=\;\expo{\ii\frac{k}{2}}\left(\begin{array}{cc}\sin\left(\frac{k}{2}\right) & - \cos\left(\frac{k}{2}\right) \\\cos\left(\frac{k}{2}\right) & \sin\left(\frac{k}{2}\right) \end{array}\right)\;,\qquad \forall\; k\in{\n{S}}^1\;.
$$
Since $\phi_1(-k)\neq \phi_1(k)$ it follows that the  \virg{Quaternionic} structures $\Theta_{0}$ and $\Theta_{1}$ are only equivalent but \emph{not} strongly equivalent, \ie $\bb{E}_1\stackrel{\mathtt{s}}{\not\simeq} \bb{E}_0$.
\hfill $\blacktriangleleft$
\end{example}

\medskip

In the next result we will use the definitions of unitary equivalence introduced in Definition \ref{def:int_01}. Let us observe that if $V$ is a unitary equivalence between 
two band insulators of class DIII then the equation 
$$
\chi\;V(x)\;=\;V(x)\;\chi
$$
holds true, independently of the nature of $V$.

\begin{proposition}\label{prop:iso_DIII-QV}
Unitarily equivalent band insulators of class DIII define associated \virg{Quaternionic} vector bundles which  are isomorphic. When  the unitary equivalence is strong then the  associated \virg{Quaternionic} vector bundles are strongly isomorphic.
\end{proposition}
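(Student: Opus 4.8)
The plan is to put everything in the standard representation of Proposition \ref{prop:st_rep} and to recognise the unitary equivalence $V$ as an intertwining map in the sense of \eqref{eq:inter_01}. Since $H$ and $H'$ are subjected to the same symmetries $T,C$, one can fix a \emph{single} unitary isomorphism $\s{H}\simeq\C^{2m}$ which normalises $T,C$ (hence $\chi=TC$) to the form \eqref{eq:sew_01}--\eqref{eq:sew_02}; in this basis both $H$ and $H'$ are automatically off-diagonal and their flattened operators $Q_H,Q_{H'}$ are encoded by sewing matrices $q_H,q_{H'}:X\to\n{U}(m)$ as in \eqref{eq:sew_01_01}, which in turn define the associated \virg{Quaternionic} vector bundles $\bb{E}_{q_H}$ and $\bb{E}_{q_{H'}}$ on $X\times\C^m$.

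First I would determine the shape of $V$. As observed right before the statement, combining $CV(x)=V(\tau(x))C$ and $TV(x)=V(\tau(x))T$ with $\chi=TC$ and $\tau^2=\mathrm{Id}_X$ gives $\chi V(x)=V(x)\chi$ for all $x$. Hence $V(x)$ is block-diagonal for the eigenspace decomposition of $\chi$, say $V(x)=\mathrm{diag}(V_+(x),V_-(x))$ with $V_\pm:X\to\n{U}(m)$; inserting this into $TV(x)=V(\tau(x))T$ and using the standard form of $T$ (the computation behind Lemma \ref{lemma:diag_form}) forces $V_-(x)=\overline{V_+(\tau(x))}$, so that, writing $\phi:=V_+$,
$$
V(x)\;=\;\begin{pmatrix}\phi(x) & 0\\ 0 & \overline{\phi(\tau(x))}\end{pmatrix}\;,\qquad \phi:X\longrightarrow\n{U}(m)\;.
$$
Next I would transport the identity $H'(x)=V(x)H(x)V(x)^*$ to the sewing matrices. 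It implies $|H'(x)|=V(x)|H(x)|V(x)^*$ (conjugation by the unitary $V(x)$ commutes with the functional calculus), hence $Q_{H'}(x)=V(x)Q_H(x)V(x)^*$. Writing both sides in block form according to \eqref{eq:sew_01_01} and reading off the lower-left entry yields $q_{H'}(x)=\overline{\phi(\tau(x))}\,q_H(x)\,\phi(x)^*$. Setting $\psi(x):=\phi(x)^*$ and using $\overline{U}=(U^*)^{\mathtt t}$ for unitary $U$, this becomes $q_{H'}(x)=\psi(\tau(x))^{\mathtt t}\,q_H(x)\,\psi(x)$, which is exactly the intertwining condition \eqref{eq:inter_01}. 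Thus $\psi$ intertwines $\Theta_H$ and $\Theta_{H'}$ and provides the isomorphism $\bb{E}_{q_H}\simeq\bb{E}_{q_{H'}}$. Finally, if the equivalence is strong, $V(\tau(x))=V(x)$ forces $\phi(\tau(x))=\phi(x)$ by comparing diagonal blocks, whence $\psi(\tau(x))=\phi(\tau(x))^*=\phi(x)^*=\psi(x)$: the intertwining map is $\tau$-invariant in the sense of \eqref{eq:inter_02}, and Definition \ref{def:Str_iso} gives $\bb{E}_{q_H}\stackrel{\mathtt s}{\simeq}\bb{E}_{q_{H'}}$.

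There is no deep obstacle here: the argument is essentially bookkeeping. The one point requiring care is the consistency of the choice of standard form --- the sewing matrix $q_H$ is defined only once a unitary isomorphism $\s{H}\simeq\C^{2m}$ normalising $T,C$ has been fixed, and a different choice alters $q_H$ by a further intertwining map (exactly as in the proof of Theorem \ref{thm:homomorphism_from_KR}), so the associated \virg{Quaternionic} vector bundle is well defined only up to isomorphism, which is the level at which the statement is phrased. The other mildly delicate step is the transpose/conjugate bookkeeping used to rewrite $\overline{\phi(\tau(x))}\,q_H(x)\,\phi(x)^*$ in the canonical intertwining form $\psi(\tau(x))^{\mathtt t}\,q_H(x)\,\psi(x)$.
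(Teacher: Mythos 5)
Your proposal is correct and follows essentially the same route as the paper's own proof: pass to the standard representation, use the commutation with $\chi$ and the equivariance with $T$ to show $V(x)=\mathrm{diag}\bigl(\phi(x),\overline{\phi(\tau(x))}\bigr)$ (the content of Lemma \ref{lemma:diag_form}), transport $H'=VHV^*$ to $Q_{H'}=VQ_HV^*$ by functional calculus, and read off $q_{H'}(x)=\overline{\phi(\tau(x))}\,q_H(x)\,\phi(x)^*=[\phi(\tau(x))^*]^{\mathtt t}\,q_H(x)\,\phi(x)^*$, which is the intertwining condition \eqref{eq:inter_01} for $\psi=\phi^*$, with $\tau$-invariance of $\psi$ in the strong case. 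The only difference is that you spell out the derivation of the block form of $V$ and the conjugate/transpose bookkeeping a bit more explicitly than the paper does.
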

\proof
Let $H$ and $H'$ be two unitarily equivalent band insulators of class DIII over  $(X,\tau)$. The functional calculus provides
$Q_{H'}(x)=V(x)Q_{H}(x)V(x)^*$ where the operators $Q_{H'}$ and $Q_{H}$ are defined according to \eqref{eq:sew_01_01-1}. According to Lemma \ref{lemma:diag_form}, in the standard representation described in Preposition \ref{prop:st_rep} the unitary  $V(x)$ takes the form
\begin{equation}
V(x)\;=\;
\left(\begin{array}{cc}\phi_V(x) & 0\\
0 & \overline{\phi_V\big(\tau(x)\big)}\end{array}\right)\;,
\end{equation}
for a suitable map $\phi_V:X\to\n{U}(n)$. Then, the unitary equivalence between $Q_{H}$ and $Q_{H'}$ translates into
$$
q_{H'}(x)\;=\;\overline{\phi_V\big(\tau(x)\big)}\;q_H(x)\;\phi_V(x)^*\;=\;\big[\phi_V\big(\tau(x)\big)^*\big]^\mathtt{t}\;q_H(x)\;\phi_V(x)^*
$$
where the function $q_H$ and $q_{H'}$ are defined according to the off-diagonal decomposition \eqref{eq:sew_01_01}. The latter equation shows that the \virg{Quaternionic} vector bundles associated to $H$ and $H'$ are isomorphic.
In the case of a strong unitary equivalence between $H$ and $H'$, the condition $V(\tau(x))=V(x)$ translates into $\phi_V(\tau(x))=\phi_V(x)$ and in turn in the strong isomorphism between the associated \virg{Quaternionic} vector bundles.
\qed


%
\end{document}